\tikzstyle{param}=[circle, minimum size = 0.7cm, thick, draw=black!100, fill = gray!10, node distance = 0.5cm]
\tikzstyle{data}=[rectangle, minimum size = 0.7cm, thick, draw =black!100, node distance = 0.5cm]
\tikzstyle{model}=[rectangle, minimum size = 1cm, thick, draw=black!100, node distance = 0.5cm]
\definecolor{orange}{rgb}{0.8,0.6,0.0}
\definecolor{purple}{rgb}{0.4,0,0.8}
\definecolor{brightgreen}{rgb}{1.0,0.8,1.0}
\definecolor{indigo}{rgb}{0.1,0,1.0}
\definecolor{Orange}{rgb}{1.0,1.0,0.5}
\theoremstyle{plain}
\theoremstyle{remark}
\newtheorem{proposition}{Proposition}
\def\H{\mathcal{H}}
\def\L{\mathcal{L}}
\def\P{\mathcal{P}}
\def\R{\mathbb{R}}
\def\Sm{D}
\def\Sb{S}
\def\B{\mathcal{B}}
\def\K{{\mathcal{K}}}
\def\e#1#2{\langle #1,#2\rangle} 
\def\I{{\mathcal{I}}}
\def\M{{\mathcal{M}}}
\def\H{{\mathcal{H}}}
\def\A{{\mathcal{A}}}
\def\T{{\mathcal{T}}}
\def\V{{\mathcal{V}}}
\def\X{{\mathcal{X}}}
\def\Z{{\mathcal{Z}}}
\def\elpd{\mbox{elpd}}
\def\eq{Eqn}
\def\eqs{Eqns}
\def\fig{Fig}
\def\figs{Figs}
\def\sec{Sec.}
\def\secs{Secs.}
\def\app{Supp}
\def\apps{Supp}
\newcommand\op{\stackMath\mathbin{\stackinset{c}{0ex}{c}{0ex}{{}_P}{\bigcirc}}}
\newcommand\os{\stackMath\mathbin{\stackinset{c}{0ex}{c}{0ex}{{}_S}{\bigcirc}}}
\begin{document}

\begin{frontmatter}
\title{Bayesian inference for partial orders from random linear extensions: power relations from 12th Century Royal Acta}
\runtitle{Inference for partial orders from random linear extensions}

\begin{aug}
\author[A]{\fnms{Geoff K. Nicholls}\ead[label=e1]{nicholls@stats.ox.ac.uk}}
\author[B]{\fnms{Jeong Eun Lee}\ead[label=e2]{kate.lee@auckland.ac.nz}}
\author[C]{\fnms{Nicholas Karn}\ead[label=e3]{N.E.Karn@soton.ac.uk}}
\author[D]{\fnms{David Johnson} \ead[label=e4]{david.johnson@spc.ox.ac.uk}}
\author[E]{\fnms{Rukuang Huang} \ead[label=e5]{rukuang.huang@jesus.ox.ac.uk}}
\and
\author[F]{\fnms{Alexis Muir-Watt} \ead[label=e6]{alexis.muirwatt@gmail.com}}
\address[A]{Department of Statistics, The University of Oxford,
UK.\printead[presep={,\ }]{e1}}

\address[B]{Department of Statistics, The University of Auckland,
New Zealand \printead[presep={,\ }]{e2}}
\address[C]{Faculty of Arts and Humanities, University of Southampton,
UK\printead[presep={,\ }]{e3}}
\address[D]{St Peter's college, The University of Oxford,
UK \printead[presep={,\ }]{e4}}
\address[E]{Department of Psychiatry, University of Oxford,
UK \printead[presep={,\ }]{e5}}
\address[F]{Private researcher, London, UK \printead[presep={,\ }]{e6}}
\end{aug}

\begin{abstract}
In the eleventh and twelfth centuries in England, Wales and Normandy, Royal Acta were legal documents in which witnesses were listed in order of social status. Any bishops present were listed as a group. For our purposes, each witness-list is an ordered permutation of bishop names with a known date or date-range. Changes over time in the order bishops are listed may reflect changes in their authority. Historians would like to detect and quantify these changes. 
There is no reason to assume that the underlying social order which constrains bishop-order within lists is a complete order. We therefore model the evolving social order as an evolving partial ordered set or {\it poset}. 

We construct a Hidden Markov Model for these data. The hidden state is an evolving poset (the evolving social hierarchy) and the emitted data are random total orders (dated lists) respecting the poset present at the time the order was observed. This generalises existing models for rank-order data such as Mallows and Plackett-Luce. We account for noise via a random ``queue-jumping'' process. Our latent-variable prior for the random process of posets is marginally consistent. A parameter controls poset depth and actor-covariates inform the position of actors in the hierarchy. 
We fit the model, estimate posets and find evidence for changes in status over time. We interpret our results in terms of court politics. Simpler models, based on Bucket Orders and vertex-series-parallel orders, are rejected. We compare our results with a time-series extension of the Plackett-Luce model. Our software is publicly available.
\end{abstract}

\begin{keyword}
\kwd{Partial order}
\kwd{Bayesian analysis}
\kwd{Hidden Markov Model}
\kwd{Royal Acta}
\kwd{Social Hierarchy}
\end{keyword}

\end{frontmatter}

\section{Introduction}

In rank-order data we are presented with a collection of lists ranking a common set of items from best to worst or first to last. A list might order items according to the preferences of an assessor, or the outcome of a multiplayer game, and may rank all elements in the set or just some subset presented to an assessor. 

In this paper we analyse a time series of $371$ lists recording the order in which $67$ different bishops are named as witnesses to legal documents called {\it Royal Acta}. The data are available online \citep{sharpe14}. These documents date from the eleventh and twelfth century (see \app~\ref{app:data-registration} for example lists). Just a small subset of the bishops are named in any given list 
but each bishop is present in many lists.
In a list the bishops' names were written down by a clerk in an order that is known to reflect status (henceforth status in the context of witnessing, which might differ from status in other social contexts). The status of a bishop was partly determined by seniority and diocese. The first canon of the Council of London in 1075 concerns ecclesiastical precedence: ``...each man shall sit according to his date of ordination, except for those who have more honourable seats by ancient custom or by the privileges of their churches'' \citep{clover79}. However, political standing may have contributed to status, and if it did, then the position of a bishop in the status-hierarchy would not be fully explained by time-in-office and diocese. \cite{russell37} writes ``The names of Eustace, bishop of Ely, and John, bishop of Norwich, frequently appear at the head of the list of bishops, before the names of the bishops
of London and Winchester and of bishops who were consecrated before them... [however] 
...they were very close friends of the king during most of his reign and were frequently at court''. The dioceses of London and Winchester were nominally above Ely and Norwich so here is a case where some political element seems to count for more than seniority or ``honourable seats''. 
This analysis was contested by \cite{haskins38} wrote ``the precedence of witnesses to private grants is too erratic to serve as an index to their respective station''. The question is still open. See \app~\ref{app:lit-rev-history} for a review of recent literature written by historians on this topic.





The set of precedence relations we reconstruct determine a \emph{social hierarchy}. This follows the definition given in \cite{vanWietmarschenNous22}, as the ``socially expected behavior'' for the clerk is to ``value'' by status. 
Precedence relations are transitive inequalities. This holds on social and historical grounds. Precedence is a social dominance relation and transitivity is fundamental to human and animal understanding of dominance \citep{gazes17,vasconcelos08}. \cite{shizuka12} find evidence for transitivity in 84 of 101 published animal dominance data tables, and it is the norm in models for social and organisational hierarchies \citep{friedell67,roberts90}, in early models for general preference relations \citep{bogart73b} and in all analyses which assume an underlying complete order. 

Gathering these observations, we represent the social hierarchy of bishops using a Directed Acyclic Graph (DAG).  Nodes correspond to bishops and a directed edge indicates precedence. The DAG is transitively closed, so it defines a {\it partially ordered set} or ``{\it poset}'' (see \sec~\ref{sec:partial-order-intro}, the terms are interchangeable). This is more general than assuming the social hierarchy is a total order, as would be the case if we fit a Mallows model. Some pairs of bishops may simply be unordered. The poset will be our parameter, the thing we want to estimate.
\fig~\ref{fig:intro-po-example} gives an example of a poset extracted from a larger set of relations reconstructed in \sec~\ref{sec:poHB2aRS6}. 
\begin{figure}
    \centering
    \includegraphics[width=1in, trim=1.2in 1in 0.8in 0.9in]{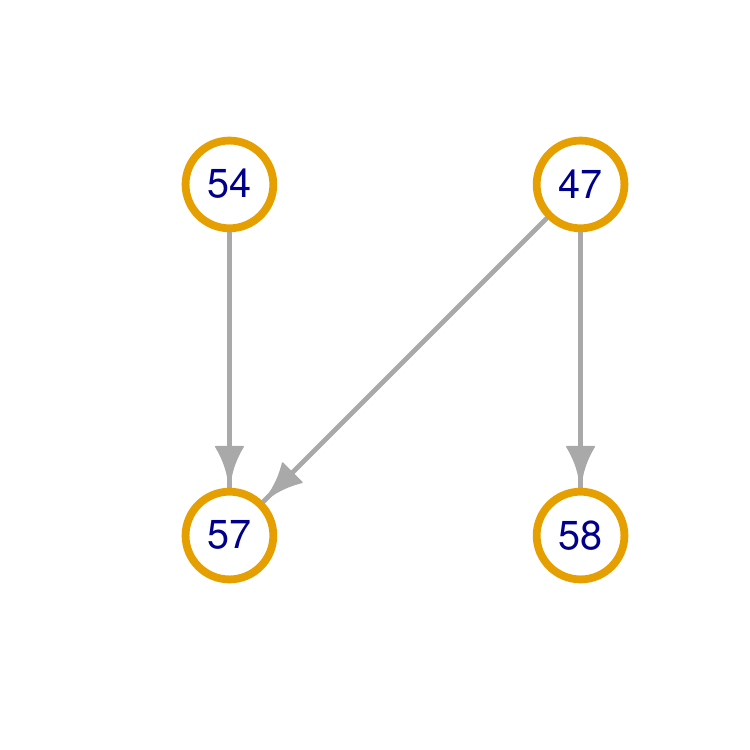}
    \caption{A poset represented as a transitively closed DAG. In \sec~\ref{sec:vsp-bucket-all} we find evidence that this suborder contains all precedence relations between (54) Algar, Bishop of Coutances, (57) Adelulf, Bishop of Carlisle, (47) Simon, Bishop of Worcester and (58) Richard, de Beaufeu, Bishop of Avranches in 1136 (numbering as \fig~\ref{fig:bishop-names}).}
    \label{fig:intro-po-example}
\end{figure}
The witness lists are our data. The poset constrains the data: bishop B shouldn't appear before bishop A in a witness list if B was ordered below A in the poset. In \sec~\ref{sec:queue-jumping-lkd-suborders} we define an observation model which allows the lists to contradict the order relations in the reconstructed poset. However, we view this as as noise. The underlying structure of the social hierarchy of bishops in the context of witnessing is assumed to be a poset.

The first statistical work to model social networks using posets is \cite{mogapi09} who writes ``the application of partial orders in Social Network data has not been studied in the past''. \cite{martin02} used DAGs (which need not be transitively closed) in a similar setting without statistical inference.  \cite{friedell67} used semilattices, a subclass of posets in which each pair of actors has a unique upper bound, to model organisational hierarchies. Their example, the "Cornerville S \& A Club" hierarchy was a poset which only became a semilattice on dropping selected actors. \cite{bogart73b} uses partial orders to express preference relations. They outline a model for evolving posets and remark ``it would be extremely worthwhile to develop a theory of statistical inference for partial orderings''. \cite{fishburn75} mention social dominance relations as a potential application in their work on posets. However, there seems to have been no statistical development of posets as models for social relations prior to \cite{mogapi09} and no other precedent for posets to be used in the way we do. Some of our methods (excluding time series, covariates, and the theory in \sec~\ref{sec:priors-properties}) were outlined in \citep{nicholls11} by two of the present authors and \cite{muirwatt15} gives a continuous-time analysis without covariates. However, despite careful design of the particle filtering Monte Carlo, this approach does not seem promising from modelling and computational perspectives. Recent non-parametric work by \cite{jiang24} shows how the dimension of the latent space parameterisation of posets may be estimated. 

\subsection{Alternative Approaches} \label{sec:intro-lit-alternative-CE-models}
Analysis of rank-order data (ie, our lists) often seeks a total order or actor-ranking which is ``central'' to the lists in the data, so that many lists are summarised by one complete actor ranking. Mallows models \citep{mallows1957non} have a location parameter which is a ranking of the actor labels.
In our setting this parameter would be interpreted as the unknown true bishop order.
A dispersion parameter controls the distribution of the distance between realised lists and this centre-order. In Generalised Mallows models \citep{fligner86} the dispersion parameter can vary across rank positions. There is freedom in the choice of distance measure between ranking lists. \cite{diaconis88} points to Kendall's-tau as having many good properties. \cite{vitelli18} adopt the foot-rule distance on modeling grounds and they and \cite{irurozki19} give methods and software \citep{irurozki16,vitelli20sofware} for computing the dispersion-dependent likelihood normalisation.
Bayesian methods which allow for variability in the quality of the assessors providing the rankings \citep{deng14} have been given. Mixture-model analysis \citep{meila2012dirichlet,tkachenko2016plackett,vitelli18,lu2014effective} can be used when there is a latent group structure in the population from which the lists are drawn, \cite{meila16} gives Bayesian methods for analysing non-parametric mixtures of generalised Mallows models and \cite{vitelli18} gives unsupervised clustering and treats incomplete lists. \cite{asfaw17} give time-series Mallows models. 

The actor skill-vector in Plackett-Luce models \citep{luce1959possible,plackett1975analysis}, which we discuss further in \apps~\ref{app:PL-appendix-section}, determines a rank-order parameter which plays a similar role to the Mallows location parameter. Dispersion in the observation space of lists is controlled by the scale of the skill scores. \cite{hunter04} gave an EM algorithm (see \cite{Caron12mcmc}) for parameter estimation and much useful background theory. Bayesian methods \cite{guiver2009bayesian}, mixture models \citep{Mollica14,Mollica20}, time-series models \citep{caron12time,glickman2015stochastic} and non-parametric Dirichlet process mixture models for clustering \citep{caron12time,caron2014bayesian} have been developed. \cite{Caron12mcmc} give efficient Monte Carlo methods exploiting data augmentation and conjugate priors. The Contextual Repeated Selection (CRS) model \citep{seshadri20} generalises Plackett-Luce to handle non-transitive relations. This generality comes at a price. If we had ten lists of length two in which bishop A precedes B and ten in which B precedes C then A probably precedes C when they appear in the same list in our poset-based analysis. In the simplest CRS model at least, the twenty lists would provide no information about the ordering of A and C in a list containing only A and C.

Our poset-based parameterisation has strengths and weaknesses when compared to these models. Among the weaknesses, our likelihood evaluation does not scale to data with ordered lists of actors longer than about 20 (though scaling with the number of lists is linear). This is discussed further in \sec~\ref{sec:noise-free-lkd}. In response, \cite{jiang23} restrict fitted posets to be Vertex-Series-Parallel posets (VSP,  \cite{valdes78}). These orders and Bucket Orders (in which unordered groups of actors are arranged in a total order) admit likelihood evaluation at a cost which is linear in the list length. Our model comparisons in \sec~\ref{sec:vsp-bucket-all} generally favor posets over VSPs and Bucket Orders. In other respects, our approach is closest to Placket-Luce as our latent variables play a similar role its skill scores.
Our poset-model inherits some of the strengths of Placket Luce: while some Mallows and all Placket-Luce (and our poset) models handle top-$k$ data straightforwardly (where the assessor just ranks their top $k$ preferences, see \sec~\ref{sec:queue-jumping-lkd-suborders}), fitting subset-data (where assessors are presented with different subsets for ranking) is more challenging for Mallows models, as the missing ranks have to be treated as missing data \citep{vitelli18}. This is not necessary in poset or Plackett-Luce models. On the other hand, Placket-Luce models are ``context independent'': the probability for actor A to be listed above actor B is independent of the presence or absence of any other actor. Mallows, CRS and our poset-based models are in general context dependent.
This property of the observation model defined in \sec~\ref{sec:noise-free-lkd} is discussed further in \app~\ref{app:lkd-context-dependence}.

Why is a qualitatively new ranking model needed, given the extensive range of models in the literature? First, we saw above that we have reason to believe the underlying social hierarchy is a partial order; we cannot estimate it reliably without fitting a model in which the parameter is a partial order. 
The alternative models described above impose a total order structure on the hierarchy; this is not justified and not needed.
Secondly, we divide rank-order analyses into two classes: those which aim to reconstruct an underlying true or ``physical'' order and those in which the fitted order is understood as a heuristic summary of the lists. We work in the former setting. However, the models cited in \sec~\ref{sec:intro-lit-alternative-CE-models} can be adapted to make heuristic models for our data. In \app~\ref{app:PL-time-series-model} we specify and fit a Plackett-Luce time-series model with covariates. \cite{glickman2015stochastic,glickman24} define a Plackett-Luce model with many of the same features. Some conclusions from our poset-analysis can be obtained by fitting this relatively simpler model. However, our poset-model is simply a better model for our data, in the sense of goodness of fit.
In \app~\ref{app:pl-mix-elpd} we estimate the Expected Pointwise Log Posterior Predictive (ELPD) model-selection measure \citep{Vehtari17} using Leave-One-Out Cross Validation (LOOCV) for our model and a Plackett-Luce mixture model \citep{Mollica14}. Our model is preferred.

\subsection{Statistical work with Partial Orders}

The first statistical methods inferring partial orders from list data were given in \citet{mannila00} and \citet{mannila06}. They treat problems of seriation in archaeology and biochronology in palaeontology and work with the VSP and Bucket Order sub-classes of posets for rapid evaluation of the ``noise free'' likelihood (see \sec~\ref{sec:noise-free-lkd}). \citet{mannila08} gives a Bayesian analysis for Bucket Orders. 
In other important early work  \citet{beerenwinkel2007conjunctive} define maximum likelihood posets and give a Bayesian analysis in \citet{beerenwinkel12}. They fit a probabilistic graphical model in which genetic mutations accumulate in a total order constrained by a poset. In related work \cite{froelich07} model signaling pathways for gene expression and fit their models using simulated annealing. 

In these archaeological and genetic settings there is an unknown true underlying poset and the data are total orders respecting that poset so we have the same data type and similar inferential goals. Our new contributions are as follows. In \sec~\ref{sec:queue-jumping-lkd-suborders} we give a generative model for lists which allows for noise in the realised lists. We idealise the list-observation model as a snapshot of a ``queue''. In \sec~\ref{sec:prior-main}, building on work by \citet{winkler1985random}, we give marginally consistent priors with a hyper-parameter controlling the ``depth'' of a random poset. Depth is a quantity of historical interest, so our prior should be non-informative with respect to depth. This rules out the uniform prior over posets, taken by \cite{beerenwinkel12}, as it is strongly informative of depth (see \app~\ref{app:prior-simulation}). In \sec~\ref{sec:prior-prob-dbns} we bring covariates into our model. We have a ``linear predictor'' which determines an actor's position in the poset. Finally, our list-data are a time series, so the generative model in \sec~\ref{sec:time-series-posterior} is a Hidden Markov Model (HMM) with a latent process of posets and ``emitted data'' which are lists of actors respecting the poset at the time each list was formed.

Posets appear in a range of data-analytic settings. In \citet{mogapi09} the data are edges in a directed graph. The edges are noisy observations of the relations in an underlying poset representing information flow in a company, and a prior controls the number of relations in the order, like our focus on prior depth. This prior is not marginally consistent. \citet{mannila06} encode list data as a precedence matrix giving the proportion of times any pair of items appear in a given order. A poset has a corresponding precedence matrix, estimated using random {\it linear extensions} (lists which are total orders respecting the poset). The ``distance'' between lists and a poset is the distance between their precedence matrices. The estimated poset is a Bucket Order minimising this distance. \citet{arcagni22} has poset data and a wider range of otherwise similar loss functions. They fit both posets and Bucket Orders.


In \cite{rising21} the poset is a summary statistic, displaying order relations between parameter estimates. Posets are also used for structure discovery in Bayesian Networks \citep{koivisto16a,koivisto16b}, where Bucket Orders support evaluation of marginal likelihoods. The likelihood is written as a sum over total orders respecting a Bucket Order and samples are reweighted by the order-count of the poset in an importance-sampling setup. The same count appears in our likelihood and we evaluate it using the same {\it lecount()} package \citep{koivisto19}. However, in our setting the poset is a parameter of interest, not a supporting structure in the computation.

\subsection{Contributions and plan}

Our main contribution is our analysis in \sec~\ref{sec:poHB2aRS6} of the bishop-list data. Our reconstruction of the evolving social hierarchy in \sec~\ref{sec:main-results} is the first statistical analysis of this kind of data. We answer some longstanding questions. How important was seniority in determining precedent? Did court politics play a role? Our poset models are also new (extending \cite{mannila08} and \cite{beerenwinkel12} as detailed above). Although our methodology was motivated by one particular data set, we nevertheless propose our poset-based ranking models as potentially useful for the analysis of rank data more broadly, to stand alongside Mallows, Plackett-Luce and other models for rank data.  

In \sec~\ref{sec:data} we describe the list data, their associated dates and a seniority covariate on the bishops which informs their position in the hierarchy. \sec~\ref{sec:models-and-inference} begins by setting out notation and defining partial orders and how they constrain the lists we actually observe. We motivate and define the observation model for lists in \sec~\ref{sec:parameters-obs-model-lkd} and then in \sec~\ref{sec:prior-main} give the prior. This is where we define ``status'' and how status is mapped to preference. The generative model and posterior are given \sec~\ref{sec:time-series-posterior}.
In \sec~\ref{sec:priors-properties} we show that our priors are marginally consistent and have support on every poset. We give a brief outline of our MCMC and define some useful summary statistics in \sec~\ref{sec:comp-methods}, relegating the detail to \app~\ref{app:mcmc}.
In \sec~\ref{sec:main-results} we present the results of our Bayesian-MCMC analysis. We begin in \sec~\ref{sec:poHB1aRS6} with a sanity-check: we drop an order constraint on the seniority covariate-effects which historians expect to hold and show the order is recovered. In \sec~\ref{sec:poHB2aRS6} we present our main results with the constraint now imposed. Results are discussed from a historical perspective in \sec~\ref{sec:discussion-of-results}. In \sec~\ref{sec:vsp-bucket-all} we make model comparisons with other methods (fitting VSP and Bucket-Order models). Further comparison with variants of Plackett-Luce are given in \app~\ref{app:PL-appendix-section}.
These favor our poset-model, though VSP and Bucket-Order models do quite well. We summarise our contribution in \sec~\ref{sec:conclusions} and point to future work. A supplement discusses data registration, properties of the observation model and prior, MCMC, further results, model comparisons and results on synthetic data.
 
\section{Introduction to the data}\label{sec:data}

\subsection{Context}\label{sec:data-background}
This study draws on an accumulated dataset, accessed through the database made for ‘The Charters of William II and Henry I’ project by the late Professor Richard Sharpe and Dr Nicholas Karn \citep{sharpe14}.
Some historical background on the data is given in \app~\ref{app:data-sources}. Each witness list in the data is an ordered list of names of individual witnesses taken
from a single legal document or ``act'' (collectively ``acta''). A typical example (with List id 2364) is given in \app~\ref{app:data-registration}. 

We have 1610 witness lists dated between 1066 CE and about 1166 CE involving 1760 individuals. A witness list is a ``snapshot'' created at a single event on a single day. We assume distinct lists are generated independently (for example, we see no evidence for a pair of Acta created at the same time with identical lists). Acta are witnessed in order of social rank from the king or queen, archbishop, bishops (as a group), earls (as a group, may precede bishops) and so on down through society. 
Historians ask if the order in which bishops appear within the their sub-list reflects their evolving personal authority. As we focus on the bishop-hierarchy we extract from the data the sub-lists of bishops. Many of the resulting lists contain less than two bishops and these are discarded as not informing relations between bishops. 
Data processing is set out in detail in \app~\ref{app:data-registration}. 

We take time as discrete by year as the data gives dates rounded to the year. Further coarsening would mask recoverable structural change. 
Outside the range $[B=1080,E=1155]$ CE the lists are sparse so we focus on this interval, covering the reigns of William II, Henry I and Stephen and $T=E-B+1=76$ years. The period is long enough for us to witness changes in the status of individual bishops, but short enough for there to be some hope of temporal homogeneity in the social conventions mapping status to witness list.

Data registration is detailed in \app~\ref{app:data-registration-discuss}. It leaves us with $N=371$ lists, dated between $1080$ and $1155$, and containing two or more bishops. We refer to this as the ``full data'' (we look at shorter time intervals in our goodness-of-fit). Each of the $M=67$ bishops in at least two lists is assigned a numerical index from $1$ to $M$ in \fig~\ref{fig:bishop-names} in \app~\ref{app:poHB2aRS6}. 

\subsection{Data notation}\label{sec:notation-data}

Let $\I=\{1,...,N\}$ and $\M=\{1,...,M\}$ be the sets of list and bishop labels respectively. Each list $y_i=(y_{i,1},...,y_{i,n_i}),\ i\in \I$ is an ordered list of $n_i$ bishops, so that $y_i\subset \M$, with bishop-$y_{i,1}$ first in the list, $y_{i,2}$ second and so on. Not all bishops appear in every list so the list is conditioned on ``attendance''. Let $o_i=\{y_{i,1},...,y_{i,n_i}\}$ be the {\it unordered} list of bishops in list $i$. Let $y=\{y_1,...,y_N\}$ and $o=\{o_1,...,o_N\}$.

We suppose the lists were generated by a process running over an interval of time $[B,E]=\{B,B+1,\dots,E\}$. For $i\in \I$ let $\tau_i\in [B,E]$ give the time at which list $i$ was created. This is sometimes uncertain. However, bounds $\tau^-_i\le \tau_i\le \tau^+_i$ are available. We estimate the missing list dates, taking a uniform prior on $\tau_i\in [\tau^-_i,\tau^+_i]$.
Let $\tau=(\tau_1,...,\tau_N)$, $\tau^\pm_i=(\tau^-_i,\tau^+_i)$ and $\tau^{\pm}=(\tau^{\pm}_1,...,\tau^{\pm}_N)$. 

Bishops enter the social hierarchy when consecrated and leave when they die. For $j\in \M$ we have \citep{FastiEcclesiaeAnglicanae} dates of consecration $b_j<E$ and death $e_j>B$ for each bishop.
The distribution of these intervals can be seen in \app~\ref{app:data} in \fig~\ref{fig:lists-and-bishops} at left. The intervals match the list date-ranges $(\tau^-_i,\tau^+_i)$ so that no bishop appears in a list when not in post. At any given time some dioceses may be empty. \fig~\ref{fig:diocese-activity} shows the presence and absence of bishops by diocese.
For $t\in [B,E]$ let $\M_{t}=\{j\in\M: b_j\le t\le e_j\}$ give the set of bishops active at time $t$, let $m_t=|\M_t|$ give the number of active bishops at time $t$ and let
\begin{equation}\label{eq:S}
    \Sm=\max_{t\in [B,E]} m_t
\end{equation}
give the greatest number active at any time ($D=22$, in 1133). This quantity plays a role in bounding the required dimension of the model parameter space.



\subsection{Witness list data}\label{sec:data-witness-lists}

Bishops from thirty one dioceses appear in the data (including one, Tusculum, from Italy, dropped from the data). They are listed in \app~\ref{app:data-dioceses} and can be seen at the left side of \fig~\ref{fig:diocese-activity}. 
\begin{figure}
    \centering
    \hspace*{-0.2in}\includegraphics[width=5.5in, trim=0.3in 0.5in 0.1in 0.85in, clip]{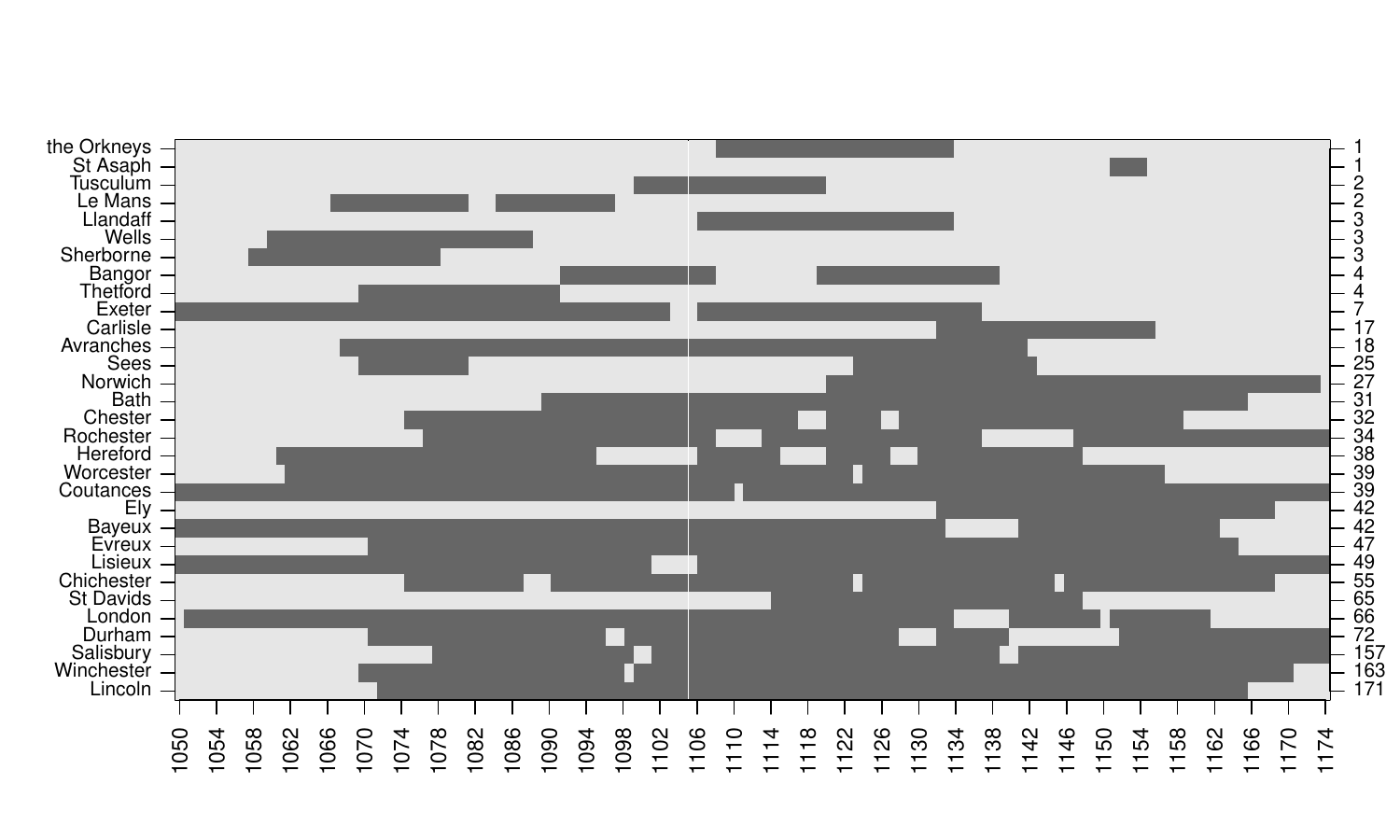}
    \caption{Left axis gives dioceses. Right axis gives the number of lists in which each diocese appears. The $x$-axis gives the date in years. In each year a diocese may have a bishop in-post (dark cell) or be unoccupied (light). }
    \label{fig:diocese-activity}
\end{figure}
The dates of $|\{i\in \I:\ \tau^+_i-\tau_i^->1\}|=212$ lists are uncertain (the mean interval length is 4 years, and 90\% span less than 10 years).
Date intervals $[\tau^-_i,\tau^+_i]$ are plotted in \app~\ref{app:data} in \fig~\ref{fig:lists-and-bishops} at right. 
\fig~\ref{fig:list-dates-lengths} plots lists and their lengths againt their dates (using the midpoint of $[\tau^-_i,\tau^+_i],\ i\in \I$).
\begin{figure}[b]
    \centering
    \hspace*{0in}\includegraphics[width=5.5in, trim=0in 0.2in 0.2in 0.75in, clip]{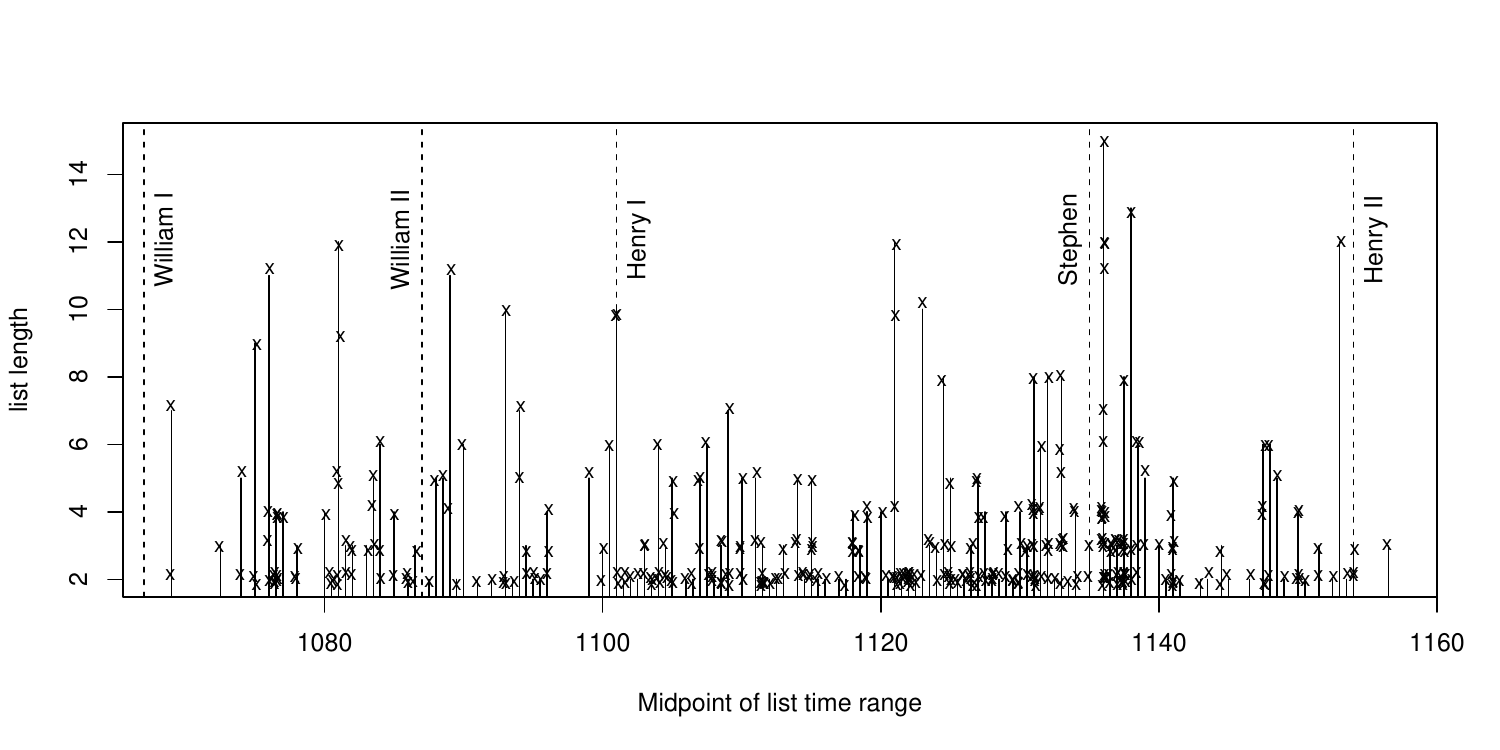}
    \caption{List lengths and dates. Dashed vertical lines are coronation dates of Kings, bar heights are longest lists at that date, with a (jittered) cross for each list plotted at (date,length).}
    \label{fig:list-dates-lengths}
\end{figure}
We include a list if at least half its interval falls within the 76-year interval $[B,E]$; most of the lists in our analysis fall entirely within it.

Our information about a bishop's status is limited by the number of lists in which a bishop appears. Longer lists are more informative as they inform relations between many pairs of bishops. \fig~\ref{fig:list-per-bishop-per-list} shows the distribution of the number of lists a bishop appears in and the distribution of list lengths. 
\begin{figure}
    \centering
    \hspace*{-0.15in}\includegraphics[width=5.5in, trim=0in 0.2in 0.5in 0.75in]{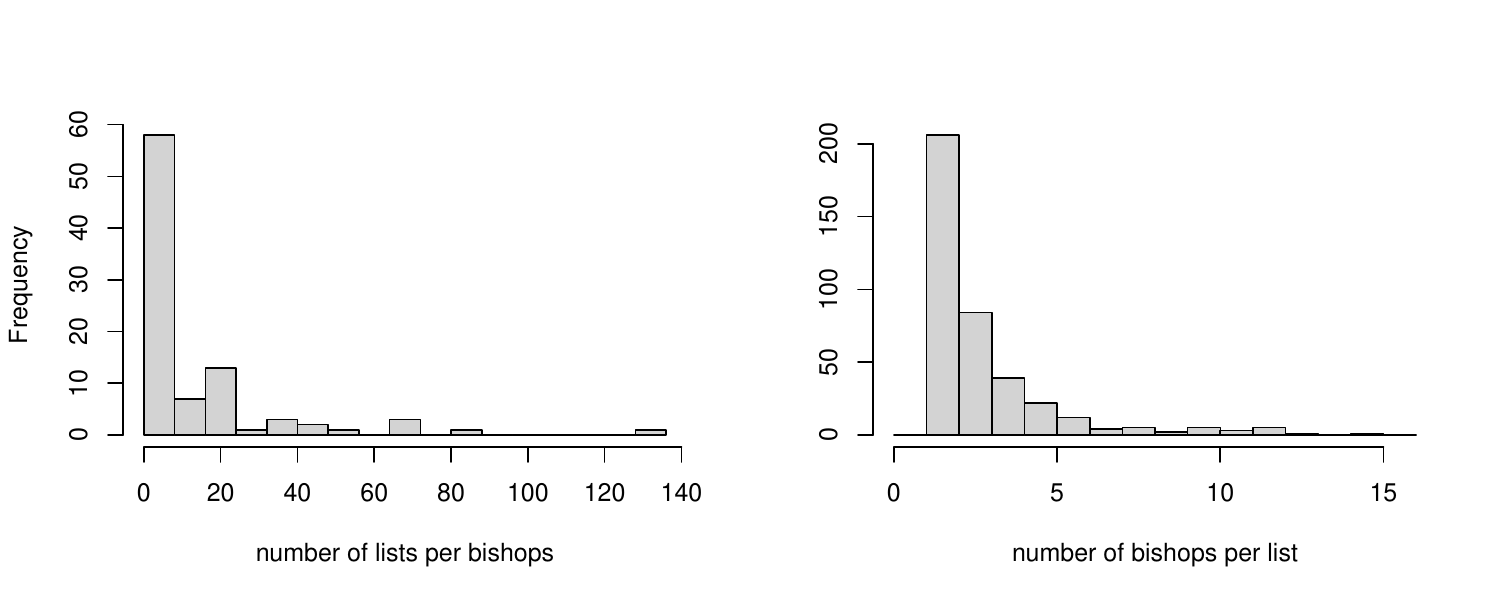}
    \caption{Frequency of lists per bishop, a histogram of the counts $\sum_{i\in\I}\mathbb{I}_{j\in o_i}$, is plotted at left. Distribution of list lengths, a histogram of the counts $n_i=|o_i|$, is plotted at right.}
    \label{fig:list-per-bishop-per-list}
\end{figure}
Most bishops appear in a small number of lists and most lists are relatively short. However, lists ``link together''. If two bishops $j_1, j_2$ do not appear in any list together, but $j_1$ comes before $j_3$ in some list and $j_3$ before $j_2$ in another, then this is evidence for $j_1$ having higher status than $j_2$. This evidence accumulates over lists.

\subsection{Seniority covariates}\label{sec:data-covariates-seniority}
A bishop's ``seniority'' may have contributed to their overall status \citep{clover79} so it is a covariate in our model for the hierarchy of bishops. In year $t$ the longest serving bishop has seniority-rank one and the last appointed bishop has rank at most $m_t$.
Denote by $s_{t,j}\in \{1,...,m_t\}$ the seniority-rank of bishop $j\in\M_t$ in year $t\in \{b_j,b_j+1,...,e_j\}$. We define seniority-rank as
\begin{equation}\label{eq:seniority-s-covariate-definition}
    s_{t,j}=\sum_{k\in \M_t} \mathbb{I}_{b_j\ge b_k}.
\end{equation}
Bishops have equal seniority if there are ties in the start dates $b_j,\ j\in \M$.  
The greatest seniority observed, $\Sb=\max_{t,j} s_{t,j}$, is less than or equal $\Sm$, the most bishops active. This is a second quantity which informs the dimension of the model we fit.

\fig~\ref{fig:seniority-rank-stuff} shows (at left) seniority-rank traces for each bishop from their first to last year in post. Bishops progress in rank by about one place every one or two years, more rapidly at first, as there are more bishops ahead of them. 
Our estimates of the effect of possessing seniority-rank $r\in\{1,...,\Sb\}$ depends for precision on a bishop with seniority $r$ appearing in a reasonable number of lists, so we plot (\fig~\ref{fig:seniority-rank-stuff}, right) the occurrence frequency $f_r=\sum_{i\in \I}\sum_{j\in o_i}\mathbb{I}_{s_{\tau_i,j}=r}$ against $r$ to see which levels of the covariate are well represented in the data. 
\begin{figure}
    \centering
    \hspace*{-0.15in}\includegraphics[width=5.5in, trim=0in 0.2in 0.5in 0.75in, clip]{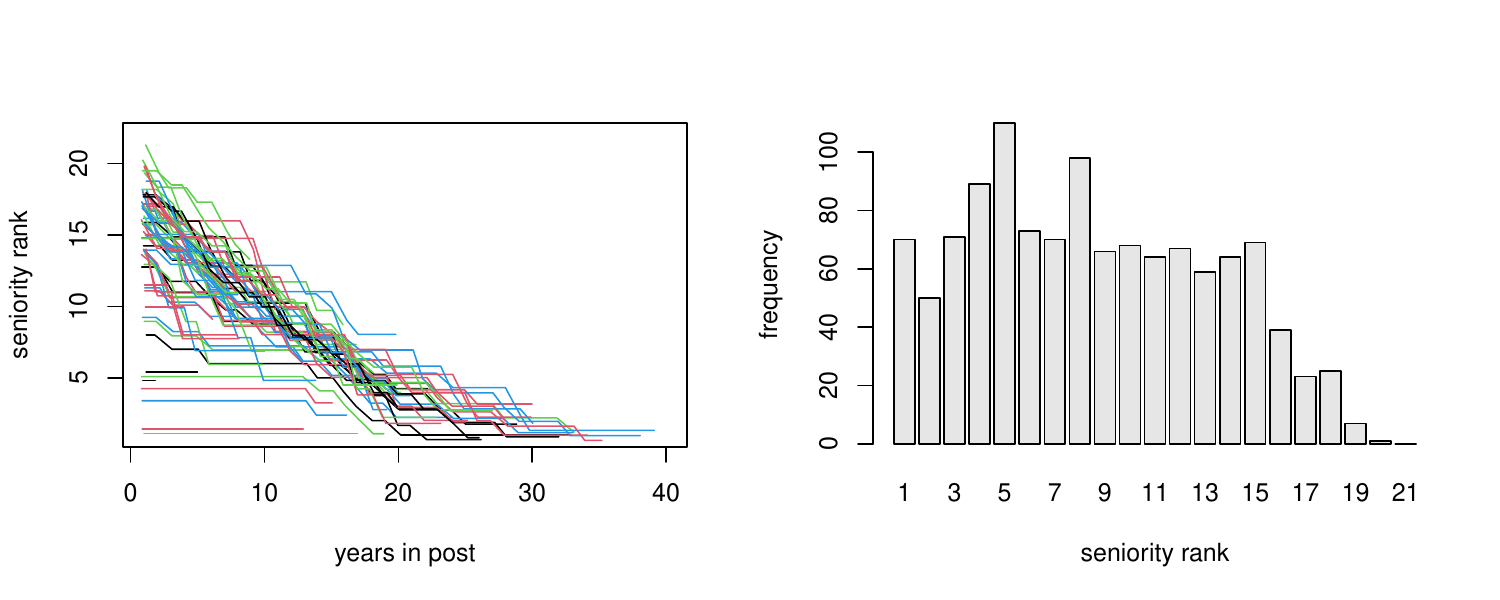}
    \caption{Seniority rank covariate $s$ defined in \eq~\ref{eq:seniority-s-covariate-definition}. Rank $s_{t,j}$ is plotted against ``years in post'', $t-b_j$, for each bishop $j\in\M$ (Left). The frequency of seniority-rank $r$ in lists is plotted against $r$ from 1 to $\Sb$ (Right).}
    \label{fig:seniority-rank-stuff}
\end{figure}
Some dioceses were more peaceful and wealthy than others, so we considered taking diocese label as a second covariate for status. However, diocese would be colinear with bishop label, as each bishop only occupies one diocese in the period of study. An effect due to diocese would not be identifiable with the effects due to the bishops in that diocese.

\subsection{Key questions}

The key question is whether the status of bishop $j\in\M$ in year $t\in [B,E]$ was determined by their seniority $s_{t,j}$ in that year and their diocese or whether the personal authority or position in court of bishop $j$  played a role. In terms of the list data $y$, are the positions of bishops in a list determined by seniority and diocese, with any variation away from this order just unstructured noise, or are the variations away from precedence rules structured by some kind of latent authority? The status of a bishop determines their place in the hierarchy, so what did the hierarchy look like in any given year? We answer these questions by building a statistical model for the bishop-list data.

\section{Models and Inference}\label{sec:models-and-inference}

Our model, in which lists gathered in year $t$ respect a social hierarchy which is known and respected by all but subject to occasional change, expresses the evolving social hierarchy. We present our model as a description of relations between actors, in the usual terminology of social network analysis. However, it can be applied to the analysis of any ranking list data, with or without time-series structure: \cite{jiang23} shows that a related fixed-time model is a good fit for Formula~1 race outcomes. 

\subsection{Parameters and observation model}
\label{sec:parameters-obs-model-lkd}

\subsubsection{Partial orders and linear extensions}\label{sec:partial-order-intro}
In this section we define precedence relations using posets. We drop the time dependence and consider a single generic observation. Suppose we have $m$ actors with labels in $[m]=\{1,...m\}$. 
We represent the unknown true order relations between actors as a poset on $[m]$. \cite{brightwell93} gives an overview of models for random posets and is the source for much of what follows. A strong partial order $\succ_H$ on the ground set $[m]$ is a set of acyclic, transitively closed relations $i\succ_H j$ on the elements of $i,j\in [m]$. Ties $i\sim_H j$ are excluded.
The relations in $H$ are transitively closed if $i\succ_H j$ and $j\succ_H k$ implies $i\succ_H k$. The order is only partial as some elements are not ordered.  
A poset is a total order if $i\succ_H j$ or $j\succ_H i$ for every pair $i,j\in [m]$.

Partial orders on $[m]$ are one to one with transitively closed directed acyclic graphs (DAGs) with vertex labels $1,...,m$, one vertex for each of the $m$ actors, so $\succ_H$ is represented by a DAG $(H,[m])$ with edge set 
\[
H=\{\e{i}{j}\in [m]\times [m]: i\succ_H j\}.
\]
See the example in \fig~\ref{fig:po-example}. We refer to transitively closed DAGs as if they were posets, as they correspond one to one. We can identify a poset by its edge set $H$ as the edge set will be random while the vertex labels $[m]$ which define the ground set are always fixed.
Since posets are edge sets we can take intersections of 
posets. This gives the poset with all relations shared by the intersected posets. The \emph{dimension} of a poset $H\in\H_{[m]}$ is the smallest number of total orders which intersect to give $H$. 

Let $\H_{[m]}$ be the set of all transitively closed DAGs on $[m]$ and let $H\in \H_{[m]}$ be a generic poset. For plotting purposes the transitive reduction is convenient. This is the unique DAG obtained from $H$ by removing all edges implied by transitivity. The depth of a social hierarchy is of interest in many applications. The depth $d(H)$ of $\succ_H$ is the length of the longest path on the DAG $H$, so $d:\H_{[m]}\rightarrow [m]$. The poset in \fig~\ref{fig:po-example} has $d(H)=4$.  

\begin{figure}
    \centering
    \hspace*{-0.1in}{\includegraphics[width=4in, trim=0.1in 2.2in 0.3in 2.25in, clip]{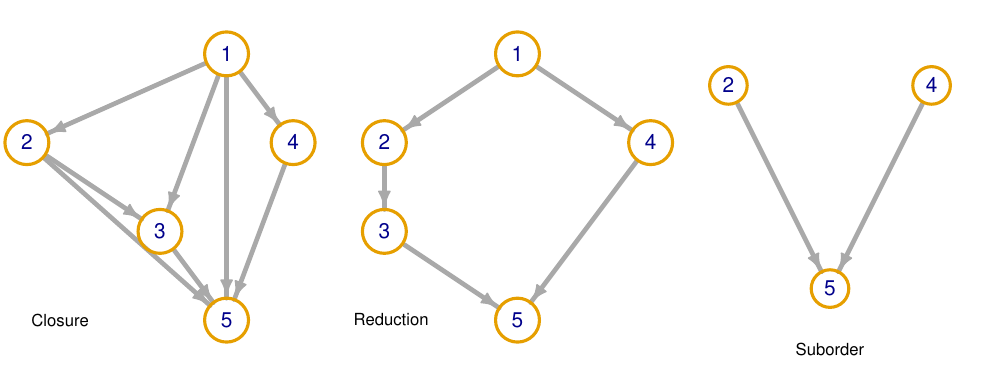}}
    \caption{Poset $H=\{\e12,\e13,\e14,\e15,\e23,\e25,\e35,\e45\},\ H\in\H_{[m]}$ on $[m]=(1,2,3,4,5)$ represented by its transitively closed directed acyclic graph (left) and transitive reduction (centre). Suborder $H[O]$ for $O=\{2,4,5\}$ (right). The longest path on $H$ is $1\to 2\to 3\to 5$ so the depth is $d(H)=4$.}
    \label{fig:po-example}
\end{figure}

Let $\P_{[m]}$ be the set of all permutations of $[m]$. A {\it linear extension} of $H$ is any list $\ell=(\ell_1,...\ell_m),\ \ell\in\P_{[m]}$ in which lesser entries come after greater entries, so $\ell_j\succ_H \ell_k$ is not allowed if $k<j$. For example, if $H$ is the poset displayed in \fig~\ref{fig:po-example} then $H$ has three linear extensions, $(1,2,3,4,5), (1,2,4,3,5)$ and $(1,4,2,3,5)$. Denote by 
\[
\L[H]=\{\ell\in \P_{[m]}: \e{\ell_j}{\ell_k}\not\in H\ \mbox{for all}\ 1\le k<j\le m\}
\]
the set of all linear extensions of $H\in \H_{[m]}$.

If we start with a social hierarchy $H\in\H_{[m]}$ over all actors then the hierarchy constraining any given subset $O\subseteq [m]$ of the actors is the  {\it suborder}
\begin{equation}\label{eq:suborder-defn}
    H[O]=\{\e{j_1}{j_2}\in H: \{j_1,j_2\}\subseteq O\}
\end{equation}
obtained by retaining edges between vertices in $O$. If $H$ is a poset then so is $H[O]$. For example, in \fig~\ref{fig:po-example}, if $O=\{2,4,5\}$ then suborder $H[O]$ is the three-vertex DAG at right.

\subsubsection{Lists as randomly ordered queues}\label{sec:noise-free-lkd}

A single generic list $Y=(Y_1,...,Y_m)$ is modeled as a random linear extension of $H$. 
Let $C(H)=|\L[H]|$ be the number of linear extensions of poset $H\in \H_{[m]}$. The ``noise free'' likelihood for $H$ is simply
\begin{equation}\label{eq:lkd-noise-free}
p(Y|H)=C(H)^{-1}\mathbb{I}_{Y\in \L[H]}.
\end{equation}
All lists which respect the social hierarchy are equally likely. This ``context dependent'' observation model (see \app~\ref{app:lkd-context-dependence}) is motivated by thinking of each list as a realisation of a random queue process in which actors not ordered by $H$ randomly swap places.
The equilibrium of this process is the uniform distribution on linear extensions of $H$ \citep{Karzanov91,jiang24}, so if $Y$ is a snapshot of this queue at equilibrium then $Y\sim \mbox{Unif}(\L[H])$.
Witness lists were written down by a royal scribe with (we assume) perfect knowledge of the hierarchy $H$, so the queue model is an idealisation. However, we arrive at the same model if we assume the clerks regarded all orders not conflicting the hierarchy as equally likely.

Computation of $C(H)$ is \#P-complete \citep{brightwell1991counting} so no polynomial time algorithm for computing $C(H)$ is available, or is likely to exist, and evaluating $p(Y|H)$ is prohibitive at large $m$. However, in our data $m$ is small enough to allow likelihood evaluation in reasonable time. For $j\in [m]$ let $\L_j[H]=\{\ell\in \L[H]: \ell_1=j\}$ be the set of linear extensions with $j$ first in the list and let $C_j(H)=|\L_j[H]|$. Partitioning on the first entry,
\[
C(H)=\sum_{j=1}^m C_j(H),
\]
which \cite{knuth1974structured} compute using the suborder recursion in \app~\ref{app:counting-linear-extensions}.

\subsubsection{Queue-jumping observation model for suborders}\label{sec:queue-jumping-lkd-suborders}
In this section we define the observation model we use in our analyses. We modify the likelihood in \eq~\ref{eq:lkd-noise-free} in three ways: lists may be ``noisy''; just a subset of actors are in any given list; lists are observed over time.

We allow for noise in the observation model by allowing individuals to ``jump the queue''. A list $Y$ is formed by taking individuals from the top of a queue which continues to mix rapidly, constrained by the suborder on those remaining. Before the $j$'th actor is chosen,
there are $m-j+1$ individuals (with labels $Y_{j:m}$) yet to be placed. With probability $p$ the next actor (ie $Y_j$) is chosen at random, ignoring any order constraints. Otherwise, $Y_j$ is chosen as the first actor in a random linear extension of the suborder $H[Y_{j:m}]$ for those remaining. The fraction of lists headed by $Y_j$ is $C_{Y_j}(H[Y_{j:m}])/C(H[Y_{j:m}])$. Working from the top down,
\begin{align}
  p_{(D)}(Y|H,p)&=\prod_{j=1}^{m} p_{(D)}(Y_j|H[Y_{j:m}],p)\nonumber \\
  &=\mathbb{I}_{Y\in \P_{[m]}}\prod_{j=1}^{m}\left( \frac{p}{m-j+1}+(1-p)\frac{C_{Y_j}(H[Y_{j:m}])}{C(H[Y_{j:m}])}\right).\label{eq:lkd-noisy-down}
\end{align}
Noise allows any list to appear with non-zero probability. This is a ``repeated selection'' model \citep{seshadri20} in which the next actor is chosen sequentially from those that remain.
It follows that the correct likelihood for top-$k$ data (where an assessor sees all $m$ actors but just ranks their top $k<m$ actors) simply stops the product in \eq~\ref{eq:lkd-noisy-down} at $j=k$.




We take $p\sim \mbox{Beta}(1,\delta)$ as our family of priors for the queue-jumping probability $p$. The prior hyperparameter $\delta\ge 1$ is fixed (for example, in \sec~\ref{sec:main-results} we take $\delta=9$, so the prior probability for a queue-jumping event is about ten percent), expressing the belief that if order relations are present then they are respected. 

In \eq~\ref{eq:lkd-noisy-down} individuals are promoted up the queue. We can also model random ``demotion''. In this case the list is filled from the bottom up: with probability $p$ the next actor is chosen at random, ignoring any order constraints; otherwise, they are the last entry in a random linear extension of the suborder for the remaining individuals. The likelihood becomes
\begin{equation}\label{eq:lkd-noisy-up}
  p_{(U)}(Y|H,p)=\mathbb{I}_{Y\in \P_{[m]}}\prod_{j=1}^{m}\left( \frac{p}{j}+(1-p)\frac{\tilde C_{Y_j}(H[Y_{1:j}])}{C(H[Y_{1:j}])}\right),
\end{equation}
where $\tilde C_{Y_j}(H)$ is the number of linear extensions of $H$ which end with $Y_j$.
\cite{jiang23} extend these models to allow random promotion and demotion in a single realised list. The likelihood is tractable, but evaluation is time consuming so we do not fit that model here.

The noise free case is obtained from both \eqs~\ref{eq:lkd-noisy-down} and \ref{eq:lkd-noisy-up} at $p=0$. One consequence is that the noise-free model is also a repeated selection model. See \app~\ref{app:noise-free-special-case} for the proof. 
\begin{proposition}\label{prop:noise-free-special-case}
   The noise-free and noisy models (\eqs~\ref{eq:lkd-noise-free}, \ref{eq:lkd-noisy-down} and \ref{eq:lkd-noisy-up}) all coincide at $p=0$. 
\end{proposition}

We now consider what happens when just a subset of actors are present. Relations are given by their suborder and the observation model applies for lists realised on suborders. Suppose that, when $Y$ was realised, a subset $O=\{O_1,...,O_n\},\ O\subseteq [m]$ of actors entered the queue. Since they were constrained by the suborder $H[O]$, the noise free observation model is $Y\sim \mbox{Unif}(\L[H[O]])$: the list is a random draw from the linear extensions of the suborder. For example, for $H$ in \fig~\ref{fig:po-example}, if $O=\{2,4,5\}$, the linear extensions are $\L[H[O]]=\{(2,4,5),(4,2,5)\}$ and $Y$ is chosen at random from this set. The queue-jumping likelihoods $p_{(D)},p_{(U)}$ are obtained by replacing $H\to H[O],\ [m]\to O$ and $m\to n$ in \eqs~\ref{eq:lkd-noisy-down}~and~\ref{eq:lkd-noisy-up}.

\subsubsection{Time series of lists}\label{sec:time-series-likelihood}
Finally, we restore time and give the full likelihood. This observation model is illustrated in \fig~\ref{fig:hmm-obs-only}. 
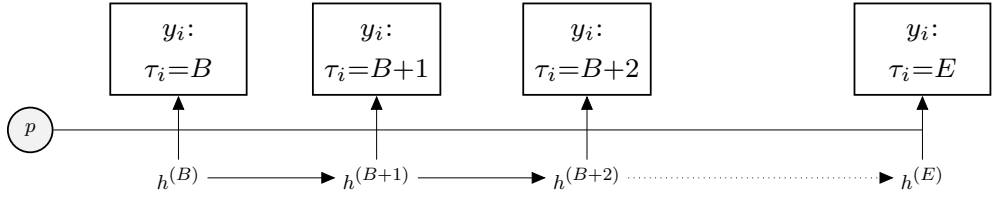
\begin{figure}
    \centering
    \[\scalebox{0.85}{\begin{tikzpicture}

      \node (h1) {$h^{(B)}$};
      \node (h2) [right=of h1, xshift=1cm] {$h^{(B+1)}$};
      \node (h3) [right=of h2, xshift=1cm] {$h^{(B+2)}$};
      \node (h4) [right=of h3, xshift=0.3cm] {};
      \node (h5) [right=of h4, xshift=0.1cm] {};
      \node (hT) [right=of h5, xshift=0.3cm] {$h^{(E)}$};

      \node (y1) [data, above=of h1, yshift=0.5cm] {\scalebox{1.5}{$\strut y_i:\atop \strut\ \ \  \tau_i=B\ \ \ $}};
      \node (y2) [data, above=of h2, yshift=0.5cm] {\scalebox{1.5}{$\strut y_i:\atop \strut\tau_i=B+1$}};
      \node (y3) [data, above=of h3, yshift=0.5cm] {\scalebox{1.5}{$\strut y_i:\atop \strut\tau_i=B+2$}};
      \node (yT) [data, above=of hT, yshift=0.5cm] {\scalebox{1.5}{$\strut y_i:\atop \strut\ \ \ \tau_i=E\ \ \ $}};
      
      \node (p) [param, left=of h1, xshift=-1cm, yshift=0.75cm] {$p$};
      
      \node[shape=coordinate] (hTp) [right=of p, xshift=12.59cm] {};
      \edge [-] {p} {hTp};

      \edge [->] {h1} {h2};
      \edge [->] {h2} {h3};
      \edge [dotted,->] {h3} {hT};
      
      \edge [->] {h1} {y1};
      \edge [->] {h2} {y2};
      \edge [->] {h3} {y3};
      \edge [->] {hT} {yT};
    \end{tikzpicture} }\]
    \caption{The observation model. The distribution of lists $\{y_i: i\in\I,\ \tau_i=t\}$ observed at time $t$ is parameterised by a poset $h^{(t)}$ and noise parameter $p$. A stochastic process realising $h^{(t)},\ t\in [B,E]$ is given in \sec~\ref{sec:prior-main}.} 
    \label{fig:hmm-obs-only}
\end{figure} 
At each time $t\in [B,E]$ the actors indexed in $\M_{t}$ were active and had precedence relations given by some poset $h^{(t)}\in \H_{\M_{t}}$. 
The sequence of partial orders from $B$ to $E$ is
\begin{align}
h&=(h^{(B)},h^{(B+1)},...,h^{(E)}),\label{eq:h-time-series-defn}\\
\intertext{where $h\in \H^{(B,E)}$ with}
\H^{(B,E)}&=\H_{\M_B}\times \H_{\M_{B+1}}\times ... \times \H_{\M_E}.\label{eq:h-sequence-space-defn}
\end{align}
The model for $h$ in \sec~\ref{sec:prior-main} will define a poset process with poset time-series realisations.  

For $i\in\I$, list $y_i$ was formed under constraints imposed by the suborder $h^{(\tau_i)}[o_i]$, so in the noise-free model $y_i\in \L[h^{(\tau_i)}[o_i]]$. Allowing for noise, the likelihood is
\begin{equation}
    p(y|h,\tau,p)=\prod_{i=1}^N p(y_i|h^{(\tau_i)}[o_i],p),\label{eq:lkd-timeseries}
\end{equation}
where $p(y_i|h^{(\tau_i)}[o_i],p)$ is given by $p_{(D)}$ or $p_{(U)}$ in \eqs~\ref{eq:lkd-noisy-down}~or~\ref{eq:lkd-noisy-up} (depending on our choice of model) with the replacements $Y\to y_i$, $m\to n_i$ and $H\to h^{(\tau_i)}[o_i]$. 

\subsection{Latent variables and covariates in a prior for partial orders}\label{sec:prior-main}
We derive prior models for posets from $k$-dimensional random orders \citep{winkler1985random}. 
We modify this setup as we would like to have some control over the prior depth distribution. The depth of the social hierarchy of bishops is meaningful to historians so a prior which is non-informative with respect to depth will be useful.
The uniform prior on posets certainly wouldn't be appropriate  as it is strongly informative of depth: it concentrates on posets of depth three as $m\to \infty$ \citep{kleitman1975asymptotic}. The effect is illustrated in \app~\ref{app:prior-simulation}.

\subsubsection{Latent variable parameterisation} \label{sec:latent-variable-param}
In this section we define status and how it is mapped to precendence order relations. We use feature vectors, one for each actor, to determine actor-placing in the social hierarchy. See \fig~\ref{fig:po-example-UZ} for illustration. These ``status-features'' do not correspond to any identifiable physical attributes. Following \citet{winkler1985random}, we associate with each actor $j\in \M_t$, at a time $t\in [b_j,e_j]$, a $1\times K$ latent vector $Z^{(t)}_{j}\in \R^K,\ Z^{(t)}_{j}=( Z^{(t)}_{j,1},..., Z^{(t)}_{j,K})$ of $K\ge 1$ status-features. Let $Z^{(t)}=( Z^{(t)}_{j} )_{j\in \M_t}$ be an $m_t\times K$ status matrix, with one row for each actor active at time $t$ and one column for each status feature $1,...,K$.

The partial-order $h^{(t)}$ at time $t$ is a function of $Z^{(t)}$. At time $t$ actor $j\in \M_t$ is above actor $j'\in \M_t$ if {\it all} status variables of $j$ are greater than those of $j'$, that is, $h^{(t)}=h(Z^{(t)})$ with
 \begin{equation}\label{eq:z-to-h-mapping}
 h(Z^{(t)})=\{\e{i}{j}\in \M_t\times\M_t: Z^{(t)}_{i,k}>Z^{(t)}_{j,k}\ \mbox{for all $k=1,...,K$}\}.
  \end{equation}
The setup is illustrated in \fig~\ref{fig:po-example-UZ} at centre, where the $Z$-matrix has $m_t=5$ rows and $K=4$ columns. The rows of $Z^{(t)}$ are ``paths'' in the space $[K]\times \R$: the relation $\e{j}{j'}\in  h^{(t)}$ holds when the path $(k,Z^{(t)}_{j,k})_{k=1}^K$ lies above the path through $(k,Z^{(t)}_{j',k})_{k=1}^K$; if the paths cross then the actors are unordered. In \fig~\ref{fig:po-example-UZ}, the $Z$-path for actor 4 intersects the paths for 2 and 3 so 4 is unordered with respect to 2 and 3. The other paths do not intersect so 1, 2, 3, 5 have a total order. This is a latent feature representation of the poset at right. In \cite{winkler1985random} the columns of $Z^{(t)}$ are independent and paths are likely to cross. By contrast, the priors we give in \sec~\ref{sec:prior-prob-dbns} correlate columns and give us some control over the prior depth distribution.
\begin{figure}
    \centering
    {\includegraphics[width=5in, trim=0in 0in 1in 0in, clip]{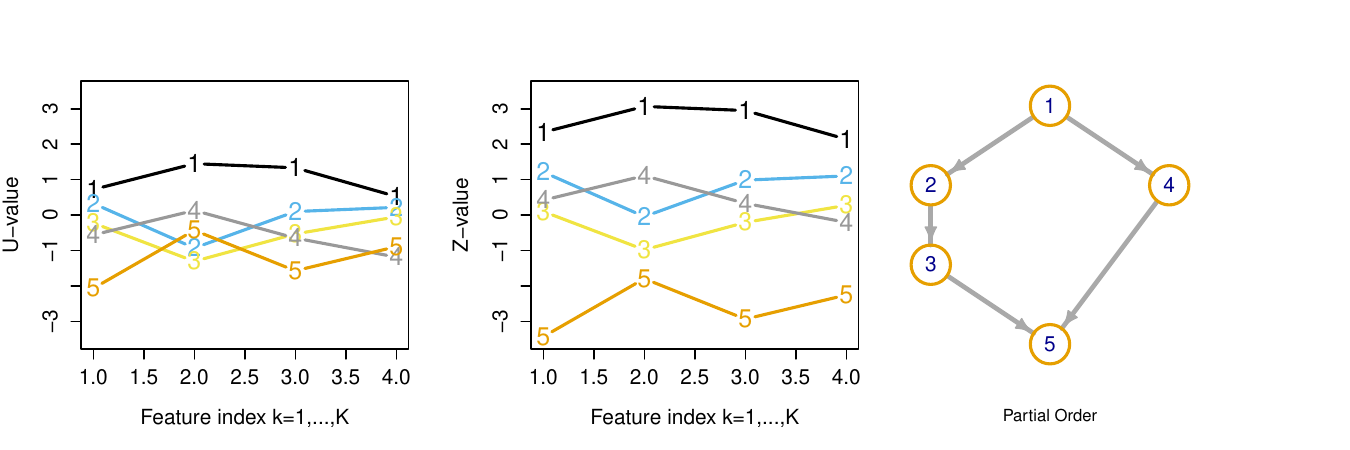}}
    \caption{Latent variable representation of the poset $H$ in \fig~\ref{fig:po-example}: (left) $U$-matrix defined in \sec~\ref{sec:covariate-effects}; (centre) $Z$-matrix and (right) poset $h(Z)$ defined in \sec~\ref{sec:covariate-effects}. The paths in $Z$ are shifted by the effects $\beta$.}
    \label{fig:po-example-UZ}
\end{figure}



Our mapping from $Z^{(t)}$ to $h^{(t)}$ is equivalent to intersecting the $K$ total orders given by ranking the entries in each column of $Z^{(t)}$. See \app~\ref{app:basis-Z-for-h} for a discussion of this point.
As we take a fixed value of $K$, the dimension of $h^{(t)}$ is at most $K$. However, using results from \citep{Hiraguchi51,Bogart73a}, it may be shown that if $K\ge \lfloor m_t/2\rfloor$ then any poset $h^{(t)}\in \M_t$ can be represented by some $m_t\times K$ matrix $Z^{(t)}$. This is discussed in a modelling context in \citet{muirwatt15} and proven in \sec~\ref{sec:prior-properties-Krho}. Since $h^{(t)},\ t\in [B,E]$ has at most $\Sm$ vertices (see \eq~\ref{eq:S}), a model with $K\ge \lfloor \Sm/2\rfloor$ can represent any partial order process $h\in \H^{(B,E)}$. 

Taking $K$ as large as $\lfloor \Sm/2\rfloor$ is conservative as the true $h$-process may live in a lower dimensional space. In recent work \cite{jiang24} estimates $K$, using a non-parametric prior for the $U$-process and sampling $K$ using reversible jump MCMC. They considered smaller data sets with fewer lists all gathered at a single fixed time. In that setting the best fitting values of $K$ were often much smaller than $\lfloor \Sm/2\rfloor$. In earlier work \cite{durante17rejoinder} used a similar approach to estimate the dimension of a latent space parameterisation of a population of networks. In both cases graph structures are modeled using latent continuous variables. Our approach is more like \cite{durante17} and \cite{gwee23}: the dimension of the latent space is chosen so that some kind of representation property holds (like our Proposition~\ref{prop:basis-Z-for-h}); shrinkage \citep{rousseau11} effectively removes unwanted components. In our setting $\rho$ controls this shrinkage. When it is larger the columns of $Z$ tend to have the same order, so the dimension of the poset is smaller.

\subsubsection{Covariate effects for partial orders} \label{sec:covariate-effects} .

In \sec~\ref{sec:notation-data} we introduced an actor-specific covariate informing status relations among bishops. In the following $s_{t,j}\in \{1,2,\ldots,\Sb\}$ is a single categorical or ordinal variable with levels from 1 to $\Sb$. More general covariates are easily accommodated. 
Let $\beta\in \R^{\Sb}$ be the vector of level effects. We split the \emph{status}-vector $Z^{(t)}_j$ of actor $j\in \M_t$ into an additive effect $\beta_{s_{t,j}}$ due to $s_{t,j}$ and an \emph{authority}-vector $U^{(t)}_j\in \R^K$, which captures all aspects of status which are not attributable to the covariate.
Our additive model is, for $j\in \M_t$ and $t\in [B,E]$,
\begin{equation}\label{eq:additive-model-for-Z}
Z^{(t)}_j= U^{(t)}_j+1_{K}\beta_{s_{t,j}}
\end{equation}
where $1_K$ is a row vector of $K$ ones. Higher $\beta_{s_{t,j}}$-values lift all components of $U^{(t)}_j$ by a constant, raising the status features in $Z^{(t)}_j$. This moves the path $(k, Z^{(t)}_{j,k})_{k=1}^K$ above other paths and gives a higher position for actor $j$ in the poset $h^{(t)}$. 
See \fig~\ref{fig:po-example-UZ} for an example. 

In our application the covariate is ordinal with a greater effect expected for lower values of $s$ so we will be interested in testing for $\beta_1>\beta_2>...>\beta_{\Sb}$. Let $\B_0=\R^{\Sb}$ and
\begin{equation}\label{eq:beta-constrained-space}
    \B_{\Sb}=\{\beta\in \B_0: \beta_1>\beta_2>...>\beta_{\Sb}\}.
\end{equation}
We carry out analyses under models with $\beta\in \B_0$ (to check our prior expectation that $\beta\in \B_{\Sb}$) and then again with $\beta\in\B_{\Sb}$ (for best estimation with a well supported subjective prior). 

Let $Z=( Z^{(t)})_{t=B}^E$ and $U=( U^{(t)})_{t=B}^E$ and write $Z=Z(U,\beta;s)$ for the function defined in \eq~\ref{eq:additive-model-for-Z}. The parameters $U$ and $\beta$ replace $h$ in the likelihood via $h=h(Z(U,\beta;s))$. 

\subsubsection{Prior probability distributions}\label{sec:prior-prob-dbns}

We model the $K$-dimensional {authority}-process $U_j=(U^{(t)}_j)_{t=b_j}^{e_j}$ for actor $j\in \M$ as a vector autoregression of order one with times-series correlation $\theta$ and covariance $\Sigma^{(\rho)}$. In our model latent {authority}-features are correlated from one time step to another with a drift towards zero. The setup is illustrated in \fig~\ref{fig:hmm}.

Our prior for the process $U_j$ is independent over $j\in\M$ with correlation parameters $0\le \theta\le 1$ and $0\le \rho\le 1$. Let $\Sigma^{(\rho)}$ be a $K\times K$ covariance matrix with diagonal elements $\Sigma^{(\rho)}_{k,k}=1$ and off diagonal $\Sigma^{(\rho)}_{k,k'}=\rho$ for $k,k'\in [K]$. 
Let $0_K$ be a vector of $K$ zeros. For $j\in \M$ let
\begin{align}
 U^{(b_j)}_j &\sim N\left(0_K,\frac{\Sigma^{(\rho)}}{(1-\theta)^2}\right)\nonumber\\  
\intertext{and for $\epsilon^{(t)}_j$ iid for $t\in [b_j+1,e_j]$ and each $j$,}
 U^{(t)}_j &=\theta U^{(t-1)}_j + \epsilon^{(t)}_j\,, \hspace{.3cm} \epsilon^{(t)}_j \sim N(0,\Sigma^{(\rho)}) \, .  \label{eq:U-VAR-model}
\end{align}
Write $U_j\sim \mbox{VAR}^{(b_j,e_j)}_{K,\rho,\theta}(1),\ j\in\M$ for the vector auto-regression with density
\begin{equation}\label{eq:U-VAR-prior-densty}
    \pi(U_j|\rho,\theta)= N\left(U^{(b_j)}_j;0_K,\frac{\Sigma^{(\rho)}}{(1-\theta^2)}\right) \prod_{t=b_j+1}^{e_j} N\left(U^{(t)}_j;\theta U^{(t-1)}_j,\Sigma^{(\rho)}\right)\, .
\end{equation}

The parameter $\rho$ controls the prior depth-distribution. When $\rho\simeq 1$ paths $(k,U^{(t)}_{j,k})_{k=1}^K$ are relatively flat as entries are strongly correlated. Flat paths don't intersect, so there are more order relations and $d(h^{(t)})$ is larger. When $\rho$ is close to zero the paths are more jagged so there are few order relations and $d(h^{(t)})$ is smaller. 
We take as our prior $\rho\sim \mbox{Beta}(\gamma_1,\gamma_2,\gamma_3)$ with non-centrality parameter $\gamma_3$ and $\gamma=(\gamma_1,\gamma_2,\gamma_3)$ fixed. Prior simulation (\app~\ref{app:prior-simulation}, \fig~\ref{fig:prior-depth-dbns-curves-NF9-18}, left) showed $K=\lfloor \Sm/2\rfloor$ and $\gamma=(1,1/3,8)$ gave a prior on posets which is acceptably uniform on depth. Our prior on $\theta$ is uniform, $\theta\sim \mbox{Unif}(0,1)$. 
 
Our prior density for $\beta$ is $\pi_\beta(\beta)=N(\beta; 0,I_{\Sb})$. The variation between levels of a covariate equals the variation in {authority} over one time step: $\beta_j-\beta_{j'}$ has the same variance as the components of $U^{t}_j-U^{(t-1)}_{j}$ in \eq~\ref{eq:U-VAR-model}. We set $\Sigma^{(\rho)}_{k,k}=1$ as we can scale $\rho$ and the variance of $\beta$ to get the same distribution for $h$. It is necessary to take proper priors for $U$ and $\beta$. We discuss these priors further in \sec~\ref{sec:post-sum} in relation to identifiability.

%
%

\subsection{Prior summary and Posterior distribution}\label{sec:time-series-posterior}
We now summarise our generative model for the data. The model is depicted in \figs~\ref{fig:hmm-obs-only} and \ref{fig:hmm}.
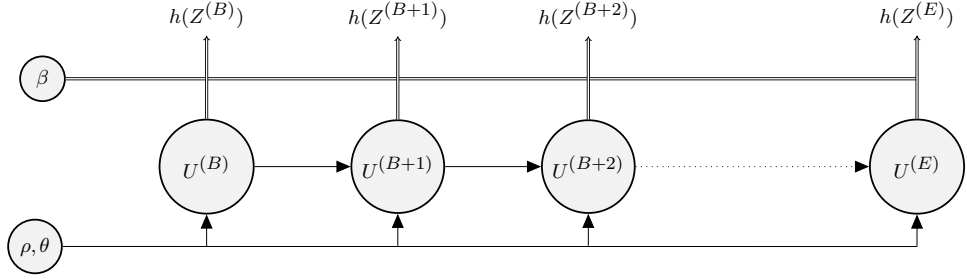
\begin{figure}
    \centering
    \[\scalebox{0.85}{\begin{tikzpicture}
      \node (t1) [param] {$\, \ \ U^{(B)}\ \ \,$};
      \node (t2) [param, right=of t1, xshift=1cm] {$U^{(B+1)}$};
      \node (t3) [param, right=of t2, xshift=1cm] {$U^{(B+2)}$};
      \node (t4) [right=of t3, xshift=0.3cm] {};
      \node (t5) [right=of t4, xshift=0.1cm] {};
      \node (T) [param, right=of t5, xshift=0.3cm] {$\, \ \ U^{(E)}\ \ \, $};

      \node (h1) [above=of t1, yshift=0.3cm] {$h(Z^{(B)})$};
      \node (h2) [above=of t2, yshift=0.3cm] {$h(Z^{(B+1)})$};
      \node (h3) [above=of t3, yshift=0.3cm] {$h(Z^{(B+2)})$};
      \node (hT) [above=of T, yshift=0.3cm] {$h(Z^{(E)})$};

      \node (b) [param, left=of h1, xshift=-1cm, yshift=-1cm] {$\beta$};
      \node[shape=coordinate] (hTb) [right=of b, xshift=12.3cm] {};
      \edge [double,-] {b} {hTb};
      
      \node (rt) [param, left=of t1, xshift=-1cm, yshift=-1.25cm] {$\rho,\theta$};
      \node[shape=coordinate] (t1rt) [below=of t1, yshift=0.5cm] {};
      \node[shape=coordinate] (t2rt) [below=of t2, yshift=0.5cm] {};
      \node[shape=coordinate] (t3rt) [below=of t3, yshift=0.5cm] {};
      \node[shape=coordinate] (tTrt) [below=of T, yshift=0.5cm] {};
      \edge [-] {rt} {tTrt};
      \edge [->] {t1rt} {t1};
      \edge [->] {t2rt} {t2};
      \edge [->] {t3rt} {t3};
      \edge [->] {tTrt} {T};
      
      \edge [->] {t1} {t2};
      \edge [->] {t2} {t3};
      \edge [dotted,->] {t3} {T};
      \edge [double,-{implies}] {t1} {h1};
      \edge [double,-{implies}] {t2} {h2};
      \edge [double,-{implies}] {t3} {h3};
      \edge [double,-{implies}] {T} {hT};
    \end{tikzpicture} }\]
    \caption{Hidden layer of HMM: a latent process $U^{(t)}$ determines a sequence of posets $h^{(t)}=h(Z(U^{(t)},\beta,s)$ for $t\in [B,E]$. The $U$-prior has hyperparameters $\rho$ and $\theta$, and double lines indicate implication $\Rightarrow$. This figure connects with \fig~\ref{fig:hmm-obs-only} at the top row of nodes $h^{(t)}=h(Z^{(t)})$ to give the generative model for the data.} 
    \label{fig:hmm}
\end{figure} 
The data are the lists $y$. We condition on knowledge of the uncertainty ranges $\tau^{\pm}$, the covariate data $s$, the actor activities $\M_t,\ t\in [B,E]$ and the prior hyper-parameters $\gamma,\delta>0$ and $K\ge 1$. The generative model is,
\begin{align}
    \rho&\sim \mbox{Beta}(\gamma), 
    && \quad\mbox{with $\gamma=(1,1/3,8)$ unless stated,}\nonumber\\
    \theta&\sim \mbox{Unif}(0,1), 
    && \nonumber\\
    U_j&\sim \mbox{VAR}^{(b_j,e_j)}_{K,\rho,\theta}(1),
    &&\quad \mbox{with $K=\lfloor \Sm/2\rfloor$ unless stated, and $U$ in \eq~\ref{eq:U-VAR-model},}\nonumber\\
    &&&\quad \mbox{$U^{(t)}_j$ iid for}\ j\in\M\ \mbox{and defined for $t: j\in\M_t$ and} \nonumber\\
    \beta&\sim N(0,I_{\Sb}),
    &&\quad \mbox{either $\beta\in\B_0$ or constrained $\beta\in\B_{\Sb}$ per \eq~\ref{eq:beta-constrained-space}}. \nonumber\\
\intertext{These collectively determine the partial-order prior via}
    Z&=Z(U,\beta;s),
    &&\quad \mbox{from \eq~\ref{eq:additive-model-for-Z}, giving $Z=(Z^{(t)})_{t=B}^E$, and}\nonumber\\
    h&=h(Z(U,\beta;s)),
    &&\quad \mbox{from \eq~\ref{eq:z-to-h-mapping}, giving $h=(h^{(t)})_{t=B}^E$}; \label{eq:prior-for-h}
    \intertext{Priors for the remaining observation model parameters are}
    \tau_i&\sim U\{\tau^-_i,\tau^+_i\},
    &&\quad \mbox{independently for}\ i=1,...,N,\ \mbox{and} \nonumber\\
    p&\sim \mbox{Beta}(1,\delta); 
    &&\quad \mbox{with $\delta\ge 1$ and $\delta=9$ by default.}\nonumber\\
    \intertext{Finally the data are realised}
    y_i&\sim p(\cdot|h^{(\tau_i)}[o_i],p),
    &&\quad \mbox{independently for $i\in \I$},\label{eq:observation-model}
\end{align}
 using the distribution for $y_i$ given in \eq~\ref{eq:lkd-noisy-down} or \ref{eq:lkd-noisy-up}.
The joint posterior distribution is 
\begin{align}\label{eq:posterior}
\pi(\rho,\theta,U,\beta,\tau,p|y) \propto \pi(\rho,\theta,\beta,\tau,p)  \pi(U|\rho,\theta)p(y|U,\beta,\tau,p) \, , 
\end{align} 
where $p(y|U,\beta,\tau,p)=p(y|h(Z(U,\beta;s)),\tau,p)$ in \eq~\ref{eq:lkd-timeseries} and
    $\pi(U|\rho,\theta)=\prod_{j\in\M} \pi(U_j|\rho,\theta)$ 
with $\pi(U_j|\rho,\theta)$ given in \eq~\ref{eq:U-VAR-prior-densty}. 
The model without covariates or time is set out in \app~\ref{app:fixed-time-posterior}. 

\section{Properties of priors}\label{sec:priors-properties}

Our partial order prior is marginally consistent and expresses any partial-order time series in $\H^{(B,E)}$.
\app~\ref{app:prior-simulation} explores the prior using simulation.

\subsection{Marginal consistency}\label{sec:marginal-consistent-prior} 

Marginal consistency is a relationship between members of a family of distributions. Dropping time, suppose that for each non-empty $O\subseteq [m]$ we write down a prior $\pi_{\H_O}(G)$ on $G\in \H_O$. These priors are marginally consistent if for each $O$ and $H\sim \pi_{\H_{[m]}}$ we have $H[O]\sim \pi_{\H_O}$ for the distribution of the suborder, that is,
\begin{equation}\label{eq:PO-prior-marg-con}
\pi_{\H_O}(G)=\sum_{H\in \H_{[m]}} \pi_{\H_{[m]}}(H)\mathbb{I}_{G=H[O]}.
\end{equation}
The point here is that we define $\pi_{\H_O}$ for each $O\subseteq [m]$ and then verify \eq~\ref{eq:PO-prior-marg-con}. It can fail to hold if we write down a distribution $\pi_{\H_O}$ for each $O$ without care. For example, the uniform distributions $\mbox{Unif}(\H_{O}),\ O\subseteq [m]$ are not marginally consistent. There are three partial orders for $m=2$ and nineteen for $m=3$, so if $H\sim \mbox{Unif}(\H_{[3]})$ then we won't have $H[(1,2)]\sim \mbox{Unif}(\H_{[2]})$ as we can't group nineteen posets into three equal-sized groups.

\cite{winkler1985random} shows marginal consistency when the columns of $Z^{(t)}$ are independent. We extend this to our more general setting.
We take $\beta=0_{\Sb}$, so no covariates, as we cannot expect marginal consistency when we have covariate information which explicitly breaks it. Let $\pi_{\H^{(B,E)}}(h|\beta=0_{\Sb})$ be the marginal prior for $h\in\H^{(B,E)}$ when $\beta=0_{\Sb}$ (see \eq~\ref{eq:h-marginal-discrete-prior-beta0} in \app~\ref{app:marginal-consistency}). Let $\H^{(B,E)}_{-j}$ be the set of all partial order time-series with $j\in\M$ removed and for $h\in \H^{(B,E)}$ let $h_{-j}$ be the suborder obtained by removing $j$.

\begin{proposition}\label{prop:marginal-consistent}
Let $g\in \H^{(B,E)}_{-j}$ be given. Our priors are marginally consistent, that is
\[
\pi_{\H^{(B,E)}_{-j}}(g|\beta=0_{\Sb})=\sum_{h\in \H^{(B,E)}} \mathbb{I}_{g=h_{-j}} \pi_{\H^{(B,E)}}(h|\beta=0_{\Sb}).
\]
for each $j\in \M$. [See \app~\ref{app:marginal-consistency} for proof]
\end{proposition}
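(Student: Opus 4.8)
The plan is to split the argument into a deterministic fact about the map $Z\mapsto h(Z)$ and a probabilistic fact about the factorisation of the prior over actors; the interplay of these two is exactly what makes ``remove at the end'' and ``remove from the beginning'' agree, and it turns out to be insensitive to the across-feature and across-time dependence we have introduced.

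First I would establish the pointwise identity that, for any real matrix $Z^{(t)}$ with rows indexed by $\M_t$ and any $j$, the suborder of $h(Z^{(t)})$ on $\M^{-j}_t$ equals the partial order $h(Z^{(t)}_{-j})$ built directly from the matrix with row $j$ deleted. This is immediate from \eq~\ref{eq:z-to-h-mapping}: for $i,i'\in\M^{-j}_t$ the relation $\e{i}{i'}\in h(Z^{(t)})$ depends only on rows $i$ and $i'$ of the $Z$-matrix, so deleting row $j$ alters neither the vertex set $\M^{-j}_t$ nor any retained edge (and when $j\notin\M_t$ we have $\M^{-j}_t=\M_t$ and there is nothing to do). Applying this at every $t\in[B,E]$ and using $Z^{(t)}_j=U^{(t)}_j$ (valid since $\beta=0_{\Sb}$), we get that $h_{-j}$ is a deterministic function of $U$ which in fact depends on $U$ only through $U_{-j}=(U_{j'})_{j'\in\M\setminus\{j\}}$, and that this function is precisely the ``remove from the beginning'' map $U_{-j}\mapsto h(Z_{-j})$.

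Next, for fixed $(\rho,\theta)$, I would use that the authority prior factorises over actors, $\pi(U\mid\rho,\theta)=\prod_{j'\in\M}\pi(U_{j'}\mid\rho,\theta)$ (\sec~\ref{sec:prior-prob-dbns}), to marginalise out $U_j$: since $h_{-j}$ is a function of $U_{-j}$ alone, its law under $\pi(\cdot\mid\rho,\theta)$ equals the law of $h(Z_{-j})$ when the rows $(U_{j'})_{j'\neq j}$ are drawn from $\prod_{j'\neq j}\pi(U_{j'}\mid\rho,\theta)$ on the ground sets $\M^{-j}_t$, which is exactly $\pi_{\H^{(B,E)}_{-j}}(\cdot\mid\rho,\theta,\beta=0_{\Sb})$. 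Integrating both sides against $\pi(\rho,\theta)$ over $[0,1]^2$ -- justified by Tonelli since all integrands are nonnegative -- and recalling \eq~\ref{eq:h-marginal-discrete-prior-beta0} then gives the stated identity
\[
\pi_{\H^{(B,E)}_{-j}}(g\mid\beta=0_{\Sb})=\sum_{h\in\H^{(B,E)}}\mathbb{I}_{g=h_{-j}}\,\pi_{\H^{(B,E)}}(h\mid\beta=0_{\Sb}).
\]

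I do not expect a serious obstacle. The substantive point is that marginal consistency over actors is a statement about integrating out the single row $U_j$, which the product form of the prior renders trivial regardless of the correlations within a row (across features) or across time, so the argument of \citet{winkler1985random} for independent columns carries over essentially unchanged. The only place needing care is the bookkeeping of ground sets in the deterministic identity, in particular handling actors $j$ whose activity interval $[b_j,e_j]$ is a strict subset of $[B,E]$, so that $\M^{-j}_t$ is the correct vertex set on both sides at every $t$.
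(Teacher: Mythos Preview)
Your proposal is correct and matches the paper's argument in substance: both first establish the deterministic identity $h(Z^{(t)})[\M^{-j}_t]=h(Z^{(t)}_{-j})$ from the pairwise nature of \eq~\ref{eq:z-to-h-mapping}, then exploit the product form $\pi(U\mid\rho,\theta)=\prod_{j'}\pi(U_{j'}\mid\rho,\theta)$ to integrate out $U_j$, and finally mix over $(\rho,\theta)$. The only cosmetic difference is that the paper packages the probabilistic step as a Chinese-Restaurant-style sequential construction (add actors one at a time, make $j$ last, stop one step early), whereas you state the marginalisation directly; your formulation is if anything a bit cleaner and makes explicit why the within-row and across-time dependence is irrelevant.
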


Proposition~\ref{prop:marginal-consistent} establishes marginal consistency for removing one element of $\M$. Consistency for more general marginals follows by removing elements one at a time. See \app~\ref{app:marginal-consistency} for remarks on settings where marginal consistency holds with covariate effects retained.

\subsection{Universal representation}\label{sec:prior-properties-Krho}

We claimed in \sec~\ref{sec:latent-variable-param} that, if $K\ge \lfloor \Sm/2\rfloor$ with $\Sm$ defined in \eq~\ref{eq:S} then any poset $h^{(t)}\in \H_{\M_t}$ can be represented by some $m_t\times K$ matrix $Z^{(t)}$.
We restore $\beta\in \B_0$ or $\B_{\Sb}$ and covariate effects.
Let
\begin{equation}\label{eq:h-marginal-discrete-prior}
    \pi_{\H^{(B,E)}}(h)=\int_{\R^{\Sb}} \pi_{\H^{(B,E)}}(h|\beta)\pi(\beta)\, d\beta
\end{equation}
be the full marginal prior with variable $\beta$, extended from \eq~\ref{eq:h-marginal-discrete-prior-beta0}.
\begin{proposition}\label{prop:basis-Z-for-h}
Suppose $\min_{t\in [B,E]} m_t \ge 4$. The probability $\pi_{\H^{(B,E)}}(h)$ in \eq~\ref{eq:h-marginal-discrete-prior}, given by the generative model \eq~\ref{eq:prior-for-h} with $K\ge \lfloor \Sm/2\rfloor$, assigns a positive probability mass $\pi_{\H^{(B,E)}}(h)>0$ to every time-series $h\in\H^{(B,E)}$. [See \app~\ref{app:basis-Z-for-h} for proof.]
\end{proposition}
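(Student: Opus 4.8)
The plan is to exhibit, for an arbitrary target series $h\in\H^{(B,E)}$, an open set of latent configurations $(U,\beta)$ on which the deterministic map $h=h(Z(U,\beta;s))$ returns exactly $h$, and then to check that the prior puts positive mass on that set. The only genuinely combinatorial input is the dimension theorem of \citet{Hiraguchi51} (see also \citet{Bogart73}): every partial order on $n\ge 4$ elements is the intersection of at most $\lfloor n/2\rfloor$ total orders. Since $\min_t m_t\ge 4$ and $K\ge\lfloor\Sm/2\rfloor\ge\lfloor m_t/2\rfloor$ for every $t$, each $h^{(t)}\in\H_{\M_t}$ can be written as $h^{(t)}=L^{(t)}_1\cap\cdots\cap L^{(t)}_{d_t}$ with $d_t\le K$ and each $L^{(t)}_k$ a total order on $\M_t$. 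Set $\tilde Z^{(t)}_{j,k}$ equal to the rank of $j$ in $L^{(t)}_k$ for $k\le d_t$, and equal to the rank of $j$ in some fixed linear extension $\sigma^{(t)}\in\L[h^{(t)}]$ for $d_t<k\le K$. The padding columns do not shrink the intersection, because $h^{(t)}\subseteq\sigma^{(t)}$, so $h(\tilde Z^{(t)})=h^{(t)}$. Crucially, every column of $\tilde Z^{(t)}$ has pairwise distinct entries, so each of the finitely many comparisons $\tilde Z^{(t)}_{i,k}>\tilde Z^{(t)}_{j,k}$ (and each negation) that determine $h(\tilde Z^{(t)})$ through \eq~\ref{eq:z-to-h-mapping} holds strictly.

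Next I would argue openness. Because $h(\cdot)$ is specified by finitely many strict inequalities and $\tilde Z=(\tilde Z^{(t)})_{t=B}^E$ satisfies all of them strictly, there is $\varepsilon>0$ such that $h(Z)=h$ for every $Z$ within $\varepsilon$ of $\tilde Z$ entrywise. The map $(U,\beta)\mapsto Z(U,\beta;s)$ of \eq~\ref{eq:additive-model-for-Z} is affine and onto the space of $Z$-matrices, with $Z^{(t)}_{j,k}=U^{(t)}_{j,k}+\beta_{s_{t,j}}$. Fix $\beta^\ast$ in the interior of the support of $\pi_\beta$ (take $\beta^\ast=0_{\Sb}$ when $\beta\in\B_0$, and any strictly decreasing vector when $\beta\in\B_{\Sb}$) and define $U^\ast$ by $U^{\ast(t)}_j=\tilde Z^{(t)}_j-1_K\beta^\ast_{s_{t,j}}$ for every $j\in\M$ and $t\in[b_j,e_j]$. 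Then $\|Z(U,\beta;s)-\tilde Z\|_\infty\le\|U-U^\ast\|_\infty+\|\beta-\beta^\ast\|_\infty$, so the event $\{h(Z(U,\beta;s))=h\}$ contains the open ball $\mathcal N$ of radius $\varepsilon/2$ about $(U^\ast,\beta^\ast)$.

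Finally I would bound the prior mass of $\mathcal N$ below. Restrict $\rho$ to a compact interval $[\rho_0,\rho_1]\subset[0,1)$ and $\theta$ to $[\theta_0,\theta_1]\subset[0,1)$; by $\rho\sim\mbox{Beta}(1,\gamma)$ and $\theta\sim\mbox{Unif}(0,1)$ these independent events have positive probability. For such $(\rho,\theta)$ the covariance $\Sigma^{(\rho)}$ (diagonal entries $1$, off-diagonal $\rho$) has eigenvalues $1-\rho>0$ and $1+(K-1)\rho>0$, hence is positive definite, and the stationary initial covariance in \eq~\ref{eq:U-VAR-prior-densty} is likewise non-degenerate; therefore $\pi(U\mid\rho,\theta)=\prod_{j\in\M}\pi(U_j\mid\rho,\theta)$ is a strictly positive, jointly continuous Gaussian density, and $\pi_\beta$ is continuous and strictly positive at $\beta^\ast$. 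Since $\mathcal N$ is a fixed bounded set and $[\rho_0,\rho_1]\times[\theta_0,\theta_1]$ is compact, $(\rho,\theta)\mapsto\int_{\mathcal N}\pi(U\mid\rho,\theta)\pi_\beta(\beta)\,dU\,d\beta$ is continuous and strictly positive on this range, hence bounded below by a positive constant; integrating over $\rho,\theta$ and using \eq~\ref{eq:h-marginal-discrete-prior} together with the definition of $\pi_{\H^{(B,E)}}(h\mid\beta)$ gives $\pi_{\H^{(B,E)}}(h)>0$.

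The one delicate point is the appeal to Hiraguchi's bound and the genericity it supplies: this is exactly why the hypothesis $\min_t m_t\ge 4$ is needed, since for $m_t\in\{2,3\}$ the bound $\lfloor m_t/2\rfloor$ can fall below the order dimension (for instance the three-element poset with one maximum over two incomparable elements has dimension $2>\lfloor 3/2\rfloor$), so a model with only $K=\lfloor\Sm/2\rfloor$ columns might be unable to represent some suborders. Everything after the combinatorial step is routine real analysis --- continuity of the affine map, non-degeneracy of the Gaussian prior over the compact $(\rho,\theta)$ range, and positivity of the Beta and uniform densities --- and the many-to-one character of $(U,\beta)\mapsto Z$ is harmless because that map is onto.
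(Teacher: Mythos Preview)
Your proof is correct and follows essentially the same route as the paper: invoke Hiraguchi's bound to get $\dim(h^{(t)})\le K$, build an explicit $Z$ realising $h$ by taking column rank-orders from a total-order decomposition (padding spare columns with a linear extension), observe that the realising set is open because $h(\cdot)$ is cut out by strict inequalities, and conclude from strict positivity of the Gaussian prior density. Your version is in fact more careful than the paper's in handling the $(U,\beta)\to Z$ map, the constrained case $\beta\in\B_{\Sb}$, and the restriction of $(\rho,\theta)$ to a compact sub-rectangle to guarantee uniform non-degeneracy of the covariances.
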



\section{Computational methods}\label{sec:comp-methods}


\subsection{Markov Chain Monte Carlo}
We implemented an MCMC algorithm targeting $\pi(\rho,\theta,U,\beta,\tau,p|y)$ in \eq~\ref{eq:posterior}. Each update is a simple Metropolis-Hastings MCMC step. The updates are summarised in \app~\ref{app:mcmc}. We tested the software evaluating the likelihood by simulating synthetic data and checking list proportions matched their probability in the likelihood. We also recover the true parameters of synthetic data (see \fig~\ref{app:synth-poHB22aRS4}).
We run the MCMC producing $L$ samples (after burn-in and thinning) $\rho^{(l)},\theta^{(l)}$, $U^{(t,l)}=(U^{(t,l)}_j)_{j\in \M_t}$,
$\beta^{(l)},\tau^{(l)}$ and $p^{(l)}$ for $l=1,...,L$.
This determines samples for
$Z^{(t,l)}=(Z^{(t,l)}_j)_{j\in \M_t}$,
with 
\[
Z^{(t,l)}_j= U^{(t,l)}_j+1_{K}\beta^{(l)}_{s_{t,j}}
\]
and similarly $h^{(t,l)}=h(Z^{(t,l)})$, for $t=B,...,E$ and $l=1,...,L$.

The likelihood is not differentiable in $U$ and $\beta$ ruling out Hamiltonian MCMC. We tried updating sections of the time series in parallel (exploiting the Markov structure), halving the runtime at best. The bottleneck is ultimately the likelihood evaluation which is $\#P$-complete in list length. When lists are short the method scales well in the number of actors and the number of lists. 
The latent-parameter dimension $\dim(U)=K\times \sum_t m_t$ becomes the limiting factor (in \sec~\ref{sec:poHB2aRS6}, $\dim(U)=13,453$).
These challenges motivated work on VSP-models in \citep{jiang23} which scale to lists with hundreds of actors.

\subsection{Posterior summaries}\label{sec:post-sum}

Besides plotting marginals for individual parameters $\rho,\theta,p$ and $\beta$, we report selected summary statistics computed on the MCMC output. These are the consensus poset (which displays relations with posterior support greater than one half, see \app~\ref{app:posterior-summary-stats}) and the Bayes factor for the first $S'$ of the $\Sb$ covariate effects to be ordered, $\B_{S'}=\{\beta\in \B_0: \beta_1>\beta_2>...>\beta_{S'}\}$ and given by
$
B_{S',0}={p(y|\beta\in \B_{S'})}/{p(y|\beta\in \B_0)}.
$
Formulae for estimating these quantities are given in \app~\ref{app:posterior-summary-stats}.

\subsubsection{Non-identifiability of authority and seniority effects}
\label{sec:non-identifiable-UbetaPO} 
We are interested in separating the relative {authority} $U^{(t)}_j$ of a actor from the status $Z^{(t)}_j$. There are two sources of non-identifiability. The latent variables $U$ have a label swapping symmetry: the posterior is invariant under permutation of the columns of $U^{(t)}$ if the same permutation is applied at each $t\in [B,E]$. One simple summary which is invariant under column permutation is the row-average. This gives the estimated average posterior authority for actor $j$ at time $t$,
\begin{equation}\label{eq:post-mean-U-process}
\bar U^{(t)}_j=\frac{1}{L}\sum_{l=1}^L \frac{1}{K}\sum_{k=1}^K U^{(t,l)}_{j,k}.
\end{equation}
A plot of $\bar U^{(t)}_j$ against $t$ shows $j$'s changing authority. Average status $\bar Z^{(t)}_j$ is defined similarly.

The second source of non-identifiability is shift invariance of $h^{(t)}$ under
\begin{align}\label{eq:non-identifiable-shift}
    U^{(t)}_j&\to U^{(t)}_j+1_Ku^{(t)},\ t\in [B,E],\ j\in \M_t,\\
    \beta_r&\to \beta_r+c,\ r\in [\Sb],
\end{align}
where $u^{(t)}\in \R$ is a common shift applied to each actor, which may vary over $t$, and $c\in \R$ is common shift applied to all effects. The proper $U$ and $\beta$ priors shrink these shifts towards zero. We project these degrees of freedom out by subtracting the averages, $\bar U^{(t)}_j\to \bar U^{(t)}_j-\M_t^{-1}\sum_j \bar U^{(t)}_j$ and $\beta_r\to \beta_r-\Sb^{-1}\sum_{r'}\beta_{r'}$, before computing the summary statistics and plotting. A similar issue arises in the Plackett-Luce time-series model in \app~\ref{app:PL-time-series-model}.

\section{Results}\label{sec:main-results}

\subsection{Fitted models}\label{sec:fitted-models-in-runs}

Prior distributions are summarised in \sec~\ref{sec:time-series-posterior}. 
Unless otherwise indicated we present results for the likelihood $p_{(D)}$ in \eq~\ref{eq:lkd-noisy-down} as clerks wrote lists from top down. Results are near-identical with $p_{(U)}$ in \eq~\ref{eq:lkd-noisy-up}. Experiments showed fractionally lower estimated noise probabilities $p$ for $p_{(U)}$ than $p_{(D)}$ (see \fig~\ref{fig:poHB1-14-16-rho-theta-p}). This suggests the $p_{(U)}$ model is a slightly better fit, but there is little difference. 
The greatest number of active bishops is $\Sm=22$ so we take $K=11$ (see end of \sec~\ref{sec:latent-variable-param}) for the dimension of the latent status feature vectors $Z^{(t)}_{j},\ j\in \M, t\in [B,E]$ in our main analyses in \secs~\ref{sec:poHB1aRS6} and \ref{sec:poHB2aRS6}. We check robustness by taking $K=2$ and $K=22$ in \sec~\ref{app:K-equals-18-posterior} of \app~\ref{app:further-results}. The $\beta$-dimension is $\Sb=21$ (less than $\Sm$ as there is a tie at seniority-rank 21 in 1133, the year with the most active bishops).

\subsection{Analysis with unconstrained seniority effects}\label{sec:poHB1aRS6}

We begin by presenting our results for the full data set defined in \sec~\ref{sec:data-witness-lists}. We first check that we see declining seniority effect at lower seniority, so we do not constrain the seniority effects to be ordered and take $\beta\in \B_0$. 

Traces in \fig~\ref{fig:mcmc-poHB1aRS6} in \app~\ref{app:poHB1aRS6} for MCMC targeting the posterior in \sec~\ref{sec:time-series-posterior} show convergence. Marginal posterior densities from two independent runs are shown in \fig~\ref{fig:rho-th-p-poHB1aRS6} and are near-identical. The correlation $\rho$ of features in $U^{(t)}_j,\ j\in\M$ at each fixed time $t\in [b_j,e_j]$ is close to one, supporting relatively deeper posets. The time-series correlation parameter $\theta$ is close to one, indicating strong serial correlation between $U^{(t)}_j$ and $U^{(t+1)}_j$, and therefore also $h^{(t)}$ and $h^{(t+1)}$. Finally, the error probability $p$, which controls the extent to which lists $y_i$ may depart from the linear extensions $\L[h^{(\tau_i)}]$, is small, as we would expect if the poset model captures the variation in lists. The prior for $p$ has $\delta=9$, favoring small $p$. We checked robustness to this choice: \fig~\ref{fig:rho-th-p-poHB1aRS6} superimposes the posterior density when $\delta=1$ (uniform, estimated by importance sampling); the shift from the $\delta=9$ posterior is barely discernible. 
\begin{figure}
    \centering
    \hspace*{-0.3in}\includegraphics[width=5.5in,trim=0.2in 0.1in 0.15in 0in,clip]{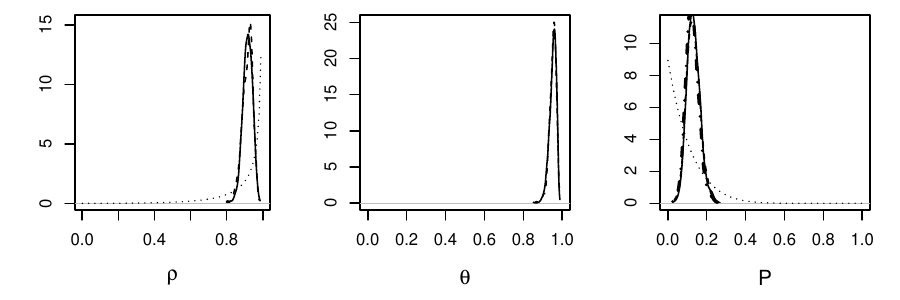}
    \caption{Posterior densities for $\rho,\theta$ and $p$ from the unconstrained seniority effects analysis in \sec~\ref{sec:poHB1aRS6}. Two independent MCMC runs are shown (solid and dashed). The dotted line in the $\rho$ and $p$ graphs is their prior. The prior for $\theta$ is uniform. The thick dash-dot curve in the $p$-density is the posterior density for a uniform prior on $p$.}
    \label{fig:rho-th-p-poHB1aRS6}
\end{figure}

\begin{figure}
    \centering
    \hspace*{-0.2in}\begin{tabular}{cc}
    {\includegraphics[width=2.5in,trim=1.2in 3in 1.5in 3.6in,clip]{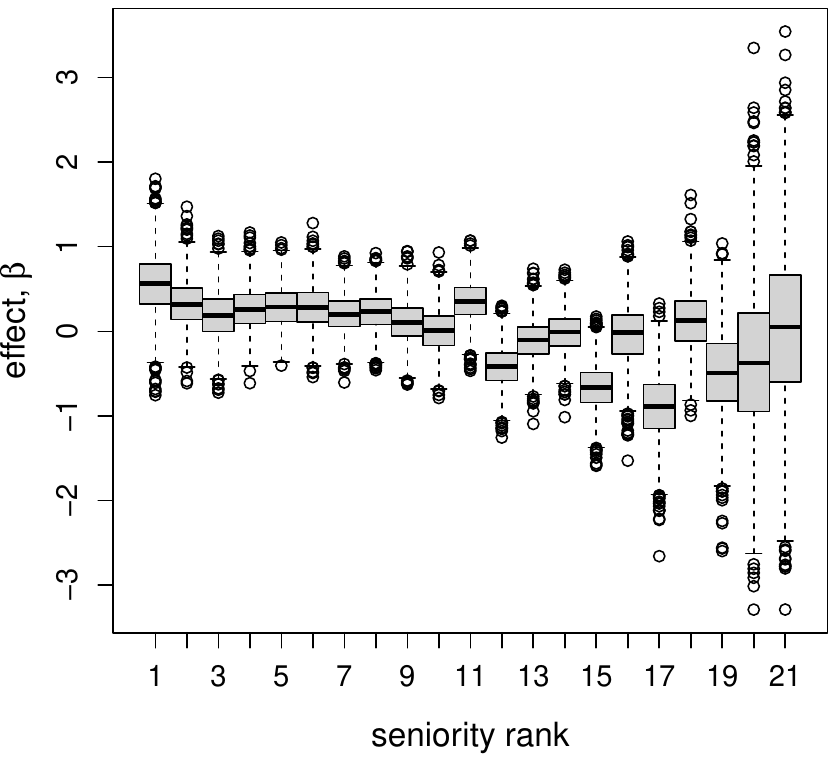}}&\quad
    \includegraphics[width=2.5in,trim=1.2in 3in 1.5in 3.6in,clip]{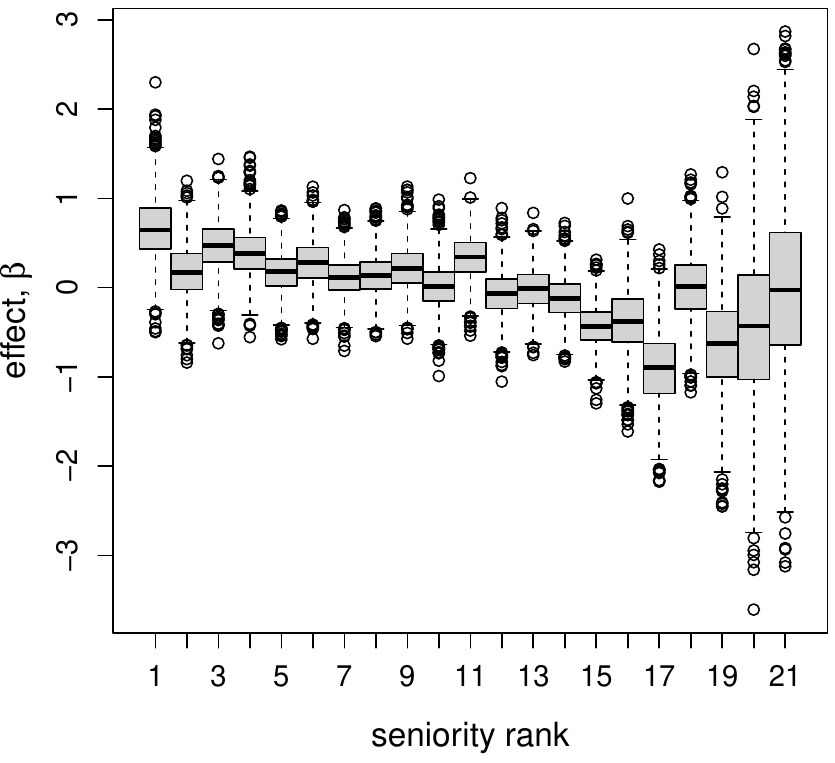}
    \end{tabular}
    \caption{(Left) Marginal posterior distributions of seniority effect parameters from the unconstrained seniority effects analysis in \sec~\ref{sec:poHB1aRS6} with likelihood $p_{(D)}$ in \eq~\ref{eq:lkd-noisy-down}. (Right) As left with likelihood $p_{(U)}$ in \eq~\ref{eq:lkd-noisy-up}. See \fig~\ref{fig:beta-poHB1aRS6-PL-version} in \app~\ref{app:PL-time-series-model} for the corresponding plot for the Plackett-Luce time-series model comparison. }
    \label{fig:beta-poHB1aRS6}
\end{figure}
In \fig~\ref{fig:beta-poHB1aRS6} we plot marginal posterior $\beta$ distributions in the posterior with likelihood $p_{(D)}$ (noise is random upward displacement) and $p_{(U)}$ (random downward displacement) respectively. There is a clear downward trend with increasing seniority-rank value (ie, lower seniority) as we expect. When the rank is large (18-21) we have few instances of bishops with that rank (see \fig~\ref{fig:seniority-rank-stuff}), and distributions trend back to the prior. Under both observation models $p_{(U)}$ and $p_{(D)}$, the effect $\beta_{11}$ is ``out of order''. This is because several bishops who spent several years at seniority-rank 11 (William Giffard, bishop of Winchester, Richard de Belmeis I, bishop of London and Henry de Blois, Bishop of Winchester) were connected with royalty, so it was their authority and not their seniority which pushed them up the lists. This is best modeled by imposing the seniority-effect order constraint $\beta\in\B_{\Sb}$ as we do in the next section.
In \sec~\ref{app:bayes-factor-decreasing} we test $\beta\in\B_\Sb$ against the unconstrained model $\beta\in\B_0$ by estimating the Bayes factor, using a Savage-Dickey estimator. We find clear evidence in favor of the constraint. In summary we see in this first analysis the structures we anticipated.

\subsection{Analysis with constrained seniority effects}\label{sec:poHB2aRS6}

The assumption of decreasing seniority effect with decreasing seniority rank is supported on historical and statistical grounds so we now impose the constraint $\beta\in \B_{\Sb}$. We omit the $\rho,\theta$ and $p$ posterior densities as they are essentially unchanged from \fig~\ref{fig:rho-th-p-poHB1aRS6}. Marginal posterior distributions for the unknown list dates $\tau_i,\ i\in \I$ are given in \fig~\ref{fig:tau-poHB2aRS6} in \app~\ref{app:poHB2aRS6}. %

In \fig~\ref{fig:POmcmc-poHB2aRS6} we plot posets $h^{(t,L)}, t\in [1134,1136]$ from one MCMC sample state (the final sample in the MCMC sample output $(h^{(t,l)})_{t\in [B,E]},\ l\in[L]$).
\begin{figure}
    \centering
    \hspace*{-0.2in}{\includegraphics[width=5.5in,trim=0in 0.5in 0in 0.3in, clip]{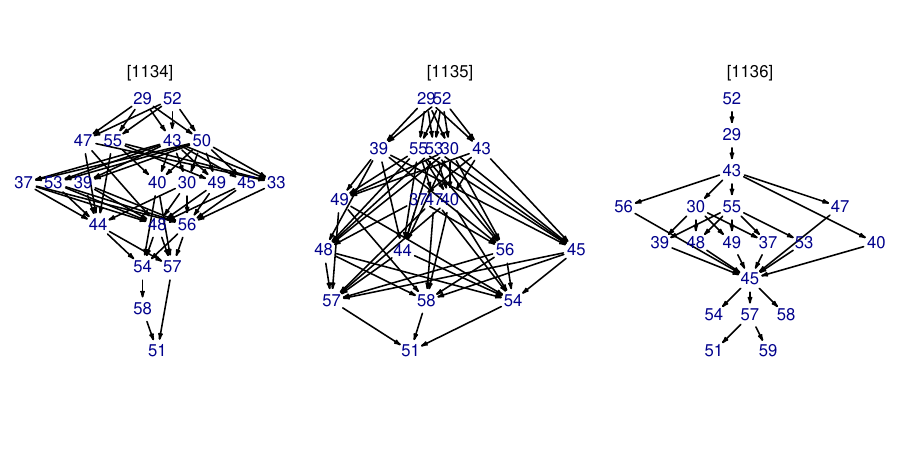}}
    \caption{(constrained seniority effects analysis of \sec~\ref{sec:poHB2aRS6}) Three of seventy-six years from one MCMC-sampled state, $h^{(t,L)},\ t\in \{1134, 1135,1136\}$. Vertex numbers correspond to bishops names in \fig~\ref{fig:bishop-names} in \app~\ref{app:poHB2aRS6}.}
    \label{fig:POmcmc-poHB2aRS6}
\end{figure}
These three years bracket 1135 when Stephen became king. The number and length of lists in this period is relatively large (see \fig~\ref{fig:list-dates-lengths}). 
In \fig~\ref{fig:CPO-poHB2aRS6} 
\begin{figure}
    \centering
    \hspace*{-0.2in}\includegraphics[width=5.25in,trim=0in 0.2in 0in 0.0in, clip]{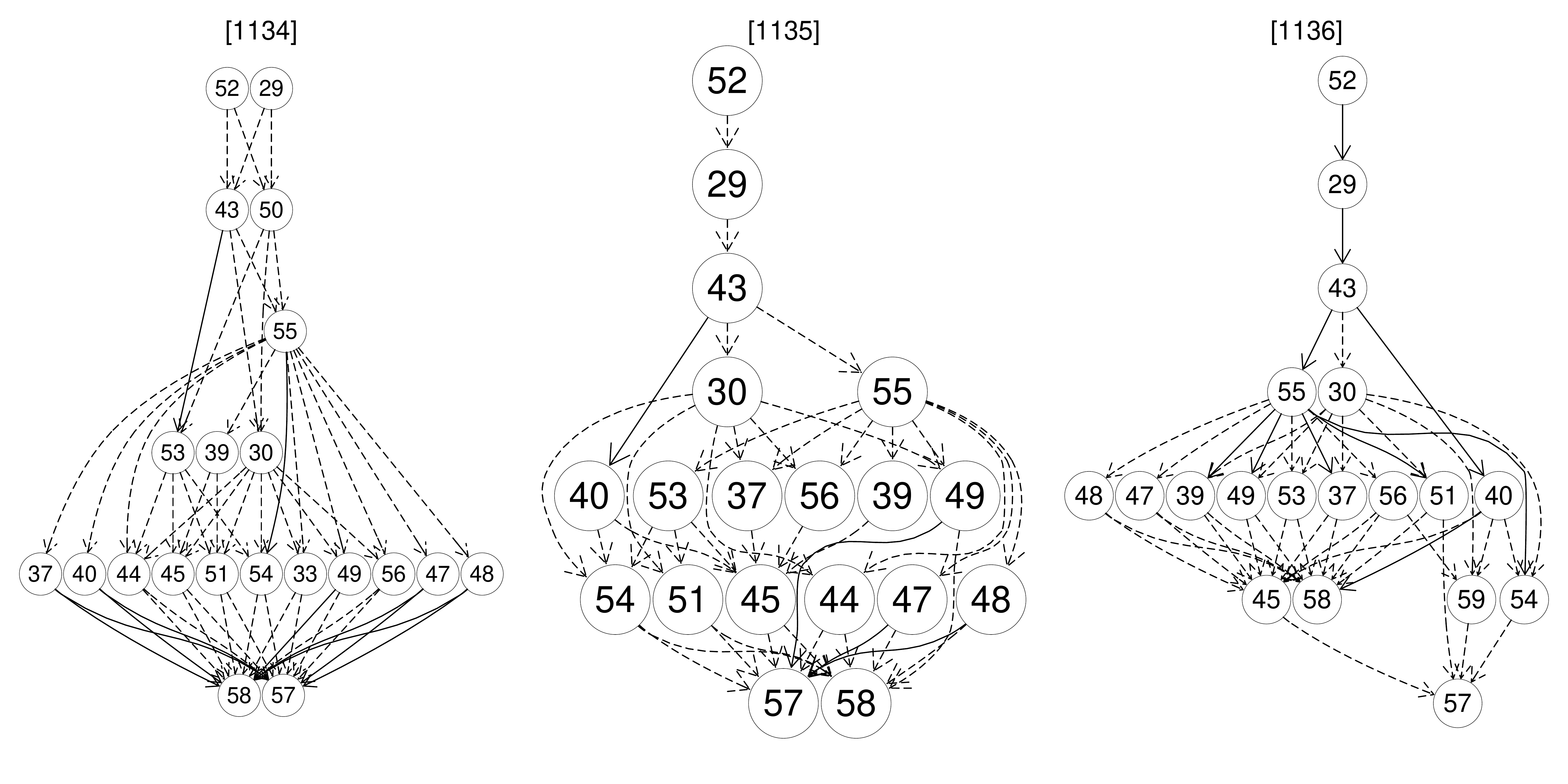}
    \caption{(constrained-effects model \sec~\ref{sec:poHB2aRS6}) Consensus posets for 1134-1136 (transitive reductions). For closures see \app~\ref{app:poHB2aRS6} \fig~\ref{fig:CPO-poHB2aRS6-close}. Dashed/solid edges have posterior support greater than $0.5/0.9$. Numbering as \fig~\ref{fig:POmcmc-poHB2aRS6}.}
    \label{fig:CPO-poHB2aRS6}
\end{figure}
we plot posterior consensus posets $\bar h^{(t)}$ estimated at the same years (transitive reductions for ease of viewing, see \app~\ref{app:poHB2aRS6} for all years). The transtive closures in \fig~\ref{fig:CPO-poHB2aRS6-close} in \app~\ref{app:poHB2aRS6} have many more strongly supported edges as a chain of weakly supported relations give strongly supported relations from chain head to tail.

In \fig~\ref{fig:Zhat-poHB2aRS6} in \app~\ref{app:poHB2aRS6} we plot the evolving mean status values $\bar Z^{(t)}_j$ for each bishop as a function of time. Bishops are grouped by diocese.
These curves have a ``sawtooth'' pattern, as the ``status'' measure $Z$ trends up through the tenure of a bishop as their seniority increases. It drops down when a new bishop enters the diocese with low seniority. 
By contrast the curves in \fig~\ref{fig:Uhat-poHB2aRS6} show the evolving mean {authority} values $\bar U^{(t)}_j$ for each bishop. These curves are flatter as the effect of seniority is removed. Nigel, bishop of Ely is revealing. His status in \fig~\ref{fig:Zhat-poHB2aRS6} is fairly flat. This is because his mean {authority} $\bar U^{(t)}_j$ declined as his seniority increased. 
\begin{figure}
    \centering
    \hspace*{-0.8in}\includegraphics[width=6.75in,trim=0in 0.25in 0.25in 0.25in, clip]{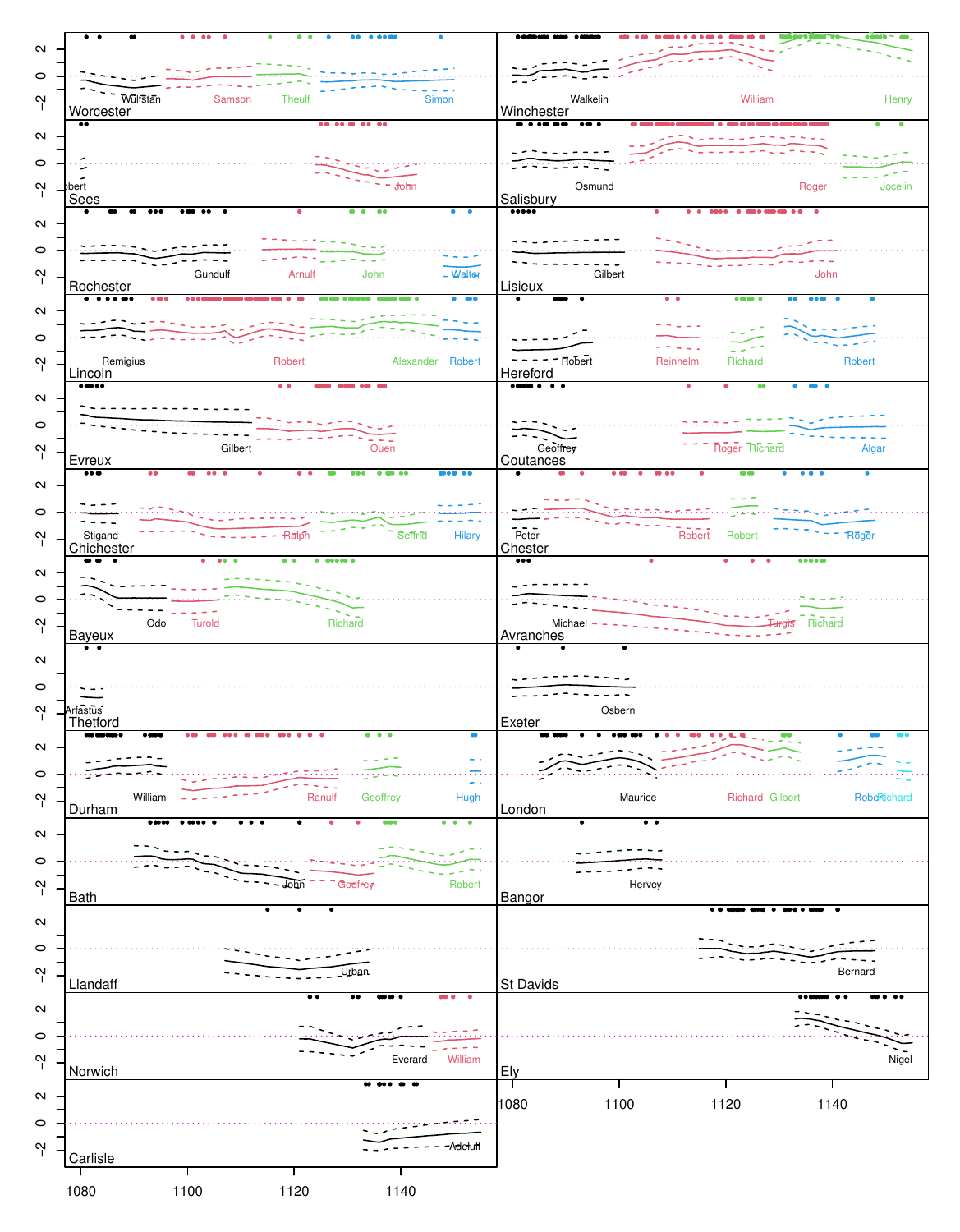}
    \caption{(constrained seniority effects analysis of \sec~\ref{sec:poHB2aRS6}) Bishop-{authority} curves $\bar U^{(t)}_j$ ($y$-axis values, solid curves) plotted for each bishop $j\in \M$ as a function of time ($x$-axis, the year) from $b_j$ to $e_j$ with uncertainty at one-sigma. The dots at the top of each graph show the times of lists in which the matched bishop below appeared.}
    \label{fig:Uhat-poHB2aRS6}
\end{figure}

The continuity in {authority} (but not status) of bishops over time within a diocese in \fig~\ref{fig:Uhat-poHB2aRS6} is noteworthy. There are some exceptions. For example, Henry de Blois started with higher {authority} than might be expected based only on the diocese. Some dioceses seem to be better (Winchester, London, Lincoln) than others (Chichester, Rochester, the dioceses in Normandy). The bishops of London and Winchester had gained precedence over their colleagues at the Council of London in 1075 and Lincoln came in the later middle ages to rank after Winchester.  However, there is uncertain evidence from as early as 1138 that the bishop of Lincoln might assume the role of London or Winchester in their absence and consequently that Lincoln already enjoyed a degree of precedence \citep{johnson13}. 

\subsection{Discussion of results}
\label{sec:discussion-of-results}

From a historical perspective, there are three significant outcomes. The first is the strong emphasis on the seniority and precedence of individual bishops in the witness lists. Historians often link the relative position of witnesses to an assessment of their political significance, but the analysis here shows that royal scribes were strongly influenced by the rules on seniority and precedence expressed at the Council of London, held by the English church in 1075 (Council of London, 1075, clause 1, \citet{clover79}).

The second is the position of Normandy within the Anglo-Norman realm. \fig~\ref{fig:Uhat-poHB2aRS6} shows that early in this period (before about 1100) Norman bishoprics enjoyed high status, but that this declined from the early twelfth century. This change is particularly marked for Avranches, Bayeux and Évreux, whilst no English bishoprics show a comparable trend. This should inform the ongoing debate about the relationship between England and Normandy (\citet{bates2013normans}, chapter 5). This change might represent a principled decision by royal scribes to rank Norman bishops lower in precedence than their English counterparts, or it might be explained politically. The smaller Norman dioceses may have been less attractive to ambitious clergymen, and there were periods when Normandy and England were ruled separately (most notably, 1144-54), so that Norman bishops were external to the English kingdom.

The third concerns how far the behaviour of individual bishops could change their status. Bishops were active politically and could fall into disgrace. Thus, Bishop Nigel of Ely had high status for a junior bishop in the 1130s, but from his disgrace in 1139 his status fell, contrary to the usual pattern. Nigel’s pattern is unique; it is not replicated by that for other disgraced bishops, such as Ranulf Flambard of Durham after 1100 and Alexander of Lincoln after 1139. These differences presumably reflect the nature of the disgrace itself. 
The estimated poset relations accord with known political favour.
For example, Henry of Blois and Odo of Bayeux, who were related to the royal house (Odo was William the Conqueror's half-brother, and Henry was Stephen's brother), are highly ranked. Referring to \fig~\ref{CPO-poHB17aRS4-all} in \app~\ref{app:poHB2aRS6}, although Henry (52) did well from the start, until 1134 he shares top spot in consensus orders with Roger of Salisbury (29).  From this date he is promoted ahead of anyone else.  This suggests that his brother becoming king in 1135 had an impact on his position.  

Referring to \fig~\ref{CPO-poHB17aRS4-all}, reconstructed orders seem relatively shallow, typically one (1097) or two (1124) groups of middle-ranked bishops and a few above or below. We may be concerned that this reflects an overly informative prior and differences in how often bishops witness. We tested this by simulating synthetic data in which the true posets were total orders (see \app~\ref{app:synth-total-orders}), but with the same list memberships as the real data.
We reconstructed the true total orders well. If the true orders were total orders we would see this in our analysis.

\section{Comparisons with other models}\label{sec:vsp-bucket-all}

In this section we define models over Bucket Orders and VSP orders. Calculation of $|\L[H]|$ for $H\in\H_\A,\ \A=\{1,\dots,m\}$ is linear in $m$ on these subspaces \citep{wells1971elements}, so if these models were preferred then we would use them. We find they are not adequate to represent a time-evolving hierarchy over long periods of time but can give a good fit over short time periods. \cite{jiang23} applies a fixed-time VSP model to all the witness list data (not just bishops). Some orders have over 200 actors, with lists exceeding 50 in length, and are out of reach for our full poset analysis. 

In \apps~\ref{app:PL-time-series-model} and \ref{app:pl-mix-elpd} we compare our model with Plackett-Luce models. Bayesian analysis of time-series Plackett-Luce in \app~\ref{app:PL-time-series-model} gives similar results for a parameter function corresponding to the average {authority} 
in \eq~\ref{eq:post-mean-U-process}. Analysis of a Plackett-Luce mixture model in \app~\ref{app:pl-mix-elpd} on short time intervals shows our model is preferred.

\subsection{Bucket Orders and Vertex-Series-Parallel partial orders}\label{sec:vsp-bucket-intro}

VSP orders \citep{valdes78,tarjan82} are built recursively from the ground set by taking series and parallel combinations of posets. We give this intuitive definition in \app~\ref{app:vsp-bucket-BF}. 
\cite{tarjan82} gives a concise characterisation. For any set $\A$, the class of all VSPs $\V_{\A}$ is identical to the set of posets $H\in\H_\A$ which do not contain a set of vertices $\A'=\{j_1,...,j_4\}$ with sub-graph $H'=H\cap (\A'\times \A')$ isomorphic to the ``forbidden subgraph'' $F=\{\e 1 2, \e 3 2, \e 3 4\}$ shown in \fig~\ref{fig:vsp-example} at right. After vertex relabelling, $F$ and $H'$ must be identical, so edges absent in $F$ are absent in $H'$. This makes it straightforward to test if a poset $H$ is a VSP-order.


A sub-class of VSPs called ``Bucket Orders'' has a particularly simple closed form for $|\L[H]|$. Actors are grouped into ``buckets''. Actors in the same bucket are unordered and a complete order holds over buckets. Formally, if $\K_\A$ is the class of Bucket Orders on $\A$ then $b\in\K_\A$ iff there is a partition $\A_1,...,\A_P$ of $\A$ into $P$ buckets such that for each $k\in [P]$ and all $j_1,j_2\in \A_k$ we have $\e{j_1}{j_2}\not \in b$ and for all pairs $1\le k_1<k_2\le P$ of buckets and all $j_1\in \A_{k_1}$ and $j_2\in\A_{k_2}$ we have $\e{j_1}{j_2}\in b$. 
VSPs and Bucket Orders are a small subset of partial orders. For example, if $|\A|=18$ (the largest for which \cite{OEIS22} gives cardinalities) we have $|\H_{[18]}|\simeq 2\times 10^{35}$, $|\V_{[18]}|/|\H_{[18]}\simeq 10^{-11}$ and $|\K_{[18]}|/|\H_{[18]}|\simeq 10^{-17}$.

\subsection{Bucket and VSP-order models} \label{vsp-bucket-model-spaces}
Suppose we are interested in learning about order relations over a period $[t_1,t_2]$ with $B\le t_1\le t_2\le E$. If we could justify restricting the process of fitted posets $h\in\H^{(t_1,t_2)}$ to a VSP-order-process $h\in \V^{(t_1,t_2)}$ with
\[
\V^{(t_1,t_2)}= \V_{\M_{t_1}}\times\V_{\M_{t_1+1}}\times...\times \V_{\M_{t_2}},
\]
or a bucket-order process $h\in \K^{(t_1,t_2)}$ with
\[
\K^{(t_1,t_2)}= \K_{\M_{t_1}}\times\K_{\M_{t_1+1}}\times...\times \K_{\M_{t_2}}, 
\]
then likelihood evaluation would be fast. However, this is not well-evidenced in our setting: the consensus order from 1136 in \fig~\ref{fig:CPO-poHB2aRS6} contains the poset $H'=\{\e{54}{57}, \e{47}{57}, \e{47}{58}\}$ in \fig~\ref{fig:intro-po-example}, with each included edge having posterior probability 0.8 or above and absent edges below 0.5, so the true poset is probably not a VSP as it contains a suborder ismorphic to $F$.

\subsection{Test results} \label{sec:vsp-bucket-model-and-test}
Bayes Factors $B_{\V,\H}$ and $B_{\K,\H}$ measuring the evidence for VSP poset-models and Bucket-Order models are defined in \sec~\ref{vsp-bucket-testing-methods}. We estimate these on the same twelve short (five-year) time intervals we used for the Plackett-Luce analysis in \app~\ref{app:pl-mix-elpd} for accurate estimation. The seniority covariate is a near-constant in these short time intervals so seniority effects were set to $\beta_r=0,\ r=1,...,\Sb$ in the fitted poset model. The likelihood $p_{(U)}$ in \eq~\ref{eq:lkd-noisy-up} was used. For analysis in interval $[t_1,t_2]$, we take $K$, the dimension of $U^{(t)}_j$,
to be $\lfloor \frac{1}{2} \max_{t\in[t_1,t_2]}m_t\rfloor$. The prior for $\rho$ is $\mbox{Beta}(1,1/6)$ and otherwise as \sec~\ref{sec:time-series-posterior}. 
\begin{figure}[ht]
    \centering
    \includegraphics[width=4.15in,trim=0in 0.2in 0.4in 0.8in, clip]{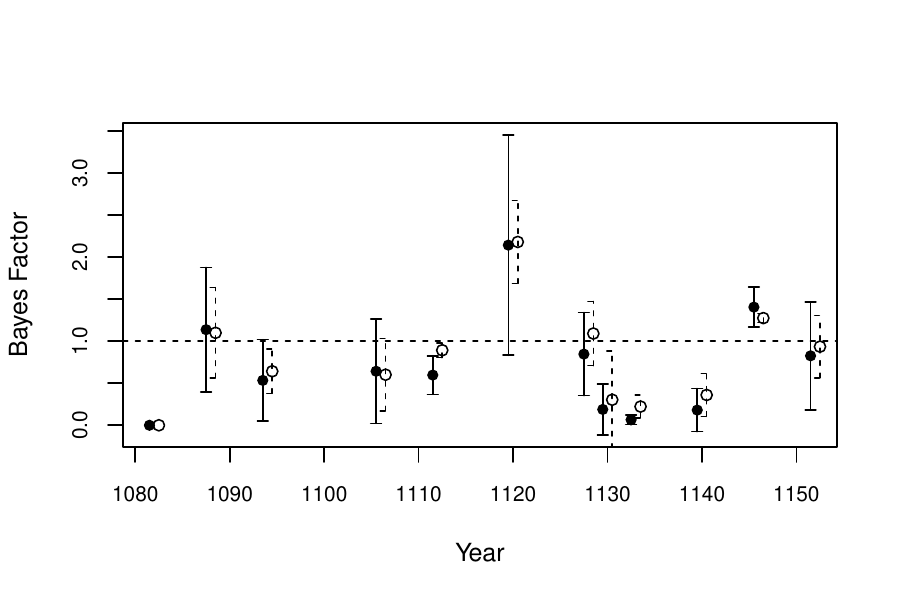}
    \caption{Bayes Factors for VSP orders (dashed lines) and Bucket Orders (solid) fitted over the year intervals in Table~\ref{tab:vsp-bucket-BF}. A value less than one is evidence against VSP or Bucket Orders. The $x$-axis value for each bar is the centre of the corresponding interval, $(t_1+t_2)/2$. Error bars are two standard deviations.}
    \label{fig:vsp-bucket-BF}
\end{figure}

Results are given in Table~\ref{tab:vsp-bucket-BF} in \app~\ref{app:vsp-bucket-BF} and plotted in \fig~\ref{fig:vsp-bucket-BF}. Each pair of points is an independent MCMC run. The prior probabilities $\pi(h\in\V^{(t_1,t_2)})$  and $\pi(h\in\B^{(t_1,t_2)})$ in Table~\ref{tab:vsp-bucket-BF} are surprisingly large given the sparsity of VSPs and Bucket Orders, so our prior for posets must favor VSPs and Bucket Orders. The rules forming VSP-orders and Bucket Orders compare groups rather than actors in a socially plausible way, so this may be a good thing.
As \fig~\ref{fig:vsp-bucket-BF} shows, the data favour partial orders or are neutral, except for 1118-1122, so Bucket Orders and VSPs may be acceptable over some short time intervals. In some intervals VSPs and Bucket Orders are rejected (1080-1084 and around 1132-1134). The evidence for posets visible in \fig~\ref{fig:vsp-bucket-BF} will accumulate over longer time series. 

\section{Conclusions}\label{sec:conclusions}

A new class of poset-models for time-series rank-data is summarised in \sec~\ref{sec:time-series-posterior}. 
The latent variable poset-parameterisation in \eq~\ref{eq:prior-for-h} made it straightforward to introduce a parameter controlling poset depth and incorporate actor-covariates informing the position of actors in the hierarchy.
We fit the model to witness-list data in which the actors are eleventh and twelfth century bishops. In \sec~\ref{sec:poHB1aRS6} we saw that the model recovered structure in the data which was anticipated by historians. In particular, the dependence of the status of a bishop on their seniority is clear in \fig~\ref{fig:beta-poHB1aRS6}. 
We checked for evidence that the depth parameter $\rho$, correlation $\theta$ and error probability $p$ varied over time by looking at short time intervals (see \app~\ref{app:constant-rho-theta-p-intervals}) and found no evidence against our assumption of constant values over time. Further support comes from model comparison against a Placket-Luce mixture in \app~\ref{app:pl-mix-elpd} which favoured our model. The time-series extension of the Plackett-Luce model in \app~\ref{app:PL-time-series-model} gave similar results for seniority effects and evolving {authority}, showing that the data overwhelm these model variations and conclusions are robust. The times-series Plackett-Luce model is fairly time consuming to fit so there was no great gain in efficiency over posets on our data.

We gave our main analysis in \sec~\ref{sec:poHB2aRS6}. This is the first quantitative analysis of these data and gave insights which historians find interesting. 
With few exceptions, witness-lists reflect precedence by diocese and seniority more than changing royal favour. 
We separated the effects of {authority} and seniority on status and confirmed \citep{johnson13} that the bishops of London, Winchester and Lincoln had high {authority} and that Rochester had no special status. 
Personal authority changes in a few cases: the high status originally given to Nigel of Ely unwound, and Roger of Salisbury bucked the trend. This was known to historians. The apparent decline in authority of Norman bishops (as expressed in the lists) was unknown. Consensus partial orders (see \fig~\ref{fig:CPO-poHB2aRS6}) gave an interpretable visualisation of the underlying social hierarchy in each year. The problem treated here appears in other guises: there is an extensive literature estimating animal dominance hierarchies with many similarities. For example, \cite{foerster16} studies a time-series of pairwise chimpanzee interactions and finds a role for seniority and evolution of status.

There is work to be done in computation and methodology to make these models more broadly applicable. First, our MCMC is time consuming (the experiments in this paper add up to about two years of MCMC if run as a single serial process). Whilst a careful analysis minimising any approximation of the target distribution is justified (there will never be any more for this period), scalable methods would be welcome, and VSP-orders may be an acceptable compromise from a modelling perspective. This may allow us to fit more complex noise models and treat ``top-$k$'' preference orders. Analysis of lists including lay witnesses \citep{jiang23} required scalable methods in order to count linear extensions in partial orders over hundreds of elements. Application of our methods to general preference orders on thousands of items and thousands of lists is not presently feasible.

Second, many applications of poset-based models will require substantial model-building, paralleling the evolution of Mallows and Plackett-Luce models and including hierarchical models and mixture models for structured populations and clustering. The clerks who made the lists may have differed on status assessment so unrepresented group structure may be present. Some developments are given in \cite{jiang24} (hereafter JN24). We worked with strong partial orders. Models for weak partial orders with ties are given in JN24. Also, our statistical model for ``noise'' in realised lists assumes errors occur in one direction only (queue jumping up or down, but not both). Noise models with bi-directional errors are explored in \cite{jiang23} and JN24. Statistical tools for selecting the number of features $K$ in the status vector of a bishop would remove the need for robustness checks (our \app~\ref{app:K-equals-18-posterior}). This is addressed for fixed-time data in JN24 using reversible-jump MCMC. 

Third, a small number of other covariates beside seniority are available in the data and might be explored in model elaboration. Covariates might enter the noise model also, to inform the probability and magnitude of displacements. One unexplored weakness of our error model is apparent in lists of length two: if the two bishops are ordered in a deep poset then the noise model assigns the same probability for the ``wrong'' order whether the two bishops are at each end of the poset (very different status) or adjacent in its transitive reduction (near-equal status). Displacements are probably ``over-dispersed'' in the noise-model.

From a historical perspective this study raises several questions. Complex precedence structures seem to have existed, but how were they known? Was there some kind of precedence handbook, or other means of transmission? Comparisons between patterns of precedence relations in pre-conquest lists and lists from later periods might be revealing. Later documents are more accurately dated and so a more fine-grained analysis may be possible. Finally, there are forgeries among acta of this period. Cases that go against the usual pattern may be an indicator of forgery. 

Software used to carry out the analysis presented in this paper are available at \\
\[
\mbox{\url{https://github.com/gknicholls/Partial-order-HMM-public.git}}.
\]

\section*{Acknowledgements}
The authors thank Dr Simon Urbanek for writing an R wrapper to compile the lecount() code in c++. 
GKN thanks Dr Oemetse Mogapi and Prof Tom Snijders for introducing him to this topic and Sir Bernard Silverman for helpful conversations regarding marginal consistency. 

\begin{funding}
JL was supported in part by Marsden grant MFP-UOA2131 and HRC 22/377/A from New Zealand Government funding.
\end{funding}

\bibliographystyle{chicago}
\bibliography{PO}

\begin{thebibliography}{}

\bibitem[\protect\citeauthoryear{Arcagni, Avellone, and Fattore}{Arcagni
  et~al.}{2022}]{arcagni22}
Arcagni, A., A.~Avellone, and M.~Fattore (2022).
\newblock Complexity reduction and approximation of multidomain systems of
  partially ordered data.
\newblock {\em Computational Statistics \& Data Analysis\/}~{\em 173}, 107520.

\bibitem[\protect\citeauthoryear{Asfaw, Vitelli, S{\o}rensen, Arjas, and
  Frigessi}{Asfaw et~al.}{2017}]{asfaw17}
Asfaw, D., V.~Vitelli, {\O}.~S{\o}rensen, E.~Arjas, and A.~Frigessi (2017).
\newblock Time-varying rankings with the {Bayesian Mallows} model.
\newblock {\em Stat\/}~{\em 6\/}(1), 14--30.

\bibitem[\protect\citeauthoryear{Bates}{Bates}{1997}]{bates97}
Bates, D. (1997).
\newblock {\em Family Trees and the Roots of Politics: The Prosopography of
  Britain and France from the Tenth to the Twelfth Century}, Chapter The
  Prosopographical Study of Anglo-Norman Royal Charters, pp.\  89--102.
\newblock Boydell Press.

\bibitem[\protect\citeauthoryear{Bates}{Bates}{1998}]{bates98}
Bates, D. (1998).
\newblock {\em The Acta of William I, 1066–1087}.
\newblock Regesta regum Anglo-Normannorum 1066-1154. Oxford.

\bibitem[\protect\citeauthoryear{Bates}{Bates}{2013}]{bates2013normans}
Bates, D. (2013).
\newblock {\em The Normans and Empire}.
\newblock Ford lectures. OUP Oxford.

\bibitem[\protect\citeauthoryear{Baxter}{Baxter}{2007}]{baxter07}
Baxter, S. (2007).
\newblock {\em {The Earls of Mercia: Lordship and Power in Late Anglo-Saxon
  England}}.
\newblock Oxford University Press.

\bibitem[\protect\citeauthoryear{Beerenwinkel, Eriksson, and
  Sturmfels}{Beerenwinkel et~al.}{2007}]{beerenwinkel2007conjunctive}
Beerenwinkel, N., N.~Eriksson, and B.~Sturmfels (2007).
\newblock Conjunctive {B}ayesian networks.
\newblock {\em Bernoulli\/}~{\em 13\/}(4), 893--909.

\bibitem[\protect\citeauthoryear{Bengio and Grandvalet}{Bengio and
  Grandvalet}{2004}]{Bengio2004}
Bengio, Y. and Y.~Grandvalet (2004).
\newblock No unbiased estimator of the variance of k-fold cross-validation.
\newblock {\em Journal of Machine Learning Research\/}~{\em 5}, 1089–1105.

\bibitem[\protect\citeauthoryear{Bogart}{Bogart}{1973a}]{Bogart73a}
Bogart, K.~P. (1973a, 5).
\newblock Maximal dimensional partially ordered sets {I}. {H}iraguchi's
  theorem.
\newblock {\em Discrete Mathematics\/}~{\em 5}, 21--31.

\bibitem[\protect\citeauthoryear{Bogart}{Bogart}{1973b}]{bogart73b}
Bogart, K.~P. (1973b).
\newblock Preference structures i: Distances between transitive preference
  relations.
\newblock {\em The Journal of Mathematical Sociology\/}~{\em 3\/}(1), 49--67.

\bibitem[\protect\citeauthoryear{Brightwell}{Brightwell}{1993}]{brightwell93}
Brightwell, G. (1993).
\newblock {\em Surveys in Combinatorics}, Volume 187 of {\em London
  Mathematical Society Lecture Note Series}, Chapter Models of random partial
  orders, pp.\  53--83.
\newblock Cambridge Univeristy Press.

\bibitem[\protect\citeauthoryear{Brightwell and Winkler}{Brightwell and
  Winkler}{1991}]{brightwell1991counting}
Brightwell, G. and P.~Winkler (1991).
\newblock Counting linear extensions.
\newblock {\em Order\/}~{\em 8\/}(3), 225--242.

\bibitem[\protect\citeauthoryear{Caron and Doucet}{Caron and
  Doucet}{2012}]{Caron12mcmc}
Caron, F. and A.~Doucet (2012).
\newblock Efficient {B}ayesian inference for generalized {B}radley-{T}erry
  models.
\newblock {\em J. Comput. Graph. Statist.\/}~{\em 21\/}(1), 174–196.

\bibitem[\protect\citeauthoryear{Caron and Teh}{Caron and
  Teh}{2012}]{caron12time}
Caron, F. and Y.~Teh (2012).
\newblock {B}ayesian nonparametric models for ranked data.
\newblock In F.~Pereira, C.~Burges, L.~Bottou, and K.~Weinberger (Eds.), {\em
  Advances in Neural Information Processing Systems}, Volume~25. Curran
  Associates, Inc.

\bibitem[\protect\citeauthoryear{Caron, Teh, and Murphy}{Caron
  et~al.}{2014}]{caron2014bayesian}
Caron, F., Y.~W. Teh, and T.~B. Murphy (2014).
\newblock Bayesian nonparametric {P}lackett--{L}uce models for the analysis of
  preferences for college degree programmes.
\newblock {\em The Annals of Applied Statistics\/}, 1145--1181.

\bibitem[\protect\citeauthoryear{Clover and Gibson}{Clover and
  Gibson}{1979}]{clover79}
Clover, H. and M.~Gibson (1979).
\newblock {\em The Letters of Lanfranc, Archbishop of Canterbury}.
\newblock Oxford.

\bibitem[\protect\citeauthoryear{Cronne and Davis}{Cronne and
  Davis}{1968}]{cronne68}
Cronne, H.~A. and R.~H.~C. Davis (1968).
\newblock {\em Regesta Regis Stephani ac Mathildis Imperatricis ac Gaufridi et
  Henrici Ducum Normannorum, 1135–1154}.
\newblock Regesta regum Anglo-Normannorum 1066-1154. Oxford.

\bibitem[\protect\citeauthoryear{Davis and Whitwell}{Davis and
  Whitwell}{1913}]{davis13}
Davis, H. and R.~Whitwell (1913).
\newblock {\em Regesta regum Anglo-Normannorum 1066-1154: Regesta Willelmi
  Conquestoris et Willelmi Rufi 1066-1100}.
\newblock Regesta regum Anglo-Normannorum 1066-1154. Oxford.

\bibitem[\protect\citeauthoryear{Deng, Han, Li, and Liu}{Deng
  et~al.}{2014}]{deng14}
Deng, K., S.~Han, K.~J. Li, and J.~S. Liu (2014).
\newblock {B}ayesian aggregation of order-based rank data.
\newblock {\em Journal of the American Statistical Association\/}~{\em
  109\/}(507), 1023--1039.

\bibitem[\protect\citeauthoryear{Diaconis}{Diaconis}{1988}]{diaconis88}
Diaconis, P. (1988).
\newblock Group representations in probability and statistics.
\newblock {\em Lecture Notes-Monograph Series\/}~{\em 11}, i--192.

\bibitem[\protect\citeauthoryear{Durante, Dunson, and Vogelstein}{Durante
  et~al.}{2017a}]{durante17}
Durante, D., D.~B. Dunson, and J.~T. Vogelstein (2017a).
\newblock Nonparametric {B}ayes modeling of populations of networks.
\newblock {\em Journal of the American Statistical Association\/}~{\em
  112\/}(520), 1516--1530.

\bibitem[\protect\citeauthoryear{Durante, Dunson, and Vogelstein}{Durante
  et~al.}{2017b}]{durante17rejoinder}
Durante, D., D.~B. Dunson, and J.~T. Vogelstein (2017b).
\newblock Rejoinder: Nonparametric {B}ayes modeling of populations of networks.
\newblock {\em Journal of the American Statistical Association\/}~{\em
  112\/}(520), 1547--1552.

\bibitem[\protect\citeauthoryear{Dushnik and Miller}{Dushnik and
  Miller}{1941}]{dushnik1941}
Dushnik, B. and E.~W. Miller (1941).
\newblock Partially ordered sets.
\newblock {\em American Journal of Mathematics\/}~{\em 63\/}(3), 600--610.

\bibitem[\protect\citeauthoryear{??}{FEA}{2022}]{FastiEcclesiaeAnglicanae}
FEA (2022).
\newblock Fasti {E}cclesiae {A}nglicanae, 1066-1300.
\newblock London: Institute of Historical Research; British History Online;
  Last accessed 22 October 2022;
  http://www.british-history.ac.uk/fasti-ecclesiae/.

\bibitem[\protect\citeauthoryear{Fishburn and Gehrlein}{Fishburn and
  Gehrlein}{1975}]{fishburn75}
Fishburn, P.~C. and W.~V. Gehrlein (1975).
\newblock A comparative analysis of methods for constructing weak orders from
  partial orders.
\newblock {\em J. Math. Sociology\/}~{\em 4\/}(1), 93–102.

\bibitem[\protect\citeauthoryear{Fligner and Verducci}{Fligner and
  Verducci}{1986}]{fligner86}
Fligner, M.~A. and J.~S. Verducci (1986).
\newblock Distance based ranking models.
\newblock {\em Journal of the Royal Statistical Society. Series B
  (Methodological)\/}~{\em 48\/}(3), 359--369.

\bibitem[\protect\citeauthoryear{Foerster, Franz, Murray, Gilby, Feldblum,
  Walker, and Pusey}{Foerster et~al.}{2016}]{foerster16}
Foerster, S., M.~Franz, C.~M. Murray, I.~C. Gilby, J.~T. Feldblum, K.~K.
  Walker, and A.~E. Pusey (2016).
\newblock Chimpanzee females queue but males compete for social status.
\newblock {\em Scientific Reports\/}~{\em 6\/}(1), 35404--35415.

\bibitem[\protect\citeauthoryear{Friedell}{Friedell}{1967}]{friedell67}
Friedell, M.~F. (1967).
\newblock Organizations as semilattices.
\newblock {\em American Sociological Review\/}~{\em 32\/}(1), 46--54.

\bibitem[\protect\citeauthoryear{Froehlich, Fellmann, Sueltmann, Poustka, and
  Beissbarth}{Froehlich et~al.}{2007}]{froelich07}
Froehlich, H., M.~Fellmann, H.~Sueltmann, A.~Poustka, and T.~Beissbarth (2007).
\newblock Large scale statistical inference of signaling pathways from {RNA}i
  and microarray data.
\newblock {\em BMC Bioinformatics\/}~{\em 8\/}(386).

\bibitem[\protect\citeauthoryear{Gazes, Hampton, and Lourenco}{Gazes
  et~al.}{2017}]{gazes17}
Gazes, R.~P., R.~R. Hampton, and S.~F. Lourenco (2017).
\newblock Transitive inference of social dominance by human infants.
\newblock {\em Developmental Science\/}~{\em 20\/}(2), e12367.

\bibitem[\protect\citeauthoryear{Gionis, Mannila, Puolam{\"a}ki, and
  Ukkonen}{Gionis et~al.}{2006}]{mannila06}
Gionis, A., H.~Mannila, K.~Puolam{\"a}ki, and A.~Ukkonen (2006).
\newblock Algorithms for discovering bucket orders from data.
\newblock In {\em Proceedings of the 12th ACM SIGKDD international conference
  on Knowledge discovery and data mining}, pp.\  561--566.

\bibitem[\protect\citeauthoryear{Given-Wilson}{Given-Wilson}{1991}]{given-wilson91}
Given-Wilson, C. (1991).
\newblock Royal charter witness lists 1327-1399.
\newblock {\em Medieval Prosopography\/}~{\em 12\/}(2), 35--93.

\bibitem[\protect\citeauthoryear{Glickman and Hennessy}{Glickman and
  Hennessy}{2015}]{glickman2015stochastic}
Glickman, M.~E. and J.~Hennessy (2015).
\newblock A stochastic rank ordered logit model for rating multi-competitor
  games and sports.
\newblock {\em Journal of Quantitative Analysis in Sports\/}~{\em 11\/}(3),
  131--144.

\bibitem[\protect\citeauthoryear{Guiver and Snelson}{Guiver and
  Snelson}{2009}]{guiver2009bayesian}
Guiver, J. and E.~Snelson (2009).
\newblock {B}ayesian inference for {Plackett-Luce} ranking models.
\newblock In {\em proceedings of the 26th annual international conference on
  machine learning}, pp.\  377--384.

\bibitem[\protect\citeauthoryear{Gwee, Gormley, and Fop}{Gwee
  et~al.}{2023}]{gwee23}
Gwee, X.~Y., I.~C. Gormley, and M.~Fop (2023).
\newblock A latent shrinkage position model for binary and count network data.
\newblock {\em Bayesian Analysis\/}, 1 -- 29.

\bibitem[\protect\citeauthoryear{Haskins}{Haskins}{1938}]{haskins38}
Haskins, G.~L. (1938).
\newblock Charter witness lists in the reign of king john.
\newblock {\em Speculum\/}~{\em 13\/}(3), 319--325.

\bibitem[\protect\citeauthoryear{Hiraguchi}{Hiraguchi}{1951}]{Hiraguchi51}
Hiraguchi, T. (1951).
\newblock On the dimension of partially ordered sets.
\newblock {\em The science reports of the Kanazawa University\/}~{\em 1},
  77--94.

\bibitem[\protect\citeauthoryear{Hunter}{Hunter}{2004}]{hunter04}
Hunter, D.~R. (2004).
\newblock {MM algorithms for generalized Bradley-Terry models}.
\newblock {\em The Annals of Statistics\/}~{\em 32\/}(1), 384 -- 406.

\bibitem[\protect\citeauthoryear{Irurozki, Calvo, and Lozano}{Irurozki
  et~al.}{2016}]{irurozki16}
Irurozki, E., B.~Calvo, and J.~A. Lozano (2016).
\newblock {PerMallows: An R Package for Mallows and generalized Mallows
  models}.
\newblock {\em Journal of Statistical Software\/}~{\em 71\/}(12), 1–30.

\bibitem[\protect\citeauthoryear{Irurozki, Calvo, and Lozano}{Irurozki
  et~al.}{2019}]{irurozki19}
Irurozki, E., B.~Calvo, and J.~A. Lozano (2019).
\newblock {Mallows and generalized Mallows model for matchings}.
\newblock {\em Bernoulli\/}~{\em 25\/}(2), 1160 -- 1188.

\bibitem[\protect\citeauthoryear{Jiang and Nicholls}{Jiang and
  Nicholls}{2024}]{jiang24}
Jiang, C. and G.~K. Nicholls (2024).
\newblock Non-parametric {B}ayesian inference for partial orders with ties from
  rank data observed with {M}allows noise.
\newblock https://arxiv.org/abs/2408.14661.

\bibitem[\protect\citeauthoryear{Jiang, Nicholls, and Lee}{Jiang
  et~al.}{2023}]{jiang23}
Jiang, C., G.~K. Nicholls, and J.~E. Lee (2023, 31 Jul--04 Aug).
\newblock {B}ayesian inference for vertex-series-parallel partial orders.
\newblock In R.~J. Evans and I.~Shpitser (Eds.), {\em Proceedings of the
  Thirty-Ninth Conference on Uncertainty in Artificial Intelligence}, Volume
  216 of {\em Proceedings of Machine Learning Research}, pp.\  995--1004. PMLR.

\bibitem[\protect\citeauthoryear{Johnson and Cronne}{Johnson and
  Cronne}{1966}]{johnson66}
Johnson, C. and H.~A. Cronne (1966).
\newblock {\em Regesta Henrici Primi : 1100-1135}.
\newblock Regesta regum Anglo-Normannorum 1066-1154. Oxford.

\bibitem[\protect\citeauthoryear{Johnson}{Johnson}{2013}]{johnson13}
Johnson, D. (2013, 10).
\newblock {Bishops and deans: London and the province of Canterbury in the
  twelfth century*}.
\newblock {\em Historical Research\/}~{\em 86\/}(234), 551--578.

\bibitem[\protect\citeauthoryear{Kangas, Hankala, Niinim\"{a}ki, and
  Koivisto}{Kangas et~al.}{2016}]{koivisto16b}
Kangas, K., T.~Hankala, T.~Niinim\"{a}ki, and M.~Koivisto (2016).
\newblock Counting linear extensions of sparse posets.
\newblock In {\em Proceedings of the Twenty-Fifth International Joint
  Conference on Artificial Intelligence}, IJCAI'16, pp.\  603–609. AAAI
  Press.

\bibitem[\protect\citeauthoryear{Kangas, Koivisto, and Salonen}{Kangas
  et~al.}{2019}]{koivisto19}
Kangas, K., M.~Koivisto, and S.~Salonen (2019).
\newblock A faster tree-decomposition based algorithm for counting linear
  extensions.
\newblock {\em Algorithmica\/}, 1--18.

\bibitem[\protect\citeauthoryear{Karzanov and Khachiyan}{Karzanov and
  Khachiyan}{1991}]{Karzanov91}
Karzanov, A. and L.~Khachiyan (1991).
\newblock On the conductance of order {M}arkov chains.
\newblock {\em Order\/}~{\em 8}, 7--15.

\bibitem[\protect\citeauthoryear{Kass and Raftery}{Kass and
  Raftery}{1995}]{kass95}
Kass, R.~E. and A.~E. Raftery (1995).
\newblock Bayes factors.
\newblock {\em Journal of the American Statistical Association\/}~{\em
  90\/}(430), 773--795.

\bibitem[\protect\citeauthoryear{Keynes}{Keynes}{1980}]{keynes80}
Keynes, S. (1980).
\newblock {\em The Diplomas of King Aethlred “the Unready” 978–1016}.
\newblock Cambridge Studies in Medieval Life and Thought: Third Series.
  Cambridge University Press.

\bibitem[\protect\citeauthoryear{Kleitman and Rothschild}{Kleitman and
  Rothschild}{1975}]{kleitman1975asymptotic}
Kleitman, D.~J. and B.~L. Rothschild (1975).
\newblock Asymptotic enumeration of partial orders on a finite set.
\newblock {\em Transactions of the American Mathematical Society\/}~{\em 205},
  205--220.

\bibitem[\protect\citeauthoryear{Knuth and Szwarcfiter}{Knuth and
  Szwarcfiter}{1974}]{knuth1974structured}
Knuth, D.~E. and J.~L. Szwarcfiter (1974).
\newblock A structured program to generate all topological sorting
  arrangements.
\newblock {\em Information Processing Letters\/}~{\em 2\/}(6), 153--157.

\bibitem[\protect\citeauthoryear{Licence}{Licence}{2020}]{licence20}
Licence, T. (2020, 09).
\newblock {\em {Edward the Confessor: Last of the Royal Blood}}.
\newblock Yale University Press.

\bibitem[\protect\citeauthoryear{Lu and Boutilier}{Lu and
  Boutilier}{2014}]{lu2014effective}
Lu, T. and C.~Boutilier (2014).
\newblock Effective sampling and learning for {M}allows models with
  pairwise-preference data.
\newblock {\em J. Mach. Learn. Res.\/}~{\em 15\/}(1), 3783--3829.

\bibitem[\protect\citeauthoryear{Luce}{Luce}{1977}]{luce77}
Luce, R. (1977).
\newblock The choice axiom after twenty years.
\newblock {\em Journal of Mathematical Psychology\/}~{\em 15\/}(3), 215--233.

\bibitem[\protect\citeauthoryear{Luce}{Luce}{1959}]{luce1959possible}
Luce, R.~D. (1959).
\newblock On the possible psychophysical laws.
\newblock {\em Psychological review\/}~{\em 66\/}(2), 81.

\bibitem[\protect\citeauthoryear{Mallows}{Mallows}{1957}]{mallows1957non}
Mallows, C.~L. (1957).
\newblock Non-null ranking models. i.
\newblock {\em Biometrika\/}~{\em 44\/}(1/2), 114--130.

\bibitem[\protect\citeauthoryear{Mannila}{Mannila}{2008}]{mannila08}
Mannila, H. (2008).
\newblock Finding total and partial orders from data for seriation.
\newblock In J.-F. Boucault (Ed.), {\em Discovery Science}, Volume 5255 of {\em
  LNAI}, Berlin Heidelberg, pp.\  16--25. Springer-Verlag.

\bibitem[\protect\citeauthoryear{Mannila and Meek}{Mannila and
  Meek}{2000}]{mannila00}
Mannila, H. and C.~Meek (2000).
\newblock Global partial orders from sequential data.
\newblock In {\em Proceedings of the Sixth ACM SIGKDD International Conference
  on Knowledge Discovery and Data Mining}, KDD '00, New York, NY, USA, pp.\
  161–168. Association for Computing Machinery.

\bibitem[\protect\citeauthoryear{Martin}{Martin}{2002}]{martin02}
Martin, J.~L. (2002).
\newblock Some algebraic structures for diffusion in social networks.
\newblock {\em The Journal of Mathematical Sociology\/}~{\em 26\/}(3),
  123--146.

\bibitem[\protect\citeauthoryear{McKeough and Glickman}{McKeough and
  Glickman}{2024}]{glickman24}
McKeough, K. and M.~Glickman (2024).
\newblock {P}lackett–{L}uce modeling with trajectory models for measuring
  athlete strength.
\newblock {\em Journal of Quantitative Analysis in Sports\/}~{\em 20\/}(1),
  21--35.

\bibitem[\protect\citeauthoryear{Meilă and Chen}{Meilă and
  Chen}{2016}]{meila16}
Meilă, M. and H.~Chen (2016).
\newblock {B}ayesian non-parametric clustering of ranking data.
\newblock {\em IEEE Transactions on Pattern Analysis and Machine
  Intelligence\/}~{\em 38\/}(11), 2156--2169.

\bibitem[\protect\citeauthoryear{Meil\u{a} and Chen}{Meil\u{a} and
  Chen}{2010}]{meila2012dirichlet}
Meil\u{a}, M. and H.~Chen (2010).
\newblock Dirichlet process mixtures of generalized {M}allows models.
\newblock In {\em Proceedings of the Twenty-Sixth Conference on Uncertainty in
  Artificial Intelligence}, UAI'10, Arlington, Virginia, USA, pp.\  358–367.
  AUAI Press.

\bibitem[\protect\citeauthoryear{Mogapi}{Mogapi}{2009}]{mogapi09}
Mogapi, O. (2009).
\newblock {\em A Latent Partial Order Model for Social Networks}.
\newblock Ph.\ D. thesis, University of Oxford.

\bibitem[\protect\citeauthoryear{Mollica and Tardella}{Mollica and
  Tardella}{2017}]{Mollica14}
Mollica, C. and L.~Tardella (2017).
\newblock {B}ayesian {P}lackett-{L}uce mixture models for partially ranked
  data.
\newblock {\em Psychometrika\/}~{\em 82\/}(2), 442--458.

\bibitem[\protect\citeauthoryear{Mollica and Tardella}{Mollica and
  Tardella}{2020}]{Mollica20}
Mollica, C. and L.~Tardella (2020).
\newblock {PLMIX}: an {R} package for modelling and clustering partially ranked
  data.
\newblock {\em Journal of Statistical Computation and Simulation\/}~{\em
  90\/}(5), 925--959.

\bibitem[\protect\citeauthoryear{{Muir Watt}}{{Muir Watt}}{2015}]{muirwatt15}
{Muir Watt}, A. (2015).
\newblock {\em Inference for partial orders from random linear extensions}.
\newblock Ph.\ D. thesis, University of Oxford.

\bibitem[\protect\citeauthoryear{Nicholls and Muir~Watt}{Nicholls and
  Muir~Watt}{2011}]{nicholls11}
Nicholls, G.~K. and A.~Muir~Watt (2011).
\newblock Partial order models for episcopal social status in 12th century
  {E}ngland.
\newblock {\em IWSM 2011\/}, 437.

\bibitem[\protect\citeauthoryear{Niinim\"{a}ki, Parviainen, and
  Koivisto}{Niinim\"{a}ki et~al.}{2016}]{koivisto16a}
Niinim\"{a}ki, T., P.~Parviainen, and M.~Koivisto (2016).
\newblock Structure discovery in {B}ayesian networks by sampling partial
  orders.
\newblock {\em Journal of Machine Learning Research\/}~{\em 17\/}(57), 1--47.

\bibitem[\protect\citeauthoryear{OEIS Foundation~Inc}{OEIS
  Foundation~Inc}{2022}]{OEIS22}
OEIS Foundation~Inc, . (2022).
\newblock The on-line encyclopedia of integer sequences.
\newblock Published electronically at https://oeis.org.

\bibitem[\protect\citeauthoryear{Plackett}{Plackett}{1975}]{plackett1975analysis}
Plackett, R.~L. (1975).
\newblock The analysis of permutations.
\newblock {\em Journal of the Royal Statistical Society: Series C (Applied
  Statistics)\/}~{\em 24\/}(2), 193--202.

\bibitem[\protect\citeauthoryear{Rising}{Rising}{2021}]{rising21}
Rising, J. (2021).
\newblock Uncertainty in ranking.
\newblock https://arxiv.org/abs/2107.03459.

\bibitem[\protect\citeauthoryear{Roberts}{Roberts}{1990}]{roberts90}
Roberts, J.~M. (1990).
\newblock Modeling hierarchy: Transitivity and the linear ordering problem.
\newblock {\em The Journal of Mathematical Sociology\/}~{\em 16\/}(1), 77--87.

\bibitem[\protect\citeauthoryear{Rousseau and Mengersen}{Rousseau and
  Mengersen}{2011}]{rousseau11}
Rousseau, J. and K.~Mengersen (2011).
\newblock Asymptotic behaviour of the posterior distribution in overfitted
  mixture models.
\newblock {\em Journal of the Royal Statistical Society: Series B (Statistical
  Methodology)\/}~{\em 73\/}(5), 689--710.

\bibitem[\protect\citeauthoryear{Russell}{Russell}{1937}]{russell37}
Russell, J.~C. (1937).
\newblock Social status at the court of {K}ing {J}ohn.
\newblock {\em Speculum\/}~{\em 12\/}(3), 319--329.

\bibitem[\protect\citeauthoryear{Sakoparnig and Beerenwinkel}{Sakoparnig and
  Beerenwinkel}{2012}]{beerenwinkel12}
Sakoparnig, T. and N.~Beerenwinkel (2012).
\newblock {Efficient sampling for Bayesian inference of conjunctive Bayesian
  networks}.
\newblock {\em Bioinformatics\/}~{\em 28\/}(18), 2318--2324.

\bibitem[\protect\citeauthoryear{Seshadri, Ragain, and Ugander}{Seshadri
  et~al.}{2020}]{seshadri20}
Seshadri, A., S.~Ragain, and J.~Ugander (2020).
\newblock Learning rich rankings.
\newblock In H.~Larochelle, M.~Ranzato, R.~Hadsell, M.~Balcan, and H.~Lin
  (Eds.), {\em Advances in Neural Information Processing Systems}, Volume~33,
  pp.\  9435--9446. Curran Associates, Inc.

\bibitem[\protect\citeauthoryear{Sharpe, Carpenter, Doherty, Hagger, and
  Karn}{Sharpe et~al.}{2014}]{sharpe14}
Sharpe, R., D.~Carpenter, H.~Doherty, M.~Hagger, and N.~Karn (2014).
\newblock The charters of {W}illiam {II} and {H}enry {I}.
\newblock Online: Accessed October 2022;
  \verb|actswilliam2henry1.wordpress.com/the-charters/|.

\bibitem[\protect\citeauthoryear{Shizuka and McDonald}{Shizuka and
  McDonald}{2012}]{shizuka12}
Shizuka, D. and D.~B. McDonald (2012).
\newblock A social network perspective on measurements of dominance
  hierarchies.
\newblock {\em Animal Behaviour\/}~{\em 83\/}(4), 925--934.

\bibitem[\protect\citeauthoryear{Sivula, Magnusson, and Nehtari}{Sivula
  et~al.}{2022}]{Sivula2022}
Sivula, T., M.~Magnusson, and A.~Nehtari (2022).
\newblock Unbiased estimator for the variance of the leave-one-out
  cross-validation estimator for a {B}ayesian normal model with fixed variance.
\newblock {\em Communications in Statistics - Theory and Methods\/}, 1--23.

\bibitem[\protect\citeauthoryear{S{\o}rensen, Crispino, Liu, and
  Vitelli}{S{\o}rensen et~al.}{2020}]{vitelli20sofware}
S{\o}rensen, {\O}., M.~Crispino, Q.~Liu, and V.~Vitelli (2020).
\newblock {BayesMallows: An R Package for the Bayesian Mallows Model}.
\newblock {\em {The R Journal}\/}~{\em 12\/}(1), 324--342.

\bibitem[\protect\citeauthoryear{Tkachenko and Lauw}{Tkachenko and
  Lauw}{2016}]{tkachenko2016plackett}
Tkachenko, M. and H.~W. Lauw (2016).
\newblock Plackett-{L}uce regression mixture model for heterogeneous rankings.
\newblock In {\em Proceedings of the 25th ACM International on Conference on
  Information and Knowledge Management}, pp.\  237--246.

\bibitem[\protect\citeauthoryear{Valdes}{Valdes}{1978}]{valdes78}
Valdes, J. (1978).
\newblock {\em Parsing {F}lowcharts and {S}eries-{P}arallel {G}raphs.}
\newblock Ph.\ D. thesis, Stanford, CA, USA.
\newblock AAI7905944.

\bibitem[\protect\citeauthoryear{Valdes, Tarjan, and Lawler}{Valdes
  et~al.}{1982}]{tarjan82}
Valdes, J., R.~E. Tarjan, and E.~L. Lawler (1982).
\newblock The recognition of series parallel digraphs.
\newblock {\em SIAM Journal on Computing\/}~{\em 11\/}(2), 298--313.

\bibitem[\protect\citeauthoryear{van Wietmarschen}{van
  Wietmarschen}{2022}]{vanWietmarschenNous22}
van Wietmarschen, H. (2022).
\newblock What is social hierarchy?
\newblock {\em Noûs\/}~{\em 56\/}(4), 920--939.
\newblock (page 926, {\it "In sum,..."}).

\bibitem[\protect\citeauthoryear{Vasconcelos}{Vasconcelos}{2008}]{vasconcelos08}
Vasconcelos, M. (2008).
\newblock Transitive inference in non-human animals: An empirical and
  theoretical analysis.
\newblock {\em Behavioural Processes\/}~{\em 78\/}(3), 313--334.

\bibitem[\protect\citeauthoryear{Vehtari, Gelman, and Gabry}{Vehtari
  et~al.}{2017}]{Vehtari17}
Vehtari, A., A.~Gelman, and J.~Gabry (2017).
\newblock Practical {B}ayesian model evaluation using leave-one-out
  cross-validation and {WAIC}.
\newblock {\em Statistics and Computing\/}~{\em 27\/}(5), 1413–1432.

\bibitem[\protect\citeauthoryear{Vitelli, {{\O}}ystein S{{\o}}rensen, Crispino,
  Frigessi, and Arjas}{Vitelli et~al.}{2018}]{vitelli18}
Vitelli, V., {{\O}}ystein S{{\o}}rensen, M.~Crispino, A.~Frigessi, and E.~Arjas
  (2018).
\newblock Probabilistic preference learning with the {M}allows rank model.
\newblock {\em Journal of Machine Learning Research\/}~{\em 18\/}(158), 1--49.

\bibitem[\protect\citeauthoryear{Wells}{Wells}{1971}]{wells1971elements}
Wells, M.~B. (1971).
\newblock Elements of combinatorial computing.

\bibitem[\protect\citeauthoryear{Winkler}{Winkler}{1985}]{winkler1985random}
Winkler, P. (1985).
\newblock Random orders.
\newblock {\em Order\/}~{\em 1\/}(4), 317--331.

\end{thebibliography}

\newpage

\title{Bayesian inference for partial orders from random linear extensions: power relations from 12th Century Royal Acta (Supplementary Material)}

\begin{aug}
\author[AA]{\fnms{Geoff K. Nicholls}\ead[label=ee1]{nicholls@stats.ox.ac.uk}}
\author[BB]{\fnms{Jeong Eun Lee}\ead[label=ee2]{kate.lee@auckland.ac.nz}}
\author[CC]{\fnms{Nicholas Karn}\ead[label=ee3]{N.E.Karn@soton.ac.uk}}
\author[DD]{\fnms{David Johnson} \ead[label=ee4]{david.johnson@spc.ox.ac.uk}}
\author[EE]{\fnms{Rukuang Huang} \ead[label=ee5]{rukuang.huang@jesus.ox.ac.uk}}
\and
\author[FF]{\fnms{Alexis Muir-Watt} \ead[label=ee6]{alexis.muirwatt@gmail.com}}
\address[AA]{Department of Statistics, The University of Oxford,
UK.\printead[presep={,\ }]{ee1}}

\address[BB]{Department of Statistics, The University of Auckland,
New Zealand \printead[presep={,\ }]{ee2}}
\address[CC]{Faculty of Arts and Humanities, University of Southampton,
UK\printead[presep={,\ }]{ee3}}
\address[DD]{St Peter's college, The University of Oxford,
UK \printead[presep={,\ }]{ee4}}
\address[EE]{Department of Psychiatry, University of Oxford,
UK \printead[presep={,\ }]{ee5}}
\address[FF]{Private researcher, London, UK \printead[presep={,\ }]{ee6}}
\end{aug}


\appendix

\section{Data registration and works by historians}\label{app:data}

\subsection{Data sources and data-collection} \label{app:data-sources}
The process behind the making of this dataset is complex. The acta were written by royal scribes, usually clergymen with legal training and political skill. Most were written at the behest of individuals or institutions who wanted to obtain rights or privileges or ensure the intervention of royal agents in their favour. The scribes handed over the documents to these individuals or institutions, and they were responsible for preserving them. There was no central governmental archive until the 1190s, so there were no copies kept at Westminster or elsewhere. 

Over time, many acta were destroyed accidentally or purposefully; those which survive tend to come from institutions such as towns or great churches which had a stable existence over centuries, and which had the capacity to protect and manage an archive. The activities of manuscript collectors has also had a significant impact on what has survived. The acta are now mostly in the possession of the major research libraries and archives, with a small number in private hands. 

Over the last century and a half, researchers have worked to identify where these documents survive, and to obtain transcripts of them. Editions and catalogues have been made from those collections \citep{davis13,johnson66,cronne68,bates98}, and, more recently, databases \citep{sharpe14}. These are widely used by researchers and scholars and are key resources for major questions in political, social and cultural history.

\subsection{Data registration} \label{app:data-registration}
Our registration aimed to ensure that the data conformed to the observation process we describe in the paper. We clip the uncertainty ranges of the lists $\tau^-_i,\tau^+_i,\ i\in I$ so that they lie within the range allowed by the bishops they list, that is, we impose \[
[\tau^-_i,\tau^+_i]\subseteq \bigcap_{j=1}^{n_i} [b_{y_j},e_{y_j}].
\] 
This can lead to lists being dropped. For example, the list with data-set id 677 (see below), which has a data-set year indicator of 1139 (with no uncertainty), contains Philip, Bishop of Bayeux. He was in post 1142-1163 so the uncertainty range would be clipped to the empty set. The list with data-set id 2627 has both Hugh and John, bishops of Lisieux, with an obvious second list appended, starting with Henry I. These issues could be easily fixed. However, we simply dropped lists of this sort as the number was small (five).

The list with data-set id 2364 is a fairly typical example of a witness list we retained. The bishops do not quite appear as a group, as Ranulf is ahead of Maurice. We extract the sub-list of bishops in positions $3,4,5,6,7,9$ and code them as numbers using the table in \fig~\ref{fig:bishop-names} yielding the registered list $Y=(23, 19, 30, 32, 5, 18)$.

 \begin{description}
\item[List id 2364] Year range [1107,1108], Witness List:
[1] Matilda II, wife of Henry I, queen of England;
[2] Anselm, abbot of Bec, archbishop of Canterbury; 
[3] Robert, Bloet, bishop of Lincoln; 
[4] Robert, de Limesey, bishop of Chester; 
[5] John, Bishop of Lisieux; 
[6] Richard, Bishop of Bayeux; 
[7] Gundulf, bishop of Rochester; 
[8] Ranulf, chancellor of Henry I; 
[9] Maurice, bishop of London; 
[10] William, de Roumare, earl of Lincoln 1140; 
[11] Robert, II, count of Meulan; 
[12] David, King of Scots; 
[13] Robert, de Ferrers, earl of Derby.
 
 \item[List id 2627] Year range [1130,1135], List:
 [1] Odo, Stigandus;  
 [2] Osmund, Boenot;  
 [3] Serlo, de Mansione, Malgerii;  
 [4] William, de Mirebel;   
 [5] Hugh, Buscard; 
 [6] Ranulf, de Iz;   
 [7] Grento, de Vals;  
 [8] Ralph, de Vals;   
 [9] William I, king of England;
 [10] John, archbishop of Rouen;
 [11] Hugh, bishop of Lisieux;  
 [12] Michael, bishop of Avranches;  
 [13] Durand, abbot of Troarn;  
 [14] Ainard, abbot of St Mary's, Dives;    
 [15] Nicholas, abbot of St Ouen;    
 [16] Roger, de Montgomery, earl of Shrewsbury; 
 [17] Roger, de Beaumont;  
 [18] William, de Breteuil, count; 
 [19] Henry I; 
 [20] Matilda II, wife of Henry I, queen of England; 
 [21] John, Bishop of Lisieux;  
 [22] Rabel, de Tancarville, chamberlain;   
 [23] Thurstan, Archbishop of York; 
 [24] Robert, earl of Gloucester;  
 [25] William, de Warenne II, earl of Surrey d. 1138; 
 [26] Robert, de Beaumont, earl of Leicester;  
 [27] Payn, Peverel.
 
 \item[List id 677] Year range [1139,1139], Witness List:
 [1]  Robert, Losinga, bishop of Hereford;  
 [2]  Philip, Bishop of Bayeux; 
 [3]  Roger, archdeacon of Fecamp, temp. Stephen; 
 [4]  Walter, archdeacon of Oxford;  
 [5]  Waleran, count of Meulan;  
 [6]  Ingelram de Say, temp. Stephen; 
 [7]  Walter, of Salisbury, temp. Stephen;  
 [8]  Robert, de Vere;  
 [9]  William, de Pont de, l'Arche.  
 \end{description}


\subsection{Dioceses of interest} \label{app:data-dioceses} The data display bishops with 31 distinct diocese names: Lincoln, Durham, Chester, Sherborne, 
  Winchester, Chichester, Bayeux, Lisieux,  
  Evreux, Sees,  Avranches,  Coutances, 
  Exeter, London, Rochester,  Worcester, 
  Salisbury,  Bath,  Thetford, Wells, 
  Le Mans, Hereford, Bangor, 
  Ely, St Davids,  Norwich,  
  Carlisle, St Asaph, Tusculum, 
  the Orkneys and  Llandaff.
We excluded the Bishop of Tusculum who appeared at the end of two long bishop-lists on the grounds that they they were not part of the group of interest (they would have been listed according to other criteria and in fact seem to have been regarded that way by the clerks as they were placed outside the group of bishops).
We removed bishops who only appeared in a single list.
Depending on the period $[t_1,t_2]$ of interest, this could reduce an included list from length two to one and removing that list could reduce the number of lists a bishop appeared in. We repeated this thinning to convergence.  For the period $[1080,1155]$ this removed $3$ lists of lengths 2, 2 and 3 so the number of lists went down from $374$ to $371$. In addition, six lists lost one bishop and one lost 3. It reduced the number of bishops from $81$ to $67$. The bishops of six dioceses, Sherborne, Wells, Le Mans, St Asaph, Tusculum and The Orkneys witnessed no lists in the final data set.

\subsection{Discussion of data registration} \label{app:data-registration-discuss}
Removing bishops from lists in this way was done to avoid carrying out inference for a subset of latent $U$-process parameters (see \sec~\ref{sec:prior-main}) which the data don't inform as we would be integrating out near-prior marginals using MCMC. If we include them, the $U$-dimension is $\dim(U)=K\times 1408$ (with $K=11$ in \sec~\ref{sec:poHB2aRS6} and $K=22$ in \app~\ref{app:K-equals-18-posterior}). Removing them reduces the latent dimension to $\dim(U)=K\times 1223$, a 13\% reduction in the number of continuous variables.
The prior is marginally consistent, so in this respect the analysis is unchanged. We loose $3$ of $374$ lists and $7$ lists are shortened, four loosing their last element.

Removing bishops from longer lists could remove information informing the parameters of the bishops remaining (in all the models we considered, including the poset and Plackett-Luce models). If we start with a list $Y=(j_1,...,j_n)$ with $n$ large then this supports the edge $\e{j_1}{j_n}$ relatively strongly. If the list is reduced to a list $(j_1,j_n)$ 
then the edge $\e{j_1}{j_n}$ is more weakly supported. If we have two lists $(j_1,j_2)$ and $(j_2,j_3)$ then we have support for $\e{j_1}{j_3}$. If we remove bishop $j_2$ then we loose this information.
These examples suggest that dropping bishops makes our estimated posets less certain, and contain fewer order relations.
However, dropping bishops can remove evidence against orders. If we have three lists $(j_1,j_2)$, $(j_2,j_3)$ and $(j_3,j_1)$ then there is evidence for no order between $j_1$ and $j_3$. If we remove $j_2$ then we are left with evidence for the order $\e{j_3}{j_1}$. 

In our setting 98\% of the lists were unchanged so the effect of thinning will be slight. In an earlier version of this paper we made a more substantial thinning of the data, dropping eleven dioceses. This reduced the the number of bishops to $59$ and shortened 10\% of the lists. Results were very similar to those here, down to small details (such as $\beta_{11}$ being ``out of order'' in \fig~\ref{fig:beta-poHB1aRS6} and the authority curves in \fig~\ref{fig:Uhat-poHB2aRS6} for matching dioceses being indistinguishable). Over the 76 years 1080-1155 we estimate around 13,000 linearly independent posterior probabilities for relations between bishops present in both analyses. The median absolute difference in estimated posterior probabilities $|\hat \xi^{(1,t)}_{\e {j_1}{j_2}}-\hat \xi^{(2,t)}_{\e {j_1}{j_2}}|$ in \eq~\ref{eq:post-mean-reln-probs} for relations between bishops in the old and new analyses (``1'' and ``2'') was approximately 3\%.

\begin{figure}
    \centering
    \hspace*{-0.3in}\includegraphics[width=7in, trim=1in 0.5in 1.3in 0.2in,clip]{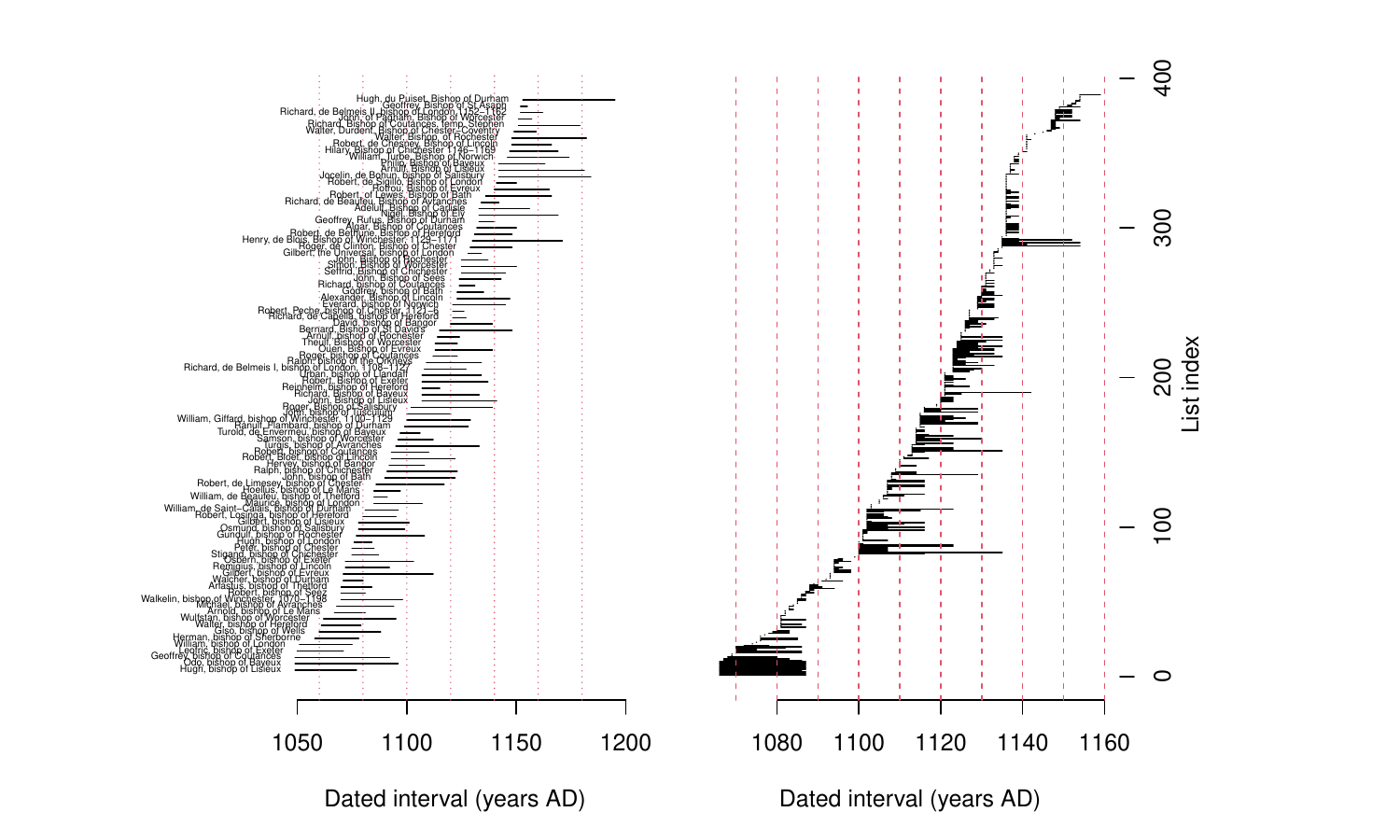} 
    \caption{Bishop (left) and list (right) date intervals. Each horizontal bar above left spans the interval $[b_j,e_j]$ over which the indicated bishop $j\in\M$ held their appointment. Each bar above right spans the uncertainty interval $[\tau^-_i,\tau^+_i]$ for list $i\in \I$.}
    \label{fig:lists-and-bishops}
\end{figure}


\subsection{Works by historians analysing witness lists}\label{app:lit-rev-history}

Historians have used the evidence of witness lists in a number of ways: (a) to determine the identity of those who were close to the grantors of documents (on the assumption that witnesses were likely present with the person in whose name a document was issued);  (b) to trace the career of individual witnesses (since if a witness appears after a certain date with a certain title, but appears without that title previously, it is reasonable to assume that that title was acquired before the date of the document in which it is first mentioned, and after those earlier ones in which it is not); and (c) as the basis for a statistical analysis of the witnesses mentioned in the lists.  Statistical techniques have, however, scarcely been used at all.

The first modern statistical analysis of royal witness lists was published by \cite{russell37}.   Russell’s aims resembled those of the current research since he attempted to use the evidence of witness lists from royal charters issued by King John (king of England 1199  – 1216) to determine the relative status of individuals mentioned in those lists. This started a debate. In 1938, a further article was published \citep{haskins38} which questioned Russell’s statistical methodology and thus, in turn, suggested his conclusions were unsound. Haskins writes ``the editor of the charter rolls for the reign of John, saw fit to deny explicitly the
suggestion that there was any order among the witnesses which might have been
dictated by considerations of rank or precedence'' and ``the precedence of witnesses to private grants is too erratic to serve as an index to their respective station''. We revisited \cite{russell37} and find the statistical reasoning (a permutation test) is incomplete rather than incorrect (a statistic measuring the dispersion of random permutations is compared to its value on data; they differ as anticipated but the variation isn't quantified). However, the point being made in the single paragraph which contains numbers, that the lists are too similar to be generated uniformly at random, is so obvious it is barely worth justifying. 
We agree that the lists are a noisy indicator of precedence. However, Haskins did not have available the tools of modern statistical inference for quantifying uncertainty in the structure behind the noise.

Witness lists did continue to be the subject of analysis, although this has been heavily biased towards attempts to discover the names of those most prominent in the royal household or administration.  A seminal article \citep{given-wilson91} on royal acta from the reigns of King Edward III (king of England 1327 –1377) and King Richard II (king of England 1377 – 1399) provided some exception to this.   Given-Wilson provided a quantitative as well as qualitative analysis; but, although he noted the general fact that individuals were grouped together into groups related to their status and societal role (bishops, earls and so on), Given-Wilson’s concern was not with relative status but rather with the frequency with which individuals attested royal charters.  He thus produced a series of tables in which individuals were grouped by category and in which the frequency of their appearance was measured. 

Historians of the period before the Norman Conquest in 1066, and of the years which immediately followed it, have paid closer attention to the ways in which witness lists can be used to illustrate questions of status.  However, as with historians of the later period, they have not applied statistical techniques to the question.  Amongst those who have considered the issue, \cite{keynes80} noted the fact that whilst again documents from the reign of King Æthelred II (king of England 978 – 1016) demonstrate the fact that particular groups of individuals always appeared together, these groups did not always appear in the same order;  \cite{baxter07} reviewed the relative positions amongst the earls in a similar period, and noted that the comparative positions of one earl in relation to another was subject to change as a reflection of political instability.   In a discussion more directly relevant to the current article, \cite{licence20} discussed the place of bishops in witness lists produced under King Edward the Confessor (king of England 1042 – 1066).   Licence detected a role for bishops in the production of documents for beneficiaries in their diocese, and suggested that generally speaking their involvement influenced their place in the list with the local bishop ‘concluding the senior group’ of witnesses; where the diocesan was not responsible for the drafting, Licence suggested that the man who was might occupy this position. 

In a review of royal acta produced following the Conquest, \cite{bates97} stressed the impact of a document’s diplomatic and whether its beneficiary was sited in Normandy or England.   Again, Bates observed the probability that witnesses were not always physically present when certain types of document were drawn up (but, it is worth noting that this need not have influenced the order in which witnesses were inscribed).  Bates further noted that a ‘more probing analysis – impossible to undertake here – would examine closely the pattern of attestations’.   It is the purpose of this paper, albeit for a slightly later period, to begin that analysis.  In particular, to examine the ways in which the witness lists of royal acta reflect the status and position of one particular group – the bishops – in the period under discussion.  At the Council of London of 1075, it was ruled that English bishops should enjoy the following order of precedence: Canterbury, York, London and then Winchester, followed by other bishops in the order of their appointment.  We seek here to examine whether or not the witness lists reflect any such order.  Later in the middle ages, the order of precedence was further extended to the bishops of other sees.  We thus also seek to show whether or not at this earlier date there were any indications that such an extended order was already being considered.

\newpage\section{Properties of the observation model}
\label{app:lkd-properties}

\subsection{Context dependence}\label{app:lkd-context-dependence}

Our queue-based idealisation of the witness process in \sec~\ref{sec:noise-free-lkd} led us to a context dependent model for random orders. For example, the poset $H=\{\e{2}{3}\}$ on $[M]=\{1,2,3\}$ has three linear extensions $\L[H]=\{(1,2,3), (2,1,3), (2,3,1)\}$ so if all the actors are present ($O=[M]$) then the probability actor $1$ comes before $2$ in a random list is one third. However, if only $1$ and $2$ are present ($O=\{1,2\}$) then $H[O]=\emptyset$ and $\L[\emptyset]=\{(1,2), (2,1)\}$ so $1$ beats $2$ with probability one half. This is called context dependence as the presence of actor $3$ changes the probability for $1$ come ahead of $2$. Plackett-Luce models, which satisfy the Choice Axiom of Luce \citep{luce77}, are context independent.

\subsection{Counting Linear extensions}
\label{app:counting-linear-extensions}
In this \app, which picks up from \sec~\ref{sec:noise-free-lkd}, we outline how we count linear extensions of a given poset $H\in \H_{[m]}$.

Recall from \sec~\ref{sec:noise-free-lkd} that $C(H)$ is the number of linear extensions of poset $H$ and $C_j(H)$ is the number that start with $j$. If $\T(H)=\{j\in [m]: C_j(H)>0\}$ is the set of ``top elements'' of $H$ then
\[ 
C(H)=\sum_{j\in \T(H)} C_j(H).
\]
This is computed using a suborder recursion.
If $j\in \T(H)$ and we let
\begin{equation}\label{eq:remove-one-bishop-suborder}
  H_{-j}=H[[m]\setminus \{j\}]    
\end{equation} 
give the poset $H$ with vertex $j$ removed then $C_j(H)=C(H_{-j})$, so we can prune away the vertices one at a time, starting at the top of the poset.
The count is available in closed form (or rapidly) for some subclasses of posets including the Bucket Orders and vertex-series-parallel orders (see \sec~\ref{sec:vsp-bucket-all}). For example, the empty poset, with $H=\emptyset$, has $C(H)=m!$. We implemented a simple R program using a recursion which terminates at these ``easy'' cases. We have also faster code based on \cite{koivisto19} which we used in later work. Links to these are shared in our online code.

\subsection{Proof of Proposition~\ref{prop:noise-free-special-case}}
\label{app:noise-free-special-case} The ``queue jumping'' error model reduces to the uniform distribution on linear extensions of the poset when the probability to jump up the queue (the noise probability $p$) is equal zero.

\noindent {\it Proposition~\ref{prop:noise-free-special-case}.}
{\it
   The noisy free and noisy models (\eqs~\ref{eq:lkd-noise-free}, \ref{eq:lkd-noisy-down} and \ref{eq:lkd-noisy-up}) all coincide at $p=0$. 
}
\begin{proof} Take $H\in\H_{[m]}$ and $Y\in\L[H]$ so that $C(H(Y_{j:m}))>0$ for each $j=1,\dots,m$ (a poset has at least one linear extension by \cite{dushnik1941}). Substituting $p=0$ in \eq~\ref{eq:lkd-noisy-down} and expanding,
\begin{align*}
  p_{(D)}(Y|H,p=0)&=\frac{C_{Y_1}(H[Y_{1:m}])}{C(H[Y_{1:m}])}\times \frac{C_{Y_2}(H[Y_{2:m}])}{C(H[Y_{2:m}])}\times\ldots\times \frac{C_{Y_{m-1}}(H[Y_{m-1:m}])}{C(H[Y_{m-1:m}])}\\
  &=C(H[Y_{1:m}])^{-1}
\end{align*}
which is $P(Y|H)$ in \eq~\ref{eq:lkd-noise-free}. The second line follows because the last numerator factor is equal one, $C_{Y_{m-1}}(H[Y_{m-1:m}])=1$ (the poset $H[Y_{m}]$ we get when we remove $Y_{m-1}$ is a poset with just a single node) and $C_{Y_j}(H[Y_{j:m}])=C(H[Y_{j+1:m}])\ne 0$ (the number of linear extensions of $H[Y_{j:m}]$ that start with $Y_j$ appearing in the numerator is equal to the number of linear extensions of $H[Y_{j+1:m}]$, the same poset but with $Y_j$ removed, and this is in the denominator of the next factor) and so the product of counts is telescoping. 

When $Y\notin\L[H]$ then there is $j\in [m]$ such that $C_{Y_j}(H[Y_{j:m}])=0$ (one of the actors was chosen out of order, so it didn't head any linear extension of $H[Y_{j:m}]$).
In this case we must have $C_{Y_j}(H[Y_{j:m}])\ne C(H[Y_{j+1:m}])$ because $C(H[Y_{j+1:m}])>0$ always. Since at least one numerator factor is zero, and all the denominator factors are positive, $p_{(D)}(Y|H,p=0)=0$ in this case.
It follows that $p_{(D)}(Y|H,p=0)=p(Y|H)$ for all $Y\in\P_{[m]}$. 

The proof that $p_{(U)}(Y|H,p=0)=p(Y|H)$ is similar.
\end{proof}

\section{Properties of priors}
\subsection{Proof of Proposition~\ref{prop:marginal-consistent}}\label{app:marginal-consistency}
When $\beta=0_{\Sb}$ we have $Z^{(t)}_j=U^{(t)}_j=Z(U^{(t)}_j,0_{\Sb};s)$ for each $t\in [B,E]$ and each $j\in \M_t$.
Let 
\[
\pi_{\H^{(B,E)}}(h|\rho,\theta,\beta=0_{\Sb})=\int \mathbb{I}_{h=h(Z(U,0_{\Sb};s))}\pi(U|\rho,\theta)\, dU
\]
give the marginal distribution of $h$ given $\rho,\theta$ and $\beta=0_{\Sb}$. The marginal distribution of $h$ is
\begin{equation}\label{eq:h-marginal-discrete-prior-beta0}
      \pi_{\H^{(B,E)}}(h|\beta=0_{\Sb})=\int_{[0,1]^2} \pi_{\H^{(B,E)}}(h|\rho,\theta,\beta=0_{\Sb})\pi(\rho,\theta)\, d\rho d\theta.
\end{equation}

We can remove $j\in\M$ ``from the beginning'' or from the realised poset ``at the end'' and get the same random poset. In the former, the random poset at time $t$ is $h(Z^{(t)}_{-j})$ where $Z^{(t)}_{-j}$ 
is an $m_t-1 \times K$ matrix with one independent row $Z^{(t)}_{j'}$ for each $j'\in \M_t\setminus \{j\}$. 
Let $h(Z_{-j})=(h(Z^{(t)}_{-j}))_{t=B}^E$ be a time series of partial orders realised in this way.
In the latter, we take a suborder at the end. Let $\M^{-j}_t=\M_t\setminus\{j\}$ and let $h^{(t)}_{-j}=h(Z^{(t)})[\M^{-j}_t]$ be the suborder at time $t$ when we remove $j$ from the order realised on the full $Z$-matrix. Let $h_{-j}=(h^{(t)}_{-j})_{t=B}^E$  give the time series. Let 
\[
\H^{(B,E)}_{-j}=\H_{\M^{-j}_B}\times \H_{\M^{-j}_{B+1}}\times ... \times \H_{\M^{-j}_E},
\]
so that $h(Z_{-j})$ and $h_{-j}$ are both in $\H^{(B,E)}_{-j}$.
We begin by stating the simple property really underpinning marginal consistency.

\begin{proposition}\label{prop:marginal-consistent-preliminary}
The random poset processes obtained by mapping the full $Z$-process to a poset process and removing $j\in\M$ from each poset in which it appears (``at the end''), or removing $j$ from the $Z$-process (``from the beginning'') are equal, so $h(Z^{(t)}_{-j})=h(Z^{(t)})[\M^{-j}_t\}]$ for $t\in[B,E]$.
\end{proposition}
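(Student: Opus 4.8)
The plan is to prove the identity by directly unwinding the two relevant definitions and observing that the edge sets they produce coincide term by term. The crucial structural observation is that the map $Z\mapsto h(Z)$ of \eq~\ref{eq:z-to-h-mapping} is \emph{local in its rows}: whether a relation $\e{i_1}{i_2}$ lies in $h(Z^{(t)})$ is decided entirely by the entrywise comparison of the two paths $(k,Z^{(t)}_{i_1,k})_{k=1}^K$ and $(k,Z^{(t)}_{i_2,k})_{k=1}^K$, with no reference to any other row of $Z^{(t)}$. Deleting a row therefore cannot change which pairs among the remaining actors are ordered, which is the whole content of the claim.

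First I would fix $t\in[B,E]$. On one side, the suborder definition \eq~\ref{eq:suborder-defn} applied to $H=h(Z^{(t)})$ and $O=\M^{-j}_t$, combined with \eq~\ref{eq:z-to-h-mapping}, gives
\[
h(Z^{(t)})[\M^{-j}_t]=\{\e{i_1}{i_2}\in\M^{-j}_t\times\M^{-j}_t:\ Z^{(t)}_{i_1,k}>Z^{(t)}_{i_2,k}\ \text{for all}\ k\in[K]\}.
\]
On the other side, $Z^{(t)}_{-j}$ is by construction the matrix whose rows are indexed by $\M^{-j}_t=\M_t\setminus\{j\}$ with row $i$ equal to $Z^{(t)}_i$ (deleting the $j$-th row leaves every other entry unchanged), so evaluating \eq~\ref{eq:z-to-h-mapping} with ground set $\M^{-j}_t$ yields the same set. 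Since $t$ was arbitrary, the two time series $h(Z_{-j})=(h(Z^{(t)}_{-j}))_{t=B}^E$ and $h_{-j}=(h(Z^{(t)})[\M^{-j}_t])_{t=B}^E$ agree coordinatewise, which is the assertion.

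I do not expect a genuine obstacle here: the statement is a definitional identity expressing that the latent-variable map is row-local and hence commutes with deletion of an actor. The only points worth stating carefully are that both sides share the common ground set $\M^{-j}_t$ (so both partial orders live in $\H_{\M^{-j}_t}$ and the comparison is well posed), and that restricting $Z^{(t)}$ to the rows indexed by $\M^{-j}_t$ is, by definition, exactly the object $Z^{(t)}_{-j}$. This lemma is what the subsequent argument needs: it lets one integrate out the law of the deleted row without disturbing the law of the retained partial order, which is the substance of marginal consistency in Proposition~\ref{prop:marginal-consistent}.
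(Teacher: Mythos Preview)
Your proof is correct and takes essentially the same approach as the paper: both rest on the observation that the relation $\e{i_1}{i_2}$ in \eq~\ref{eq:z-to-h-mapping} depends only on rows $i_1$ and $i_2$ of $Z^{(t)}$ and not on any third row $j$. The paper states this in a single sentence, while you unpack the definitions of the suborder and of $Z^{(t)}_{-j}$ explicitly and verify the edge sets coincide, which is a welcome elaboration but not a different argument.
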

\begin{proof}
This follows from the fact that the relations $\e{j_1}{j_2}$ in $h$ determined by \eq~\ref{eq:z-to-h-mapping} are not affected by the presence or absence of a third element $j$.
\end{proof}

\noindent {\it Proposition~\ref{prop:marginal-consistent}}.
{\it
Let $g\in \H^{(B,E)}_{-j}$ be given. The prior in 
\eq~\ref{eq:h-marginal-discrete-prior-beta0} is marginally consistent,
\[
\pi_{\H^{(B,E)}_{-j}}(g|\beta=0_{\Sb})=\sum_{h\in \H^{(B,E)}} \mathbb{I}_{g=h_{-j}} \pi_{\H^{(B,E)}}(h|\beta=0_{\Sb}).
\]
for each $j\in \M$.
}
\begin{proof}
Marginal consistency follows from Proposition~\ref{prop:marginal-consistent-preliminary} using a Chinese Restaurant Process style construction.
Let $\M=\{j_1,...,j_M\}$ (labelling in any order). Add each actors (``customer'') $j_1,...,j_M$ one at a time, using the fact that the $Z_j$-processes are jointly independent for $j\in \M$. 

\begin{enumerate}
    \item Initialise:
\begin{enumerate}
    \item Fix $K\ge 1$, simulate $(\rho,\theta)\sim\pi(\rho,\theta)$ and set $\beta=0_{\Sb}$.
    \item For $t\in[B,E]$ set $\M_{t,1}=\{j_1\}\cap\M_t$, $h^{(t)}_{(1)}=\emptyset$ and simulate $Z_{j_1}\sim \mbox{VAR}^{(b_{j_1},e_{j_1})}_{K,\rho,\theta}(1)$ (recall $\beta=0_{\Sb}$). 
\end{enumerate}
    \item For $i=2,...,M$ do
    \begin{enumerate}
        \item Simulate $Z_{j_i}\sim \mbox{VAR}^{(b_{j_i},e_{j_i})}_{K,\rho,\theta}(1)$. 
        \item For $t\in[B,E]$ set $\M_{t,i}=\{j_1,...,j_i\}\cap\M_t$ and construct $h^{(t)}_{(i)}$ by adding to $h^{(t)}_{(i-1)}$ the order relations between $j_i$ and $j_1,...,j_{i-1}$, as follows:
        For $j\in \{j_1,...,j_{i-1}\}$ the order relation $\e{j_i}{j}$ (resp. $\e{j}{j_i}$) is added if $Z^{(t)}_{j_i,k}>Z^{(t)}_{j,k}$ (resp. $Z^{(t)}_{j,k}>Z^{(t)}_{j_i,k}$) for each $k=1,...,K$.
    \end{enumerate}
    \item Set $h^{(t)}=h^{(t)}_{(M)}$ for each $t\in [B,E]$ and $h=(h^{(t)})_{t=B}^E$.
\end{enumerate}
The final partial-order time series $h\sim \pi_{\H^{(B,E)}}(\cdot|\beta=0_{\Sb})$ does not depend on the order in which actors $j_1,...,j_M$ arrive so we may make $j_M=j$ the last arrival. Also, by Proposition~\ref{prop:marginal-consistent-preliminary}, $g=h^{(t)}_{-j}=h^{(t)}_{(M-1)}$ is the poset state of the process for each $t\in [B,E]$ before $j_M=j$ arrives. By construction,
$g\in \H^{(B,E)}_{-j}$ and $g\sim \pi_{\H^{(B,E)}_{-j}}(\cdot|\beta=0_{\Sb})$. If we stop the process at step $i=M-1$ then the poset we obtain must be the marginal over all continuations of the process to the next step, giving Proposition~\ref{prop:marginal-consistent}.
\end{proof}

Every poset $h^{(t)}_{(i)},\ i=1,...,M-1$ is a suborder of the posets which come after it in the actor sequence, and adding or removing an actor does not change the order relations between other actors who have already arrived. This may be surprising, as we might expect new relations to add a cascade of relations by transitive closure. However,
it is straightforward, as the relation between $Z_{j_1}$ and $Z_{j_2}$ in \eq~\ref{eq:z-to-h-mapping} isn't affected by the presence or absence of $Z_{j_3}$.

We removed covariates in the analysis above. In some settings, covariate levels $s_{t,j},\ j\in \M_t$ characterise relations {\it between} actors active at time $t$ and would change if we add or remove a actor. However, if covariate levels $s_{t,j}$ are intrinsic then they don't change when we add and remove actors. 
In that case $Z^{(t)}_{j_1}=U^{(t)}_{j_1}+1_K\beta_{s_{t,j_1}}$ doesn't change when we add or remove actor $j_2$ and so relations between $j_1$ and a third actor $j_3$ don't change. It is then straightforward to show that $\pi_{\H^{(B,E)}}(h|\beta)$ is consistent for every $\beta$ and so $\pi_{\H^{(B,E)}}(h)$ is also.


\subsection{Proof of Proposition~\ref{prop:basis-Z-for-h}}\label{app:basis-Z-for-h} In this section we prove the universal expression property claimed in \sec~\ref{sec:prior-properties-Krho}.\\

\noindent Proposition~\ref{prop:basis-Z-for-h}.
{\it Suppose $\min_{t\in [B,E]} m_t \ge 4$. The probability $\pi_{\H^{(B,E)}}(h)$ in \eq~\ref{eq:h-marginal-discrete-prior}, given by the generative model \eq~\ref{eq:prior-for-h} with $K\ge \lfloor \Sm/2\rfloor$, assigns a positive probability mass $\pi_{\H^{(B,E)}}(h)>0$ to every time-series $h\in\H^{(B,E)}$.}\\

Before giving a proof we note that the condition $\min_{t\in [B,E]} m_t \ge 4$ requires the number of active actors to always exceed four. This condition is needed for Hiraguchi's Theorem to hold. See \cite{Hiraguchi51} for the original result and \cite{Bogart73a} for an accessible proof and discussion. In our application in \sec~\ref{sec:data} $m_t\ge 11$ so the condition is satisfied.

\begin{proof}
We first show that {\it (1) every poset in the time series $h$ has dimension at most $K$.} A total order is a poset in which every pair of elements is ordered. The intersection of $K$ total orders $\ell=(\ell^{(1)},...,\ell^{(K)})$ on the ground set $[m]$ is a poset $H=\cap_{k=1}^K \ell^{(k)}$ on $m$ elements satisfying $H\in \H_{[m]}$. 
A given poset may be decomposed into total orders in many different ways (and always at least one, \cite{dushnik1941}). For example, the poset $H$ in \fig~\ref{fig:po-example} has three linear extensions, $Y^{(1)}=(1,2,3,4,5), Y^{(2)}=(1,2,4,3,5)$ and $Y^{(3)}=(1,4,2,3,5)$. We can turn these into total orders in the obvious way, setting $\ell^{(i)}=\{\e{Y^{(i)}_{j}}{Y^{(i)}_{j'}}\in Y^{(i)}\times Y^{(i)}: j<j'\},\ i=1,2,3$. Then $H$ is both $H=\ell^{(1)}\cap\ell^{(3)}$ and $H=\ell^{(1)}\cap\ell^{(2)}\cap\ell^{(3)}$. 

The dimension of a poset $\dim(H)$ is the smallest number of total orders that intersect to give $H$. It is known \citep{Hiraguchi51} that if $H\in\H_{[m]}$ then $\dim(H) \le \lfloor m/2\rfloor$. Restoring time to the notation, since $D=\max_t m_t$, we can take any $K\ge \lfloor D/2\rfloor$ and be sure that $\dim(h^{(t)})\le K$ for every $t\in [B,E]$ and all $h\in \H^{(B,E)}$, where $\H^{(B,E)}$ (defined in \eq{eq:h-sequence-space-defn}) is the set of all possible poset time series $h=(h^{(B)},h^{(B+1)},\ldots,h^{(E)})$ on the sequence of ground sets $\M_B,\M_{B+1},\ldots,\M_E$.

We next show that {\it (2) the mapping $h^{(t)}=h(Z^{(t)})$ takes the intersection over columns of the rank order for each column of $Z^{(t)}$.} Consider a $m_t\times K$ matrix $Z^{(t)}$ in a particular year $t\in [B,E]$. The rule mapping $Z^{(t)}$ to $h^{(t)}$ in \eq~\ref{eq:z-to-h-mapping} just takes the intersection of the rank orders of the columns of $Z^{(t)}$: for $k=1,...,K$, the rank order in column $k$ is
\[
\ell^{(t,k)}=\{\e ij\in \M_t\times\M_t: Z^{(t)}_{i,k}>Z^{(t)}_{j,k}\} 
\]
so \eq~\ref{eq:z-to-h-mapping} is the same as
\[
h(Z^{(t)})=\cap_{k=1}^K \ell^{(t,k)}.
\]
Write $\ell^{(t)}=(\ell^{(t,k)})_{k=1}^K$ for the $K$ total orders and write $\ell^{(t)}=\ell(Z^{(t)})$ to show that $\ell^{(t)}$ is determined from $Z^{(t)}$.

We next show {\it (3) for any partial-order time series $h$ there is a set of latent-variable time series $Z$ each realising $h$. The set has non-zero measure.} We realise any particular time series $h\in \H_{[A,B]}$ if we realise $Z$ such that $h(Z)=h$,
so we realise $h$ if we realise a sequence of $Z^{(t)}$-matrices with column rank orders $\ell^{(t)}=\ell(Z^{(t)})$ that intersect to give $h^{(t)}$. Since $\dim(h^{(t)})\le K$ we know that such a sequence of rank orders exists, so we get $h$ with positive prior probability if the probability to realise such a $Z$ is positive.

The prior for $Z$ is given in terms of a continuous mixture (over $\rho$ and $\theta$) of multivariate normal $U$ and $\beta$ prior densities. It has a density with respect to Lebesgue measure on the space
\[
\Z^{(B,E)}=\R^{K m_B}\times \R^{K m_{B+1}} \times ... \times \R^{K m_E}
\]
which is strictly positive for all $Z\in \Z^{(B,E)}$ (so every possible sequence of $Z$-matrices can be realised). Let $h\in\H^{(B,E)}$ be given and let $\ell^{(t)}$ be any decomposition of $h^{(t)}$ into $K$ total orders (at least one decomposition exists, and if $\dim(h^{(t)})<K$ then we just repeat total orders as this does not change their intersection). Let
\[
\Z_{t,\ell}=\{z\in \R^{K m_{t}}: \ell(z)=\ell^{(t)}\}
\]
be the set of $Z$-matrices in $\R^{K m_{t}}$ representing year $t$ that yield the desired column rank orders. 

The sets $\Z_{t,\ell}$ are not empty: we can take the $m_t$ entries in the $k$'th column $Z^{(t)}_{\M_t,k}=(Z^{(t)}_{j,k})_{j\in \M_t}$ of $Z^{(t)}$ to be any set of real numbers that match the column rank order constraint
imposed by $\ell^{(t,k)}$. It follows that the volume measure of $\Z_{t,\ell}$ is strictly greater than zero.

Any $z\in \Z_{t,\ell}$ satisfies $h(z)=h^{(t)}$. However $\ell^{(t)}$ above is just one decomposition of $h^{(t)}$ into $K$ (not necessarily distinct) total orders, so $Z_{t,\ell}$ will typically be smaller than the set of all $z$'s giving $h^{(t)}$. Let
\[
    \Z(h^{(t)})=\{z\in \R^{m_{t} K}: h(z)=h^{(t)}\}
\]
be that set, with $\Z(h^{(t)})\supseteq \Z_{t,\ell}$.
Let
\[
\Z(h)=\Z(h^{(B)})\times \Z(h^{(B+1)})\times ... \times \Z(h^{(E)}),
\]
and
\[
\Z_{\ell}=\Z_{B,\ell}\times \Z_{B+1,\ell}\times ... \times \Z_{E,\ell}
\]
so that again, $\Z(h)\supseteq \Z_{\ell}$ and the measure of $\Z_{\ell}$ is not zero. 

Finally, {\it (4) the prior probability to realise any time series $h\in \H^{(B,E)}$ is not zero.} Indeed,
\begin{align*}
    \pi_{\H^{(B,E)}}(h)&=\Pr(Z\in \Z(h))\\
    &\ge \Pr(Z\in \Z_\ell)\\
    &>0,
\end{align*}
as the density of $Z$ is not zero on the set $\Z_\ell$ and this set has non-zero measure.
\end{proof}

\newpage\section{Prior and Posterior depth distributions}\label{app:prior-simulation}

In this section we report some summaries from prior simulation of depth distributions and compare them with posterior depth distributions. We have said that we would like prior depth distributions to be reasonably uniform. In \fig~\ref{fig:prior-depth-dbns-curves-NF9-18} we plot prior depth distributions. There is one curve for each of the $T=76$ years showing the prior depth distribution for that year. The two plots show distributions for $K=11$ and $K=22$ (left and right respectively).

In \sec~\ref{sec:prior-prob-dbns} we mentioned that the uniform prior on posets $H\sim \mbox{Unif}(\H_{[m]})$ concentrates on posets of depth three as $m\to \infty$ \citep{kleitman1975asymptotic}. We illustrate that along with the depth distributions of our priors.
In \fig~\ref{fig:prior-depth-dbns-curves-NF9-18} (left and right), the dashed curve shows the prior depth distribution determined by the uniform prior. 
A random poset $H\sim \mbox{Unif}(\H_{[m]})$ simulated using the MCMC algorithm given in \citet{muirwatt15} is shown in \fig~\ref{fig:uniform-po-50}. This illustrates the rather dramtic weighting present in this prior. 

Our own prior distributions tend to rule out partial orders of depth 1 and are otherwise reasonably flat, weighting somewhat in favour of lower depth. Taking $K=22$ raises the prior probability for depth 1 slightly. This is hard to see in \fig~\ref{fig:prior-depth-dbns-curves-NF9-18} but visible in the left panels in \figs~\ref{fig:prior-post-depth-boxes-poHB2aRS6} and \ref{fig:prior-post-depth-boxes-poHB17aRS4}. \fig~\ref{fig:prior-post-depth-boxes-poHB2aRS6} shows prior and posterior depth distributions for the analysis in \sec~\ref{sec:poHB2aRS6} with $K=11$. We see that the whiskers do not extend to depth 1 in all years in the prior plot at left.  \fig~\ref{fig:prior-post-depth-boxes-poHB17aRS4} shows prior and posterior depth distributions for the analysis in \app~\ref{app:K-equals-18-posterior} which is the same as that in shown in \fig~\ref{fig:prior-post-depth-boxes-poHB2aRS6} except that $K=22$. We see that the whiskers do now extend to depth 1 in all years in the prior plot at left.  

The prior we have given has reasonably flat support over depths that the data actually supports.
We may be concerned that the depths with highest posterior probability in \figs~\ref{fig:prior-post-depth-boxes-poHB2aRS6} and \ref{fig:prior-post-depth-boxes-poHB17aRS4} coincide with relatively higher prior-probability depths in \fig~\ref{fig:prior-depth-dbns-curves-NF9-18} (so, around 2-5). However, this is not prior domination. The data easily dominates the prior. We show this in \app~\ref{app:synth-total-orders} where we take synthetic data with the same list content (so the same $o_i,\ i\in\I$ as the real data, but simulated data $y'_i,\ i\in \I$).
We take ``true'' partial orders which are complete (or near complete) orders so the true depth is around $m_t,\ t\in [B,E]$ for each year. We reconstruct this depth well, so the prior is in no way dominating the reconstructed depth. At the end of \app~\ref{app:synth-total-orders} we remark on the corresponding issue of potential prior bias at depth 1.  


{
\begin{figure}[H]
    \centering
    \begin{tabular}{cc}
    \hspace*{-0.0in}\includegraphics[width=2.6in, trim=0in 0.2in 0.4in 0.7in, clip]{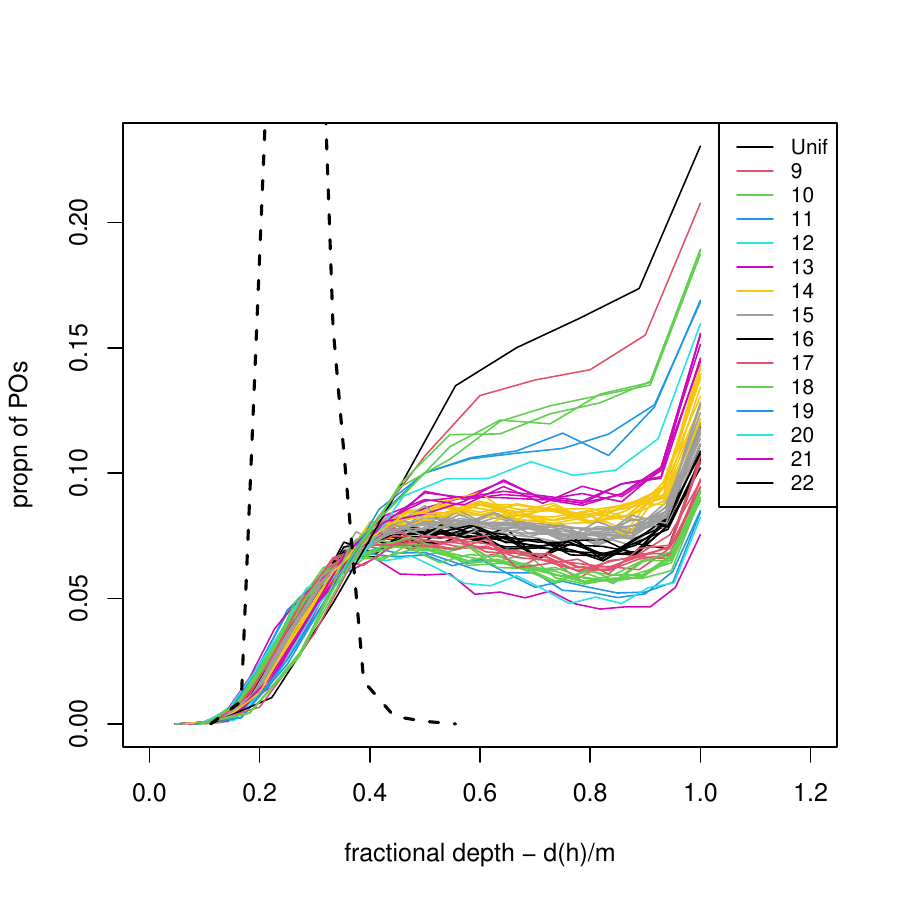}
    &
    \includegraphics[width=2.6in, trim=0.4in 0.2in 0in 0.7in, clip]{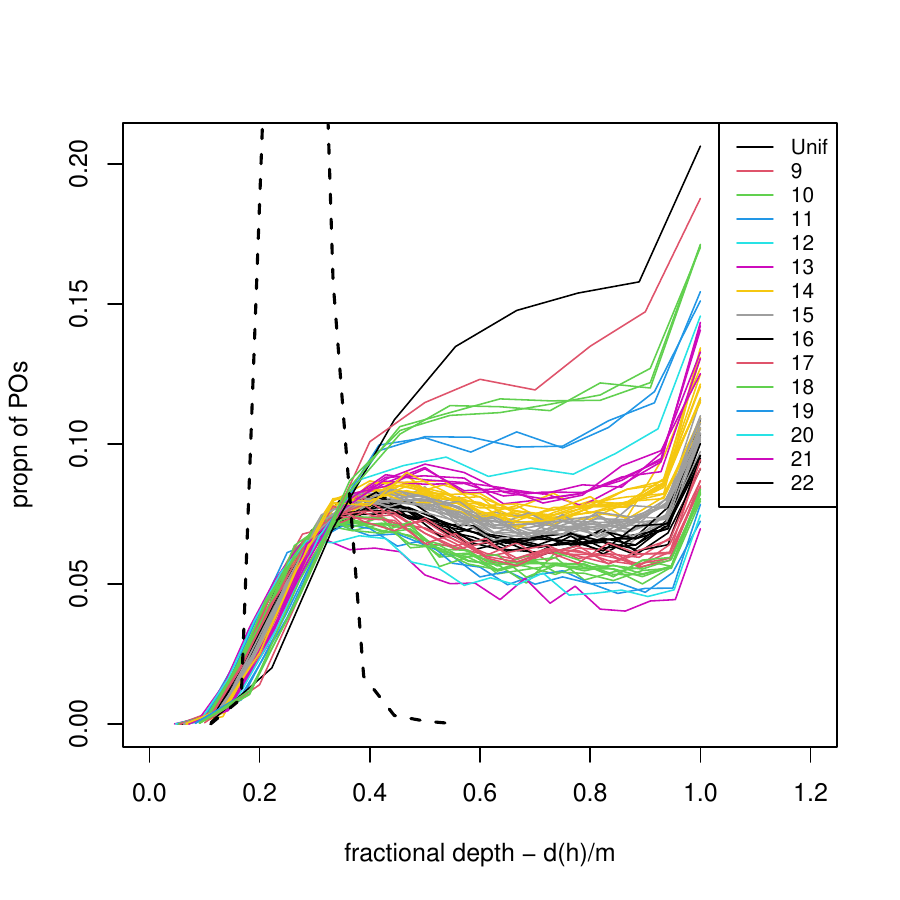}
    \end{tabular}
    \caption{Monte-Carlo estimates of the prior depth
    distribution for the poset at each time - using the prior for $h$ summarised in \sec~\ref{sec:time-series-posterior}: (Left) $K=11$ columns in the $Z$-matrix; (Right) $K=22$. This distribution varies as the number of bishops (indicated by color) in the poset varies over time. The $x$-axis shows relative depth $d(h^{(t)})/m_t$ (equal one for a total order). The dashed curve is the depth distribution for a uniform poset with $m_t=18$. }
    \label{fig:prior-depth-dbns-curves-NF9-18}
\end{figure}

\begin{figure}[H]
    \centering
    \begin{tabular}{cc}
    \hspace*{-0.7in}\includegraphics[width=3.3in, trim=0.0in 0.2in 0.4in 0.7in, clip]{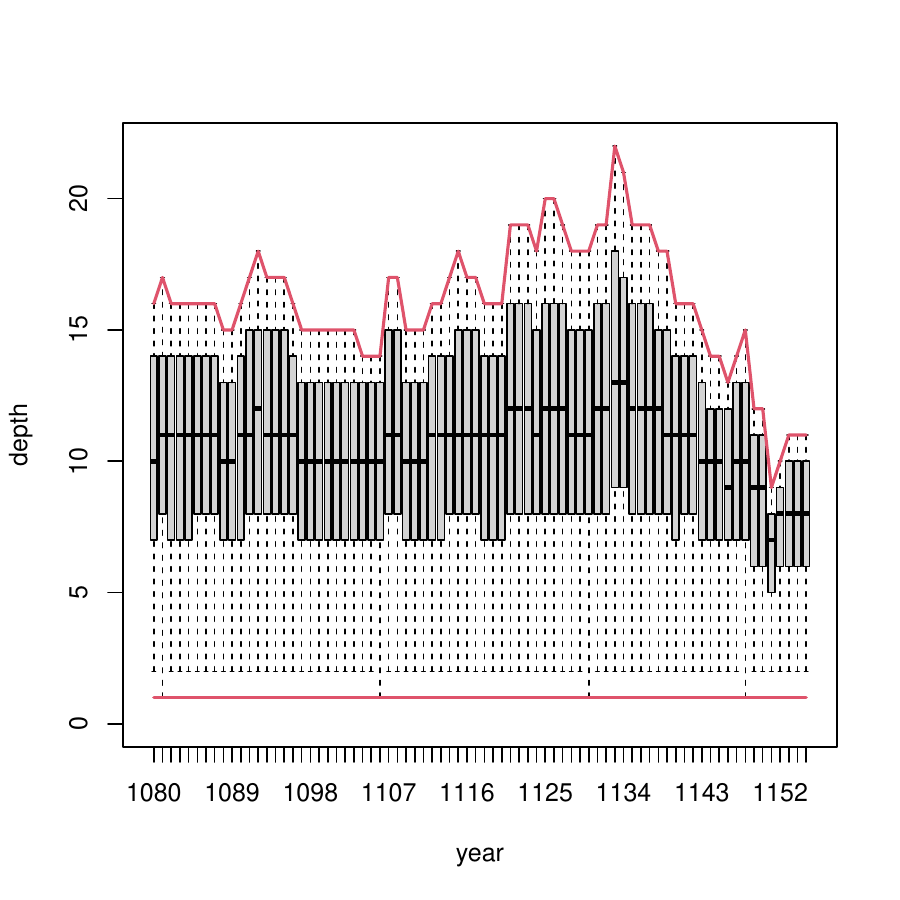}
    &
    \includegraphics[width=3.3in, trim=0.4in 0.2in 0.0in 0.7in, clip]{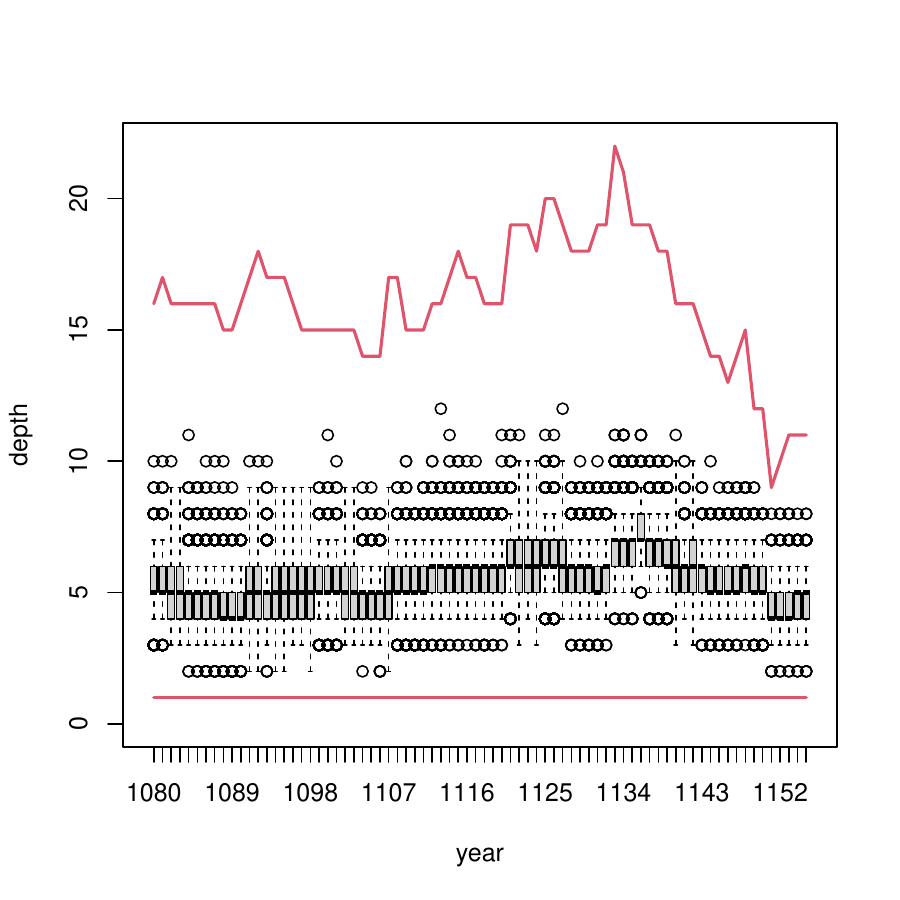}
    \end{tabular}
    \caption{(Left) Prior and (Right) posterior depth distributions from the analysis in \sec~\ref{sec:poHB2aRS6} with $K=11$. The x-axis shows years. Each box represents the posterior distribution for its year. It is computed using prior samples (at Left) and MCMC posterior sample depths (at Right). The upper and lower bounding lines indicate the least (equal one) and greatest (equal $m_t$ in year $t$) depth possible in a given year.}
    \label{fig:prior-post-depth-boxes-poHB2aRS6}
\end{figure}

\begin{figure}[H]
    \centering
    \begin{tabular}{cc}
    \hspace*{-0.7in}{\includegraphics[width=3.3in, trim=0.0in 0.2in 0.4in 0.7in, clip]{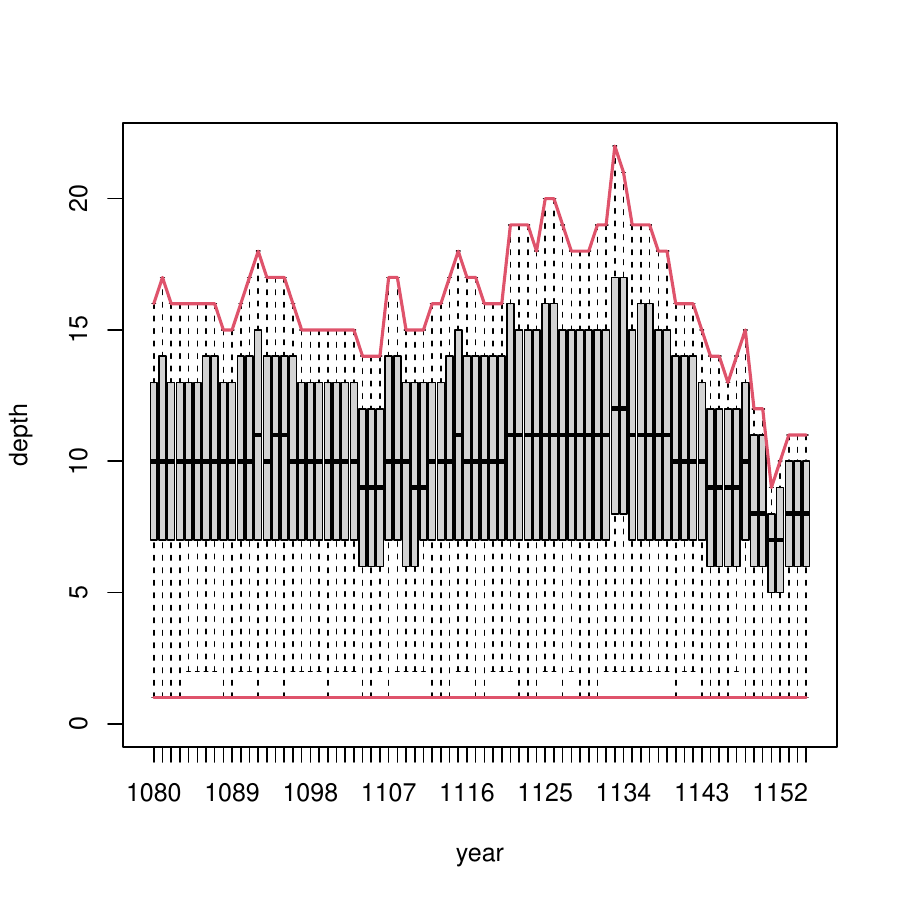}} &
    {\includegraphics[width=3.3in, trim=0.4in 0.2in 0.0in 0.7in, clip]{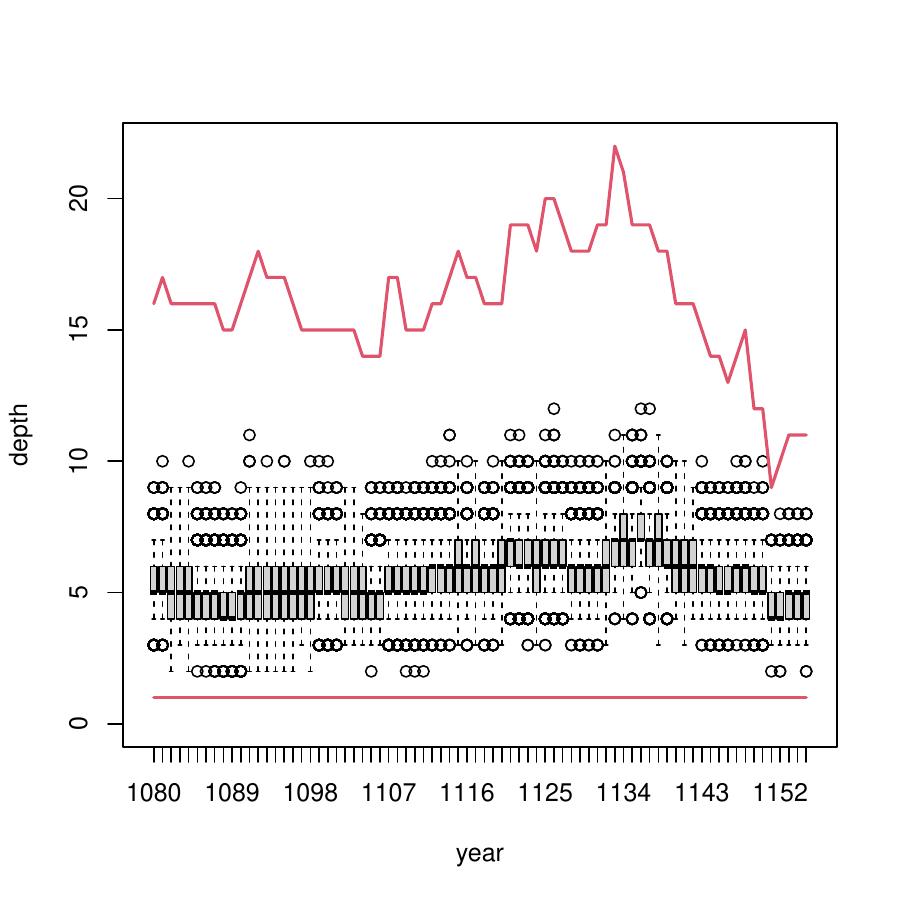}}
    \end{tabular}
    \caption{(Left) Prior and (Right) posterior depth distributions from the analysis in \app~\ref{app:further-results}, as \fig~\ref{fig:prior-post-depth-boxes-poHB2aRS6}, here $K=22$.}
    \label{fig:prior-post-depth-boxes-poHB17aRS4}
\end{figure}

\begin{figure}[H]
    \centering
    {\includegraphics[width=3in, trim=1.2in 1.1in 0.8in 0.9in, clip]{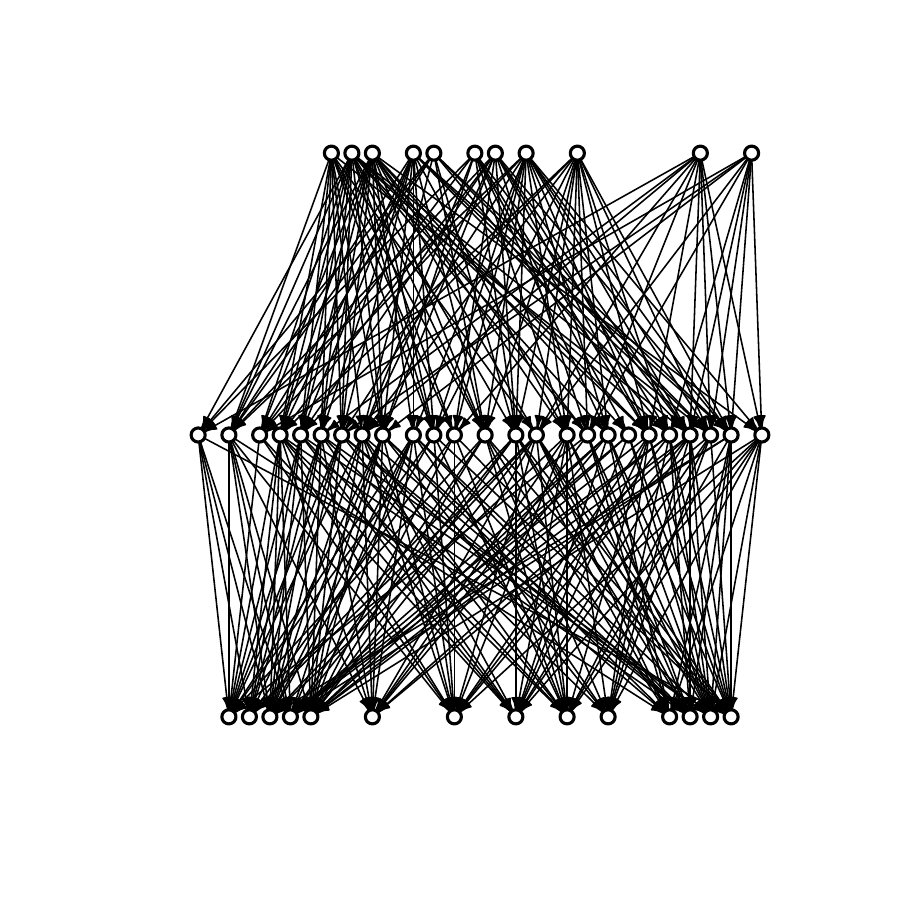}}
    \caption{A poset on 50 nodes drawn approximately uniformly at random from $\H_{[50]}$ using the MCMC algorithm given in \citet{muirwatt15}, illustrating the concentration on orders of depth three.}
    \label{fig:uniform-po-50}
    \vfill\eject
\end{figure}
}

\newpage\section{MCMC} \label{app:mcmc}
Proposals for $\rho$ and $\theta$ (which are in $[0,1]$) are prior draws with a Hastings ratio which does not involve the final likelihood factor in \eq~\ref{eq:posterior}. Candidate states $\beta_r\to{\beta'}_r,\ r=1,...,\Sb$ for the components of $\beta\in \B_0$ or $\B_{\Sb}$ are simple random walk proposals with a fixed bandwidth (rejecting if the order condition in $\B_{\Sb}$ is violated). 

Proposals for the $\mbox{VAR}^{(b_{j},e_{j})}_{K,\rho,\theta}(1)$ time series $U$
for each component $j\in \M_t$ target each time $t\in [b_j,e_j]$ in turn. The candidate state $\tilde U^{(t)}_j\in \R^K$ is a draw from the prior conditioned on $\rho$ and $\theta$ and the values $U^{t-1}_j,U^{t+1}_j$ of the process before and after the target vector, so that
\begin{align*}
\tilde U^{(t)}_j &\sim N\left(\frac{\theta}{1+\theta^2}(U^{(t-1)}_j+U^{(t+1)}_j) ,\frac{1-\theta^2}{1+\theta^2}\Sigma^{(\rho)}\right),\ & \mbox{if $b_j<t<e_j$},\\
    \tilde U^{(b_j)}_j &\sim N\left(\theta U^{(b_j+1)}_j,(1-\theta^2)\Sigma^{(\rho)}\right),\ &
    \mbox{if $t=b_j<e_j$},\\
    \tilde U^{(e_j)}_j &\sim N\left(\theta U^{(e_j-1)}_j,(1-\theta^2)\Sigma^{(\rho)}\right)\ &
    \mbox{if $b_j<t=e_j$},\\
    \tilde U^{(t)}_j &\sim N\left(0_K,\Sigma^{(\rho)}\right)\ &
    \mbox{if $t=b_j=e_j$}
\end{align*}
with special cases at the start and end of the active period for bishop $j$.
This determines the candidate state $\tilde U$ (which differs from $U$ at just a single vector $U^{(t)}_j\to \tilde U^{(t)}_j$). 

At $U$ and $\beta$ updates we update $\tilde Z=Z(\tilde U,\beta;s)$ and $\tilde Z=Z(U,\tilde \beta;s)$ respectively to determine the candidate poset 
$\tilde h=h(\tilde Z)$. In a $U$-update at year $t$, the poset series $h$ and $\tilde h$ are equal except for the replacement $h^{(t)}\to \tilde h^{(t)}$ so cancellations leave the Hastings ratio depending on the likelihood for lists $\I_t(\tau)=\{i\in\I: \tau_i=t\}$ only. The acceptance probability is
\[
\alpha(\tilde U^{(t)}|U^{(t)})=\min\left\{1,\ \prod_{i\in\I_t(\tau)}\frac{p(y_i|\tilde h^{(t)}[o_i],p)}{p(y_i|h^{(t)}[o_i],p)}\right\}.
\] 
Updates to components of $\beta$ affect $Z$ over all times so the full likelihood ratio is needed. 

Proposals for $\tau$ are prior draws. The Hastings ratio depends only on likelihood factors for the list $i\in\I$ for which the date $\tau_i$ is being updated. If $\tilde\tau_i$ is the candidate value then the acceptance probability is
\[
\alpha(\tilde\tau_i|\tau_i)=\min\left\{1,\ \frac{p(y_i|h^{(\tilde\tau_i)}[o_i],p)}{p(y_i|h^{(\tau_i)}[o_i],p)}\right\}
\]
Proposals for $p$ are also prior draws. This time the Hastings ratio is the full likelihood ratio, $p(y|h,\tau,\tilde p)/p(y|h,\tau,p)$.

MCMC runs were initialised at either disordered or ordered initialisations. In the disordered initialisation all parameters were given starting values as far as possible from equilibrium values, the effects $\beta=0_{\Sb}$, the times $\tau$ are prior draws subject to $\tau\in [B,E]^N$ and the
$U$-matrices specified so that the order was empty or near empty.
In the ordered initialisation parameters start at values that might be typical for equilibrium, the effects $\beta$ and the lists times $\tau$ as for disordered, and the $U$-matrices specified so that the initial posets $h^{(t,0)}$ are prior draws which dont conflict the data. This gives much deeper posets for the ordered initialisation. 
We checked that two runs started in ordered and disordered initialisations gave the same posterior densities for marginal parameter posteriors. We checked effective sample sizes and inspected MCMC traces.




\newpage\section{Posterior summary statistics}\label{app:posterior-summary-stats}
\subsection{Consensus Partial Order}\label{app:CPO}
We now define our principle posterior summaries.
We seek a point estimate of the unknown true evolving status hierarchy of posets $h$. The consensus poset with threshold $\xi\in [0,1]$ displays all the relations supported at time $t$ with estimated posterior probability 
\begin{equation}\label{eq:post-mean-reln-probs}
\hat \xi^{(t)}_{\e {j_1}{j_2}}=L^{-1}\sum_{l=1}^L \mathbb{I}_{\e ij\in h^{(t,l)}}
\end{equation}
greater than or equal $\xi$. It is a directed graph
\begin{equation}\label{eq:CPO-defn}
  \bar h^{(t)}(\xi)=(E_t(\xi),\M_t)  
\end{equation}
on the active bishops in year $t$ with edges 
\[
E_t(\xi) =\left\{\e {j_1}{j_2}\in \M_t\times \M_t:\hat \xi^{(t)}_{\e {j_1}{j_2}}\ge \xi \right\}.
\]
The consensus poset need not be acyclic, even when $\xi\ge 0.5$, so $\bar h^{(t)}(\xi)\not\in \H_t$ is possible.
This seems to be possible but improbable: all the consensus partial orders in this paper are partial orders.

\subsection{Testing for an effect due to seniority} \label{app:bayes-factor-decreasing} In \sec~\ref{sec:poHB1aRS6} we made an analysis of the full data with no constraint on the seniority effects, so $\beta\in\B_0$. Here we test for $\beta\in\B_\S$. 

We estimate a Bayes factor for the seniority effects $\beta_r,\ r=1,...,S$ to be decreasing with decreasing seniority (so increasing seniority-rank $r$). Let $\B_{S'}=\{\beta\in \B_0: \beta_1>\beta_2>...>\beta_{S'}\}$ be the event that the first $S'\le \Sb$ effects are ordered. It is feasible to compare models with $\beta\in \B_0$ against $\beta\in\B_{S'}$ for $S'$ up to about $S'=8$ using a Savage-Dickey estimator, as $\B_{S'}\subset\B_0$. We have
\begin{align}
    B_{S',0}&=\frac{p(y|\beta\in \B_{S'})}{p(y|\beta\in \B_0)}\nonumber\\
    &=\frac{\pi(\beta\in\B_{S'}|y)}{\pi(\beta\in\B_{S'})} \label{eq:beta-Bayes-factor}\\
    &=S'!\, E_{\rho,\theta,U,\beta,\tau,p|y}(\mathbb{I}_{\beta\in\B_{S'}}),\nonumber
\end{align}
as the marginals $\pi(\beta\in \B_0|y)=\pi(\beta\in \B_0)=1$ when there is no constraint, and $\pi(\beta\in \B_{S'})=1/S'!$. The expectation can be estimated using samples from the unconstrained posterior $\pi(\rho,\theta,U,\beta,\tau,p|y)$. The point here is that the prior probability for $\beta\in\B_{S'}$ is very small for $S'$ at all large, so if we see any posterior samples with $\beta\in\B_{S'}$ this shows the data is weighting strongly in favour of these configurations. However, even though we may have $\pi(\beta\in\B_{S'}|y)\gg\pi(\beta\in\B_{S'})$ the numerator in \eq~\ref{eq:beta-Bayes-factor} is very small for large $S'$ and hard to estimate accurately, so we restrict discussion to $S'\le 8$.

We carried out this test using the setup in \sec~\ref{sec:poHB1aRS6} (with $K=11$ and likelihood $p_{(U)}$, so the model which gives the marginal$\beta$-distributions in the right plot in \fig~\ref{fig:beta-poHB1aRS6}). We find $\hat B_{6,0}=4.0(1)$, $\hat B_{7,0}=10(5)$ and $\hat B_{8,0}=40(28)$ with standard errors from iid Bernoulli sampling in parenthesis (long ordered sequences are rare events in the MCMC, so effectively iid). The last of these is based on two MCMC samples with $\beta\in\B_8$. However, given the improbability of these states in the prior ($1/8!$), seeing any samples at all with this pattern is evidence for decreasing seniority effects with decreasing seniority rank. 

\newpage\section{Further results and comparison analyses}
\label{app:further-results}

\subsection{Additional figures for the analysis in \sec~\ref{sec:poHB1aRS6}}\label{app:poHB1aRS6}

\fig~\ref{fig:mcmc-poHB1aRS6} shows MCMC traces for selected parameters and functions in the MCMC for the analysis in \sec~\ref{sec:poHB1aRS6} using the MCMC algorithm of \app~\ref{app:mcmc}. This paper contains results from dozens of MCMC runs so we present this as specimen output. Another specimen set of MCMC traces are given in \fig~\ref{fig:synth-mcmc-poHB22aRS4} in \app~\ref{app:synth-poHB22aRS4} where we analyse synthetic data with the same model and structure as in this section (obtaining good recovery of truth). For the main analyses in \sec~\ref{sec:poHB1aRS6} and \ref{sec:poHB2aRS6} we ran two independent MCMC chains with ordered and unordered start states (see paragraph end of \app~\ref{app:mcmc}) and obtained excellent agreement.

\fig~\ref{fig:poHB1-14-16-rho-theta-p} shows comparisons of posterior distributions for the parameters $\rho, \theta$ and $p$ for the unconstrained seniority-effects model $\beta\in\B_0$ (so, as \sec~\ref{sec:poHB1aRS6}) across runs with different model settings. The prior for each parameter is also plotted (except $\theta$ which is uniform). We vary the likelihood noise model ($p_{U}$ and $p_{(D)}$) and the number of columns $K$ in the status matrix $Z^{(t)}$. The serial-correlation parameter $\theta$ and the queue-jumping error probability $p$ are almost unchanged when these other aspects of the model are varied. 

The $\rho$ posterior for $K=2,11$ and $22$ (left panel of \fig~\ref{fig:poHB1-14-16-rho-theta-p}) shift to larger values as $K$ increases. Why does this happen? We can think of the posets $h$ as fairly well informed by the data. In this setting the posets $h$ restrict the possible $U$-matrices as the paths in $U$ have to generate the posets in $h$. The $U$ matrices function as ``data'' for the latent parameters $\rho$ and $\theta$. When we increase $K$ the posets, informed by the data, are not changing (very much). The paths in $U$ are longer at larger $K$ but they must have the same patterns of intersection to give the same (or more or less the same) paths $U$ and hence the same posets. When paths are longer the hazard for them crossing is greater. However, the fit can be maintained by making them flatter: each path must be more strongly correlated at large $K$ than at small $K$. We conclude that there is some degree of non-identifiability between $K$ and $\rho$. 

\begin{figure}
    \centering \hspace*{-0.2in}\includegraphics[width=5.5in,trim=0.2in 0.1in 0.15in 0in,clip]{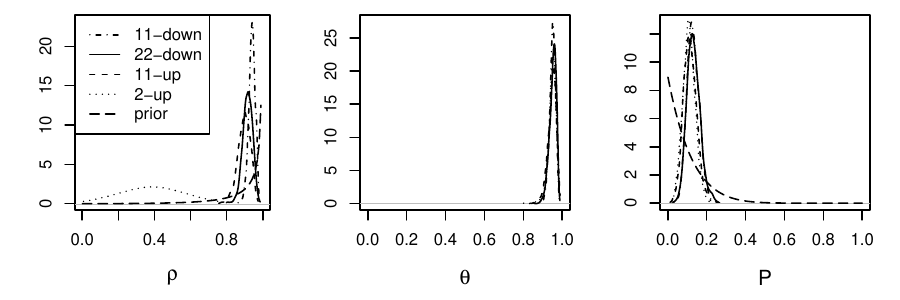}
    \caption{Posterior parameter densities for $\rho,\theta$ and $p$ from a selection of different models for the same data in the analysis of \sec~\ref{sec:poHB1aRS6}. The priors are given in \sec~\ref{sec:post-sum} with $\beta\in \B_0$ and variations: (dot-dashed) $K=11$, likelihood $p_{(D)}$ in \eq~\ref{eq:lkd-noisy-up}; (dashed) $K=11$, likelihood $p_{(U)}$ in \eq~\ref{eq:lkd-noisy-down}; (solid) $K=22$, likelihood $p_{(D)}$; (dotted) $K=2$, likelihood $p_{(D)}$; (long-dashed, in $\rho$ and $p$ graphs) prior. The prior for $\theta$ is uniform. 
    }
    \label{fig:poHB1-14-16-rho-theta-p}
\end{figure}

\begin{figure}
    \centering
    \hspace*{-0.75in}\includegraphics[width=6.5in]{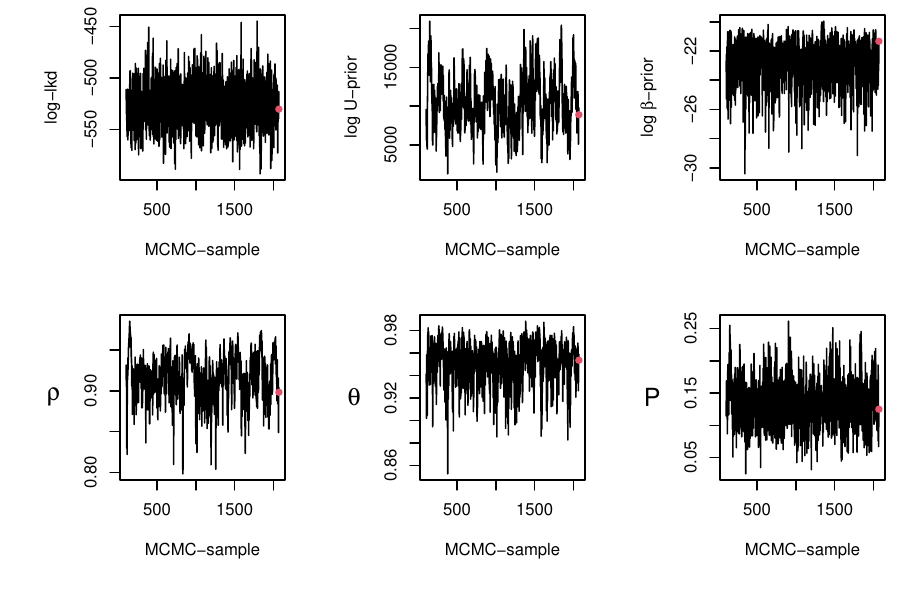}
    \caption{Selected MCMC traces from the unconstrained seniority-effect analysis in \sec~\ref{sec:poHB1aRS6}.}
    \label{fig:mcmc-poHB1aRS6}
\end{figure}

\subsection{Figures from the analysis in \sec~\ref{sec:poHB2aRS6}}\label{app:poHB2aRS6}

In this section we show some further results from the analysis presented in \sec~\ref{sec:poHB2aRS6}, the analysis with constrained seniority effect parameters $\beta\in\B_{\Sb}$ and $K=11$ columns in the status matrix $Z^{(t)}$.

In \fig~\ref{fig:CPO-poHB2aRS6-close} we show transitive closures of the consensus poset $\bar h^{(t)}(\xi)$ defined in \eq~\ref{eq:CPO-defn} for the years $t\in [1134, 1136]$. Edges with thresholds $\xi=0.5$ and $\xi=0.9$ are shown dashed and solid respectively. The reductions are shown in \fig~\ref{fig:CPO-poHB2aRS6}.
These transitively closed consensus posets are hard to read so we prefer reductions. We show them to emphasise that many well-supported relations (solid edges) are found, as a sequence of weakly supported relations (dashed edges $\e{j_1}{j_2},\e{j_2}{j_3}$) in the reduction tends to yield a well supported relation (a solid edge $\e{j_1}{j_3}$) in the closure.

\fig~\ref{fig:bishop-names} gives the mapping from names to numbers in all graphs in this paper.

\fig~\ref{fig:Zhat-poHB2aRS6} shows the posterior mean status curves $\bar Z_j^{(t)},\ t\in [b_j,e_j]$ for each bishop $j\in\M$. This is defined as for $\bar U_j^{(t)},\ t\in [b_j,e_j]$ in \eq~\ref{eq:post-mean-U-process}. This figure should be compared with \fig~\ref{fig:Uhat-poHB2aRS6}. For further discussion of this pair of figures see \sec~\ref{sec:poHB2aRS6}. 

\fig~\ref{fig:tau-poHB2aRS6} shows 90\% HPD sets for the posterior distribution of the list times $\tau_i,\ i\in \I$. The $x$-axis in this graph is the list index (sorted, so that prior mean list times increase). Recall the data gives us a time range $\tau_i\in [\tau_i^-,\tau_i^+]$ for each list. The red bars show this time interval constraint, and also represent the prior, which is uniform. The dates of some lists are known without uncertainty (so $\tau_i^-=\tau_i^+$). These lists have been omitted from \fig~\ref{fig:tau-poHB2aRS6}. It can be seen that the data do inform the list times a little (the grey set is restricted relative to the red bar) though the benefit is seen mainly in lists with substantial prior time ranges (the longer red bars). We learn little here as the prior constraints $\tau_i\in [\tau^-_i,\tau^+_i]$ are already quite tight. The principle purpose of including the uncertain list dates in the analysis is to feed through the uncertainty in list dates into the estimates of average {authority} $\bar U^{(t)}_j$ and consensus partial orders $\bar h^{(t)}(\xi)$ rather than to do list dating.

\begin{figure}[H]
    \centering
    \hspace*{-0.2in}\includegraphics[width=6.25in,trim=0in 0.2in 0in 0.0in, clip]{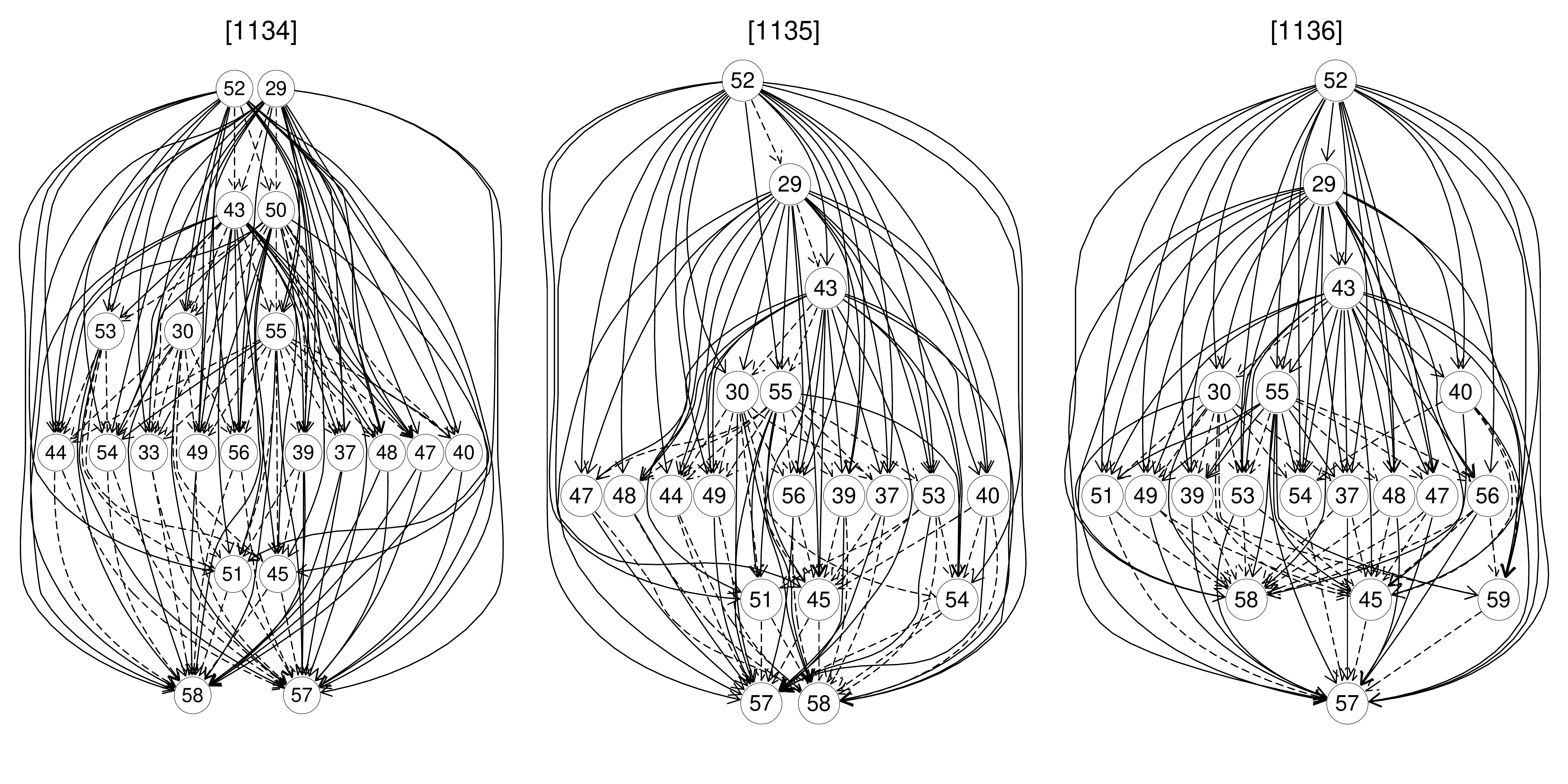}
    \caption{Consensus partial orders for the years 1134-1136 (a selection from the years 1080-1155) under the constrained-effects model of \sec~\ref{sec:poHB2aRS6}. These are the transitive closures corresponding to the reductions in  \fig~\ref{fig:CPO-poHB2aRS6} in \sec~\ref{sec:poHB2aRS6}. Dashed edges have posterior support at least $\xi=0.5$. Black edges have support above $0.9$. Vertex numbering as \fig~\ref{fig:POmcmc-poHB2aRS6}.}
    \label{fig:CPO-poHB2aRS6-close}
\end{figure}

\begin{figure}[H]
    \centering
    \includegraphics[width=6in, trim=0.6in 2.4in 0.6in 2.4in, clip]{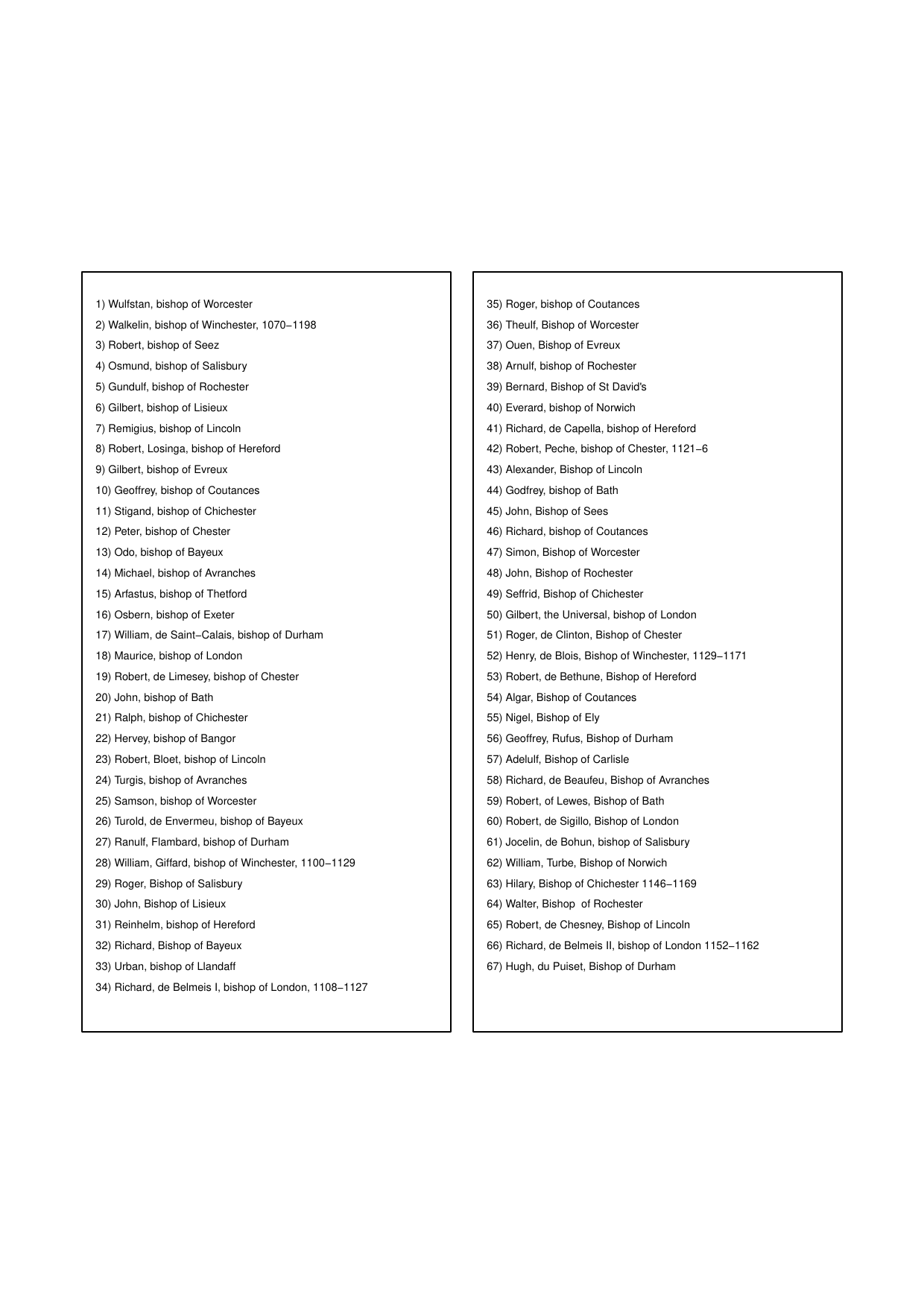}
    \caption{Name index for vertices in the order graphs in \figs~\ref{fig:intro-po-example}, \ref{fig:POmcmc-poHB2aRS6}, \ref{fig:CPO-poHB2aRS6}, \ref{fig:CPO-poHB2aRS6-close} and \ref{CPO-poHB17aRS4-all}.}
    \label{fig:bishop-names}
\end{figure}

\begin{figure}[H]
    \centering
    \hspace*{-0.8in}\includegraphics[width=6.75in,trim=0in 0.25in 0.25in 0.25in, clip]{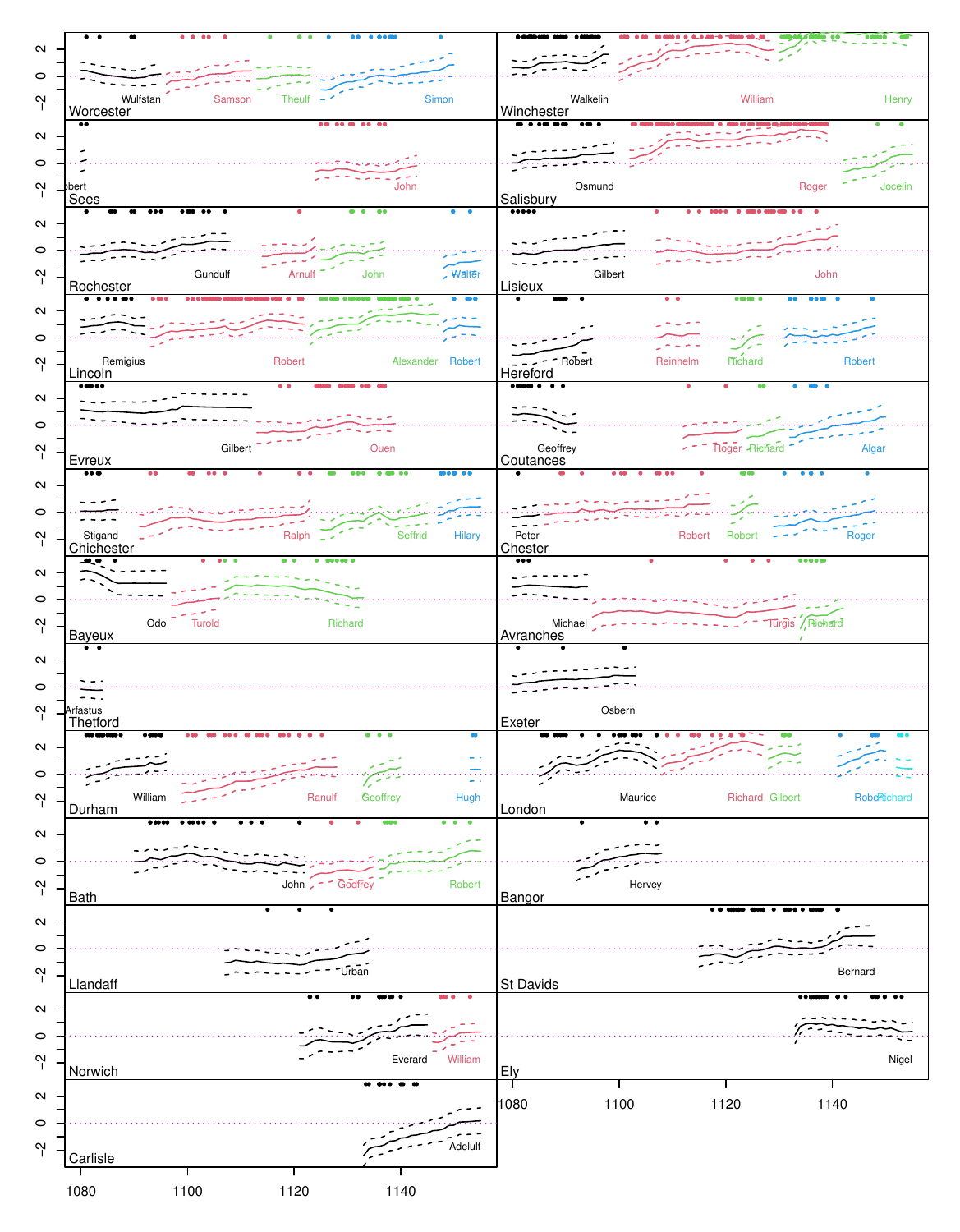}
    \caption{Bishop-status curves $\hat Z^{(t)}_j$ ($y$-axis values, solid curves) plotted for each bishop $j\in \M$ as a function of time ($x$-axis, the year) from $b_j$ to $e_j$ with standard errors at one-sigma. The dots along the top of each graph show the times of the lists in which the color-matched bishop below appeared. This is output from the constrained seniority effects analysis in \sec~\ref{sec:poHB2aRS6}.}
    \label{fig:Zhat-poHB2aRS6}
\end{figure}

\begin{figure}[H]
    \centering
    \hspace*{-0.4in}{\includegraphics[width=5.5in, trim=0.05in 0.6in 0.4in 0.8in, clip]{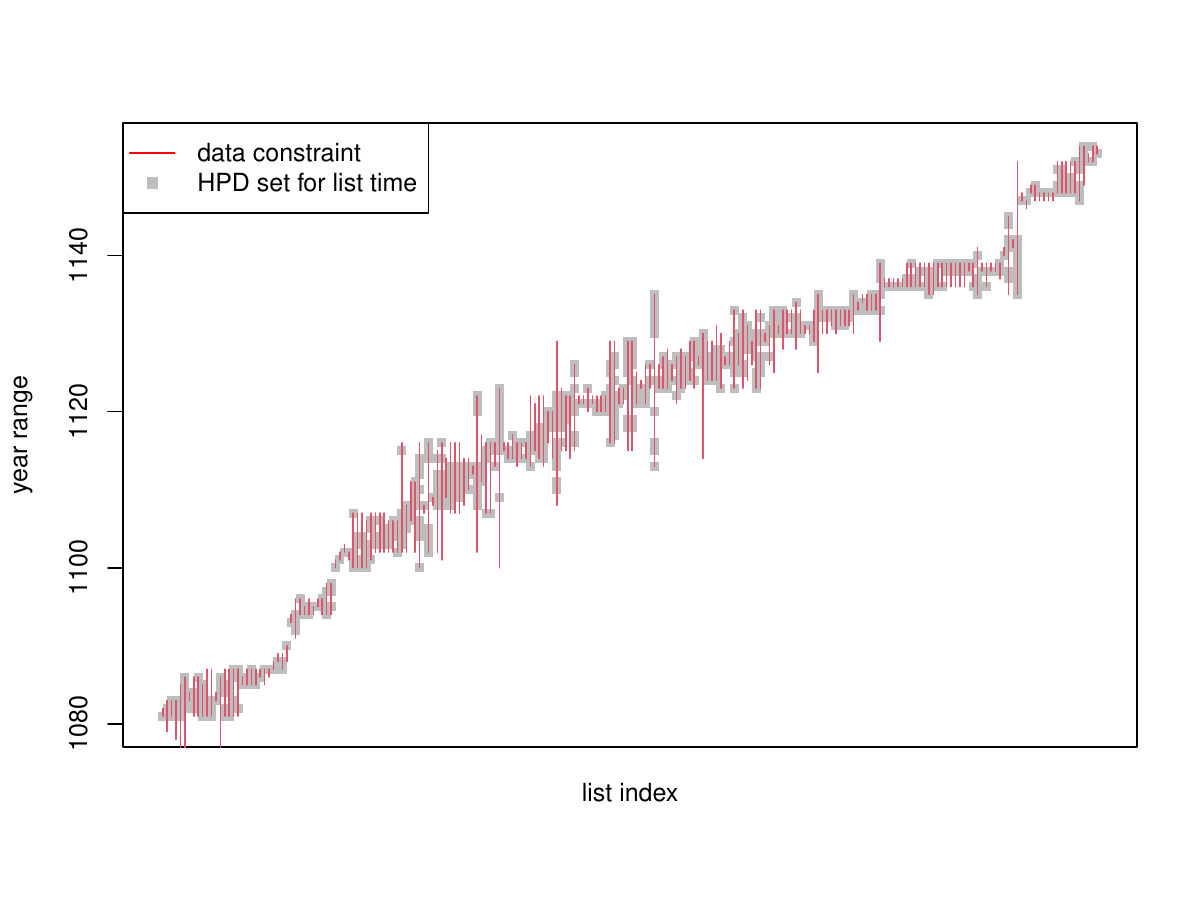}}
    \caption{Posterior distributions of the unknown list times $\tau_i,\ i\in \I$. Some lists have known times, shown are the posteriors for uncertain list times. The $x$-axis is the list index (sorted by prior mean time) and the $y$-axis is in years CE. Each thin vertical line shows the known constrained interval $[\tau^-_i,\tau^+_i]$ for each list. The grey region over each vertical line is a 90\% HPD set for the estimated age of the associated list. This is output from the constrained seniority effects analysis in \sec~\ref{sec:poHB2aRS6}.}
    \label{fig:tau-poHB2aRS6}
\end{figure}

Finally we plot the full set of consensus partial orders for the analysis in this section. These are most easily viewed in the online version of this paper using a zoom function. The graphs for the years 1134-1136 were selected and plotted in \fig~\ref{fig:CPO-poHB2aRS6} for ease of viewing. 

\begin{figure}[H]
    \centering
    \hspace*{-0.2in}\includegraphics[width=6.25in,trim=0in 0.2in 0in 0.0in, clip]{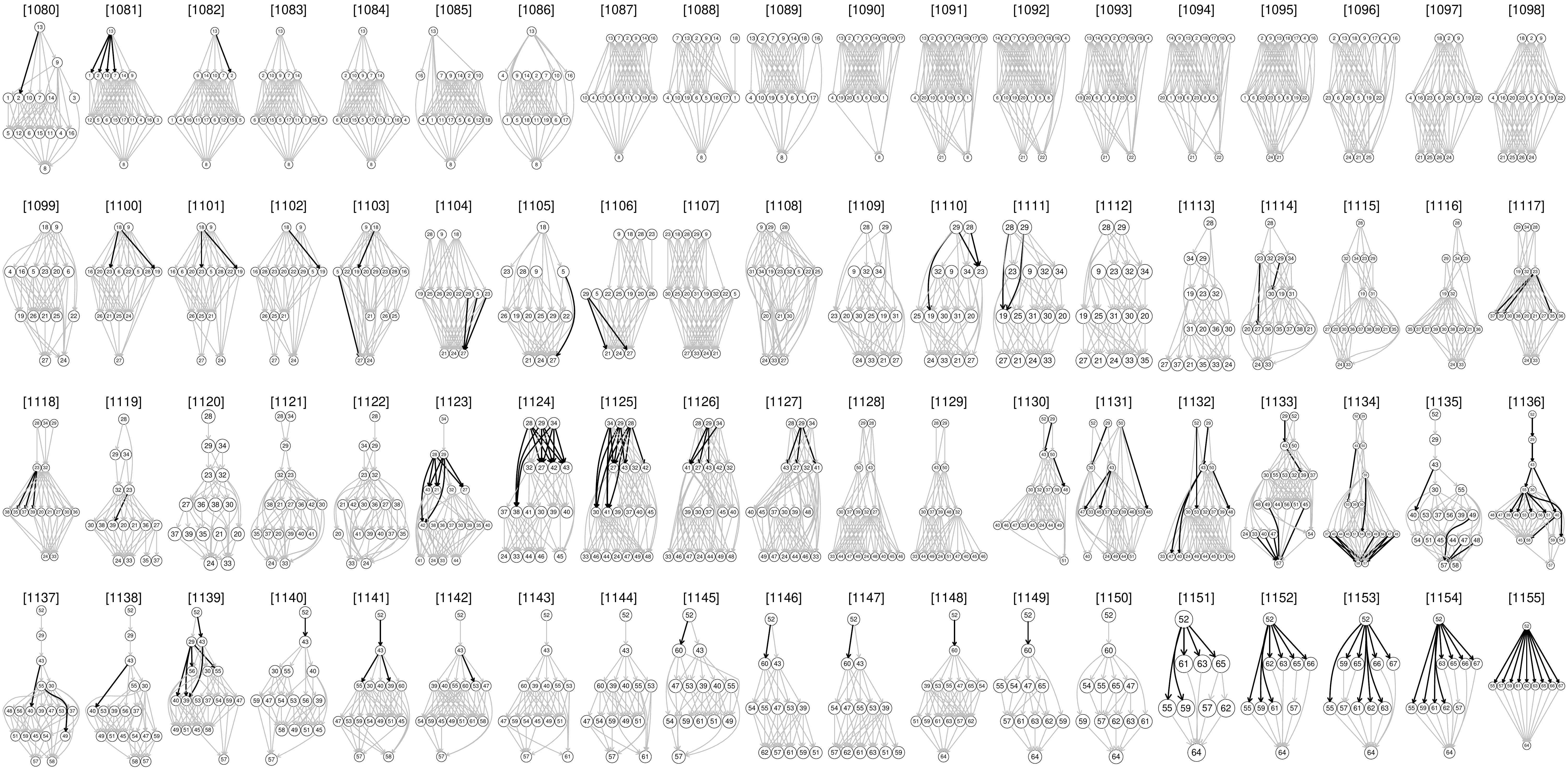}
    \caption{Consensus partial orders for all years 1080-1155 under the constrained-effects model of \sec~\ref{sec:poHB2aRS6}. These are transitive reductions. Grey edges have posterior support at least $\xi=0.5$. Black/solid edges have support above $0.9$. Vertex numbering as \fig~\ref{fig:POmcmc-poHB2aRS6}.}
    \label{CPO-poHB17aRS4-all}
\end{figure}

\subsection{Results from analyses like \sec~\ref{sec:poHB2aRS6} but with $K=2$ and $K=22$}\label{app:K-equals-18-posterior}

Here we gather together results from analyses identical to that in \sec~\ref{sec:poHB2aRS6} (constrained seniority effects, $\beta\in \B_{\Sb}$) but with $K=11$ there replaced by $K=2$ and then $K=22$. We looked at $K=2$ and $K=22$ for the unconstrained analyses in \app~\ref{app:poHB1aRS6} and considered how our estimates for $\rho, \theta$ and $p$ were affected. When we make a constrained analysis we are aiming to produce final results to pass onto historians and in particular, relations need to be well-estimated.

Changing the number of features/columns in the {authority} vector $U^{(t)}_j$ changes the prior distribution over posets. The depth distributions are similar, but there is higher weight on lower depth posets at larger $K$. This can be seen by comparing the two graphs in \fig~\ref{fig:prior-depth-dbns-curves-NF9-18}.
The effect is visible in \fig~\ref{fig:prior-post-depth-boxes-poHB2aRS6} where the "whiskers" in the boxplot for $K=22$ often reach from the maximum depth to the minimum depth, which is rare in the $K=11$ case.

It may be desirable to use the $K=22$ prior as the $K=11$ prior puts very little probability on posets of depth one (not zero, but small). Depth one corresponds to the empty order. While putting low prior probability on the empty order reflects prior knowledge (it is very unrepresentative of what we know about the actors in our data), it is a little stronger than is really desirable, so some sensitivity analysis is in order.

In contrast the $K=2$ prior weights quite strongly in favor of deeper partial orders. This might be improved by some compensating adjustment of the $\rho-$prior. However, for the purpose of this comparison we simply changed $K$ and left all other aspects of the model unchanged. If we could ``get away'' with using $K=2$ we certainly would, as the dimension of the authority parameters $U$ is $\dim(U)=K\times \sum_t m_t$. In \sec~\ref{sec:poHB2aRS6}, where $K=11$ we fit with $\dim(U)=13,453$. This drops to $\dim(U)=2446$ with $K=2$ and makes for easier MCMC sampling. The class of two dimensional posets is still a very rich model space (it includes all VSPs). Also, at least for short time intervals, VSPs fit fairly well, with a Bayes Factor against the full poset model which is only weakly in favor of posets. VSPs have dimension at most two, so all VSPs can be represented by taking $K=2$. This suggests that quite small values of $K$ may sometimes be adequate. 

\begin{table}[]
    \centering
    \begin{tabular}{c|cc}
    $K$ & \% relations & MAD \\
    \hline
     2    &  95\% & 0.02\\
     11    &  100\% & 0 \\
     22  & 97\% & 0.01\\
     \hline\\
    \end{tabular}
    \caption{Results from comparison of posterior distributions at $K=2$, $11$ and $22$ for the constrained analysis of \sec~\ref{sec:poHB2aRS6} but varying the dimension $K$ of the latent authority vectors $U_{j,:}^{(t)},\ j\in \M,\ t\in[B,E]$. Comparisons are against $K=11$ so it shares 100\% of its relations with itself. The \% relations column gives the percentage of agreeing estimated relations (including ``no-relation'') between the consensus posets for the two analyses in each comparison. The ``MAD'' column gives the Median Absolute Difference in the estimated probability for relations. Reported numbers are accurate to at least the reported significant digits.}
    \label{tab:compare_K}
\end{table}


Results are given in Table~\ref{tab:compare_K}. We find conclusions are very similar for the three $K$-values. At the $\xi=0.5$ threshold for consensus posets, $95\%$ of estimated relations are the same between $K=2$ and $K=11$. However, the two analyses may support a relation at levels just above and below the consensus poset threshold, so this doesn't tell the whole story. Drilling down, if $i,j\in\mathcal{M}_t$ are two bishops active at time $t$ and $\hat \xi^{(K,t)}_{\e {j_1}{j_2}}$ (see \eq~\ref{eq:post-mean-reln-probs} in \app~\ref{app:CPO}) is the estimated posterior probability for the relation $j_1\succ_{h^{(t)}}\! j_2$ in an analysis with $K\in\{2,11,22\}$ then the number of linearly independent $\hat \xi^{(K,t)}_{\e {j_1}{j_2}}$-values we estimate in each analysis is $18912$. The Median Absolute Difference (MAD) 
$|\hat \xi^{(11,t)}_{\e {j_1}{j_2}}-\hat \xi^{(2,t)}_{\e {j_1}{j_2}}|$ in these estimated probabilities is $0.02$. Going from $K=2$ to $K=11$ isn't changing conclusions a great deal. However, if we do the same analysis comparing $K=11$ and $K=22$ we get slightly better agreement: $97\%$ of relations agree and the MAD, $|\hat \xi^{(22,t)}_{\e {j_1}{j_2}}-\hat \xi^{(11,t)}_{\e {j_1}{j_2}}|$, is $0.01$. We interpret these results to mean that once $K$ is ``large enough'' we estimate the same relations.

 

\newpage\section{Parameter estimates from multiple short time-windows}\label{app:constant-rho-theta-p-intervals}

In the poset models of \secs~\ref{sec:poHB1aRS6}, \ref{sec:poHB2aRS6} and \app~\ref{app:K-equals-18-posterior} we assume that $p,\rho,\theta$ do not vary over time. One simple way to check this is to estimate their posterior distributions over short time windows. We use the same intervals (first column of Table~\ref{tab:vsp-bucket-BF}) as \secs~\ref{sec:vsp-bucket-model-and-test} and \ref{app:pl-mix-elpd}. Results are shown in
\fig~\ref{fig:constant-rho-theta-p-intervals}.
\begin{figure}[H]
    \centering
    \hspace*{-0.2in}\includegraphics[width=5.5in,trim=0.2in 0.1in 0.15in 0in,clip]{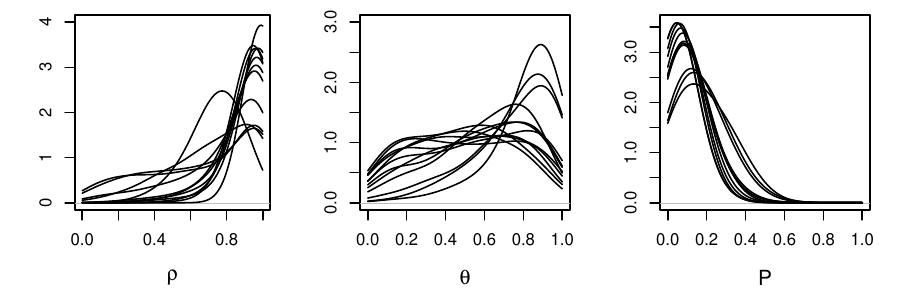}
    \caption{Posterior distributions for $\rho$, $\theta$ and $p$ estimated in each of the twelve time intervals listed in column 1 of Table~\ref{tab:vsp-bucket-BF} and discussed in \app~\ref{app:constant-rho-theta-p-intervals}}
    \label{fig:constant-rho-theta-p-intervals}
\end{figure}
The distributions for $\rho$ and $p$ show very little variation over time. The distributions for $\theta$, the degree of correlation between the poset in one year and the next, do vary somewhat. This density is focused on large values for the first interval 1080-1084 and for intervals after about 1135. The other intervals are close to the (uniform) prior. However, this simply reflects the fact that we get little information about a correlation from a short time interval ($5$ years).
We see no evidence here against our assumption that $\rho,\theta$ and $p$ are constant in time.

\newpage\section{VSP order and Bucket Order analysis - further details}
\label{app:vsp-bucket-BF}

\subsection{Further background on VSPs} 
Let $\A_1,\A_2$ be disjoint sets of actor labels and let $a_1\in\H_{\A_1}$ and $a_2\in \H_{\A_2}$ be posets. Let
\begin{align*}
a_1\os a_2 &= a_1\cup a_2 \bigcup_{j_1\in \A_1\atop j_2\in \A_2}\{\e{j_1}{j_2}\},\\
\intertext{denote series combination setting all actors in $a_1$ ``above'' those in $a_2$ and let}
a_1\op a_2 &= a_1\cup a_2,
\end{align*}
denote the parallel combination in which all actors in $a_1$ are unordered with respect to those in $a_2$. The class $\V_{\A}$ of VSP orders on $\A$ is defined recursively \citep{tarjan82}: 
    for $j\in \A$ let $\V_j=\{\emptyset\}$ to be the empty order on a single actor $j$;
$\V_{\A}$ is the set of all posets $H$ which can be decomposed as $H=a_1\os a_2$ or $H=a_1\op a_2$ for some partition $\A_1,\A_2$ of $\A$ and some $a_1\in\V_{\A_1}$ and $a_2\in\V_{\A_2}$. 

A sequence of series or parallel combinations defining a given VSP is represented by a Binary Decomposition Tree (BDT, \cite{valdes78}): leaves match elements of $\A$ and each internal vertex is an $\os$ or $\op$ operation on the VSPs rooted by its left and right child vertices. The BDT displayed in \fig~\ref{fig:vsp-example} represents the poset in \fig~\ref{fig:po-example}, which happens to be a VSP. The ``plus'' and ``minus'' symbols on the vertices below an $S$-vertex indicate the upper and lower VSP in a $\os$-operation. Any given VSP may have many BDT representations (\cite{jiang23} counts them). If $H\in \V_\A$ is a VSP then $|\L[H]|$ can be computed in a time linear in $|\A|$ \citep{wells1971elements}.
\begin{figure}
    \centering
    \begin{minipage}[c]{2.5in}
    \includegraphics[width=2in, trim=1.2in 1.1in 0.8in 0.7in]{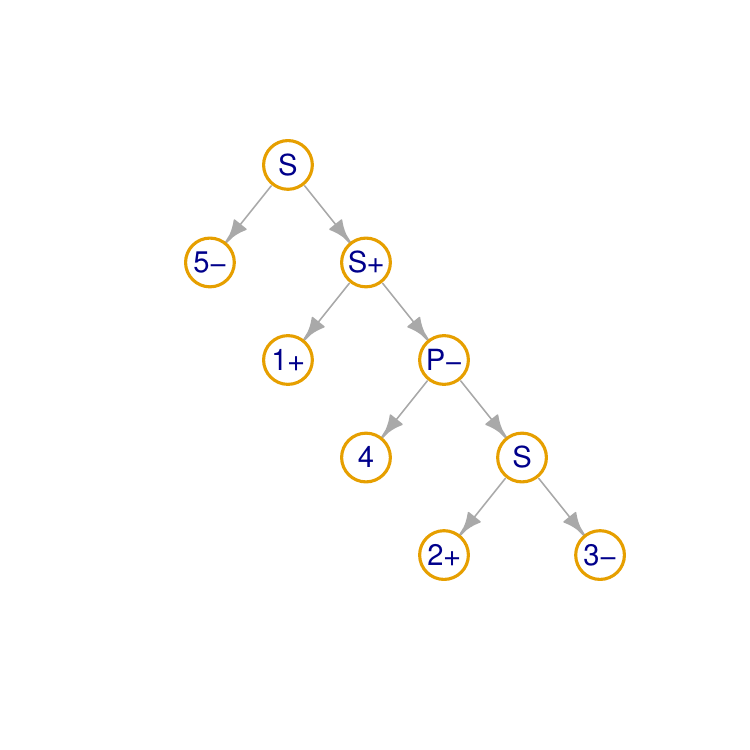}
    \end{minipage}
    \hspace*{0.3in}
    \begin{minipage}[c]{1.5in}
    \includegraphics[width=1in, trim=1.2in 1in 0.8in 0.9in]{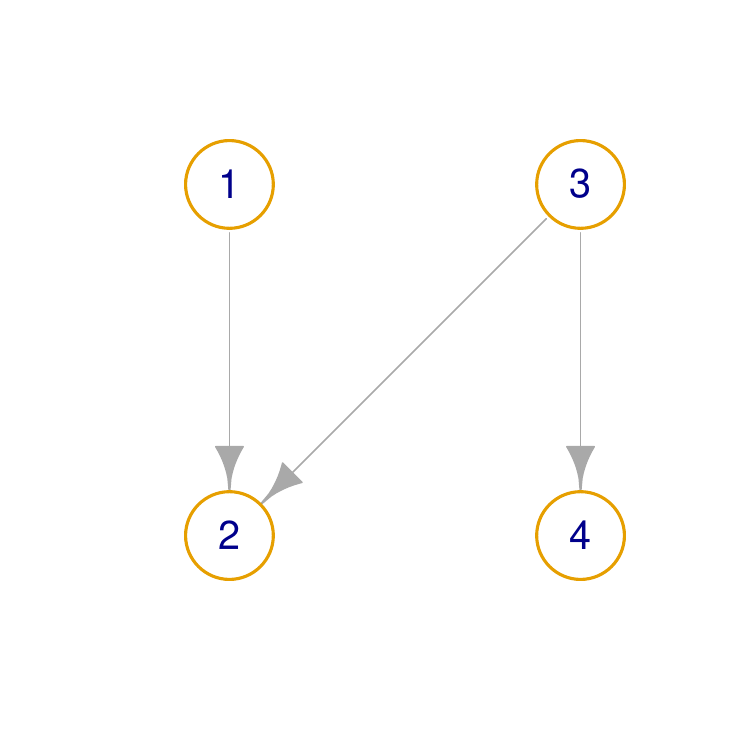}
    \end{minipage}
    \caption{(left) BDT representation of the poset in \fig~\ref{fig:po-example}. Working up from the bottom, 2 is above 3, 4 is unordered with respect to 2 and 3, 1 is above all three actors, and finally 5 is below all of 1-4. (right) The forbidden sub-graph $F$. A VSP order is any poset which does not contain a sub-graph isomorphic to $F$.}
    \label{fig:vsp-example}
\end{figure}

\subsection{Testing partial orders against restricted orders} \label{vsp-bucket-testing-methods}
In this section we consider data overlapping a time-window $[t_1,t_2]$ by at least a half. Let
\[
y^{(t_1,t_2)}=\left\{y_i: i\in \I,\  \frac{1+\min(t_2,\tau^+_i)-\max(t_1,\tau^-_i)}{1+\tau^+_i-\tau^-_i}\ge 0.5\right\}.
\] 
denote this windowed data. Consider the Bayes factor 
\begin{align*}
    B_{\V,\H}&=\frac{p(y^{(t_1,t_2)}|h\in\V^{(t_1,t_2)})}{p(y^{(t_1,t_2)}|h\in\H^{(t_1,t_2)})},
\end{align*}
with marginal likelihoods $p(y^{(t_1,t_2)}|h\in\V^{(t_1,t_2)})$ and $p(y^{(t_1,t_2)}|h\in\H^{(t_1,t_2)})$. Replace $\V$ with $\K$ to give $B_{\K,\H}$ for Bucket Orders. 
Because the sets $\K^{(t_1,t_2)}\subset\V^{(t_1,t_2)}\subset \H^{(t_1,t_2)}$ are nested, these Bayes factors are easily estimated using Savage-Dickey estimators and MCMC samples from the full posterior $\pi(\rho,\theta,U,\beta,\tau,p|y^{(t_1,t_2)})$ in \eq~\ref{eq:posterior}. We have
\begin{align}\label{eq:vsp-bucket-BF}
    B_{\V,\H}&=\frac{\pi(\V^{(t_1,t_2)}|y^{(t_1,t_2)})}{\pi(\V^{(t_1,t_2)})},
\end{align}
where  
\[
\pi(\V^{(t_1,t_2)}|y^{(t_1,t_2)})=E_{\rho,\theta,U,\beta,\tau,p|y^{(t_1,t_2)}}\left(\mathbb{I}_{h(Z(U,\beta;s))\in\V^{(t_1,t_2)}}\right)
\]
is the posterior expectation and $\pi(h\in\V^{(t_1,t_2)})$ is the corresponding prior expectation.

\subsection{VSP and Bucket Order Bayes Factor Estimates}
The estimates used to make the plot in \fig~\ref{fig:vsp-bucket-BF} are given here.
These numbers are discussed in \sec~\ref{sec:vsp-bucket-model-and-test}. The effective sample sizes are for the MCMC output $\mathbb{I}_{h^{(t)}\in\X^{(t_1,t_2)}}$ where $\X$ is $\B$ or $\V$ and $h^{(t)}=h(Z(U^{(t)},\beta;s)),\ t\in [B,E]$ is taken from a chain targeting $\pi(\rho,\theta,U,\beta,\tau,p|y^{(t_1,t_2)})$ using the lists in the time interval $[t_1,t_2]$. These are the samples used to estimate the numerator in \eq~\ref{eq:vsp-bucket-BF}. The large ESS
values in this table are associated with low posterior probabilities for the events $h\in \B^{(t_1,t_2)}$ and $h\in \V^{(t_1,t_2)}$.
The resulting rare event process is approximately iid. It is unclear to us how to measure uncertainty reliably for these cases, but these Bayes factors must be small, as the prior probabilities for the same events are relatively large, and the posterior probabilities are at least much smaller. 

\begin{center}
\begin{table}
\caption{VSP and Bucket Order Bayes Factors  \label{tab:vsp-bucket-BF}}
\hspace*{-0.2in}\begin{tabular}{|r|rrrrr|rrrrr|}
\multicolumn{11}{c}{}\\
\hline 
&\multicolumn{5}{|c}{Bucket Orders} &\multicolumn{5}{|c|}{VSP orders} \\
Period & Prior & Post. & BF & ESS & BF std.err. & Prior & Post. & BF & ESS & BF std.err.\\ \hline
1080-1084 &  0.085 & 0 & 0 & - & - & 0.22 & 0 & 0 & - & - \\
1086-1090 &  0.15 & 0.17 & 1.2 & 45 & 0.38 & 0.31 & 0.34 & 1.1 & 33 & 0.27 \\
1092-1096 &  0.27 & 0.14 & 0.53 & 29 & 0.24 & 0.48 & 0.31 & 0.65 & 53 & 0.13 \\
1104-1108 &  0.18 & 0.12 & 0.66 & 31 & 0.32 & 0.36 & 0.21 & 0.6 & 29 & 0.21 \\
1110-1114 &  0.42 & 0.25 & 0.6 & 81 & 0.12 & 0.72 & 0.64 & 0.89 & 230 & 0.044 \\
1118-1122 &  0.1 & 0.22 & 2.1 & 38 & 0.65 & 0.24 & 0.53 & 2.2 & 69 & 0.25 \\
1126-1130 &  0.16 & 0.13 & 0.86 & 75 & 0.25 & 0.32 & 0.35 & 1.1 & 60 & 0.19 \\
1128-1132 &  0.084 & 0.016 & 0.19 & 96 & 0.15 & 0.22 & 0.067 & 0.31 & 15 & 0.3 \\
1132-1134 &  0.088 & 0.0053 & 0.06 & 950 & 0.027 & 0.22 & 0.045 & 0.21 & 230 & 0.063 \\
1138-1142 &  0.13 & 0.024 & 0.18 & 80 & 0.13 & 0.28 & 0.11 & 0.37 & 69 & 0.13 \\
1144-1148 &  0.4 & 0.59 & 1.5 & 99 & 0.12 & 0.68 & 0.87 & 1.3 & 330 & 0.027 \\
1150-1154 &  0.2 & 0.17 & 0.83 & 32 & 0.32 & 0.39 & 0.36 & 0.93 & 45 & 0.19 \\ 
\hline
\end{tabular}
\end{table} 
\end{center}

\newpage\section{Plackett-Luce models}\label{app:PL-appendix-section}

In \apps~\ref{app:PL-time-series-model} and \ref{app:pl-mix-elpd} we define and fit two different types of Plackett-Luce model \citep{luce1959possible, plackett1975analysis}. 
This is done for comparison purposes. 

\subsection{Comparison with results from a Plackett-Luce time-series model}\label{app:PL-time-series-model}

Here we specify and fit a time-series Plackett-Luce model like that given in \citet{glickman2015stochastic}. Our purpose is to test the conclusions we reached using the poset model and show that we get similar results when we use a qualitatively different model for the same data. We fit the Plackett-Luce time-series model to the full data from 1080-1155 and compare with results in \secs~\ref{sec:poHB1aRS6} and \ref{sec:poHB2aRS6}. The model is simpler, there is no partial order, but it does share some parameters ($\beta$) and some parameters play very similar roles ($\lambda$ and $U$). Results are remarkably similar. In the next section we will consider formal model comparisons.

\subsubsection{Parameters and Likelihood}
For $t\in [B,E]$ denote by $\lambda^{(t)}_{j}$ the authority of bishop $j\in \M$ in year $t$. Notation for seniority $s$ and seniority effects $\beta$ are unchanged from \sec~\ref{sec:covariate-effects}.
The scalar time series $\lambda_j=(\lambda^{(t)}_j)_{t=B}^{E}$ plays a similar role to the multidimensional time series of {authority} variables $U_j$ defined in \sec~\ref{sec:prior-prob-dbns}. However, for ease of coding we defined $\lambda^{(t)}_j$ for all $t\in [E,B]$ not just $t\in [b_j,e_j]$ as we did for $U_j$. This means $\lambda=(\lambda_j)_{j\in\M}$ is a ``full'' $M\times T$ matrix. Denote by $\lambda^{(t)}=(\lambda^{(t)}_j)_{j\in\M}$ the column vector of $\lambda$-values in each time $t\in [B,E]$. 

The times of some of the lists are uncertain so we fixed these to be the centres of their intervals, $\hat\tau_i=(\tau_i^-+\tau_i^+)/2$.
The Plackett-Luce likelihood is then
\begin{align}\label{eq:Plackett-Luce-likelihood}
    p_{PL}(y | \lambda, \beta) & = \prod_{i=1}^N \prod_{j=1}^{n_i} \frac{\exp(\lambda^{(\hat\tau_i)}_{y_{ij}} + \beta_{s_{\hat\tau_i,y_{ij}}})}{\sum_{k=j}^{n_i} \exp(\lambda^{(\hat\tau_i)}_{y_{ik}} + \beta_{s_{\hat\tau_i,y_{ik}}})} 
\end{align}
The ``linear predictor'' $\lambda^{(t)}_j + \beta_{s_{t,j}}$ captures the ``status'' of bishop $j$ in year $t$ just as the vector $Z^{(t)}_j$ did in \sec~\ref{sec:latent-variable-param}. The covariates enter in the same way as before (\eq~\ref{eq:additive-model-for-Z}) and so $\lambda^{(t)}_j$ is the part of status not determined by seniority. 

\subsubsection{Plackett-Luce Priors}
The likelihood in \eq~\ref{eq:Plackett-Luce-likelihood} is invariant under $\lambda\to \lambda+\delta$ and $\beta\to\beta+\delta'$ for any $\delta,\delta'\in\R$.
As noted above in \sec~\ref{sec:non-identifiable-UbetaPO}, the $U,\beta$-latent variable model for posets has the same feature. We now remove this non-identifiability in our parameterisation.

Consider a projection matrix
\[
Q_{M} = I_M - \frac1M \mathbf{1}_M \mathbf{1}_M^T,
\] 
with $I_M$ the $M\times M$ identity matrix and $\mathbf{1}_M$ a column of $M$ ones. Let $Q_{\Sb}$ be defined similarly. The matrix $Q_M$ projects vectors in $\R^M$ to centred vectors $\{v\in\R^M: \sum_{j\in\M} v_j=0_M\}$. 
Under a Plackett-Luce model with covariates for seniority, the generative model for authority is a centred $AR(1)$ process,
\begin{equation*}
    \lambda^{(t)} = \theta \lambda^{(t-1)} + Q_M \epsilon^{(t)},
\end{equation*}
where $\epsilon^{(t)}\sim N(\mathbf{0}_M,\sigma^2I_M)$ iid for $t=B+1,...,E$.
If $\lambda^{(t-1)}$ is centred, then $\lambda^{(t)}$ is also centred. The process is initialised in its prior equilibrium distribution,
\begin{align}
\lambda^{(B)}&=Q_M\epsilon^{(b)},\nonumber\\
\intertext{where}
\epsilon^{(b)}&\sim N\left( 0_M, \frac{\sigma^2}{1 - \theta^2} I_M\right).\label{eq:PL-lambda-AR1init}
\end{align}
Since 
\begin{align}
    \lambda^{(t)}_1=-\sum_{j=2}^{M} \lambda^{(t)}_j,
    \quad 
    \beta_1=-\sum_{r=2}^S \beta_r, \label{eq:PL-centre-lambda-beta}
\end{align} 
we work with a non-singular prior density with parameters $\lambda^{(t)}_{2:M},\ t\in[B,E]$. Denote by $Q_{2:M}$ the $M-1\times M$ matrix obtained by dropping the first row from $Q_M$ (with $Q_{2:\Sb}$ defined similarly). The densities determined by these generative models are
\begin{align*}
    \pi(\lambda^{(t)}_{2:M}|\lambda^{(t-1)}_{2:M},\theta,\sigma)&=N\left(\lambda^{(t)}_{2:M};\theta \lambda^{(t-1)}_{2:M},\sigma^2Q_{2:M}Q_{2:M}^T\right),\\
    \pi(\lambda^{(B)}_{2:M}|\theta,\sigma) &= N \left(\lambda^{(B)}_{2:M}; \mathbf{0}_{2:M}, \frac{\sigma^2}{1 - \theta^2}Q_{2:M}Q_{2:M}^T \right),\\
    \intertext{and}
    \pi(\lambda_{2:M}|\theta,\sigma)&= \pi(\lambda^{(B)}_{2:M}|\theta,\sigma)\prod_{t=B+1}^{E} \pi(\lambda^{(t)}_{2:M}|\lambda^{(t-1)}_{2:M},\theta,\sigma).
\end{align*}
The prior on $\beta$ is similarly
\begin{equation}
    \pi(\beta) = N(\beta_{2:\Sb} ; \mathbf{0}_{\Sb-1}, Q_{2:\Sb}Q_{2:\Sb}^T).
\end{equation}
There is nothing ``special'' about $\lambda_1$ or $\beta_1$ in these priors. They are determined by 
\eqs~\ref{eq:PL-centre-lambda-beta}, but they are still exchangeable with the other parameters in the prior.
We take a $\mbox{Unif}(0, 1)$ prior for $\theta$ and a $\mbox{Gamma}(2, 2)$ for $\sigma$ as capturing reasonable variation.

\subsubsection{Posterior distribution}
The Plackett-Luce posterior is then
\begin{equation}
    \pi_{PL}(\lambda_{2:M}, \beta_{2:\Sb}, \sigma, \theta|y) \propto p_{PL}(y | \lambda, \beta) \pi(\lambda_{2:M} | \sigma, \theta) \pi(\beta_{2:\Sb},\sigma,\theta),
\end{equation}
with $\lambda_1$ and $\beta_1$ in $\lambda$ and $\beta$ appearing in $p_{PL}(y | \lambda, \beta)$ given by \eqs~\ref{eq:PL-centre-lambda-beta}.

We extended the $\lambda_j$-process for each bishop $j\in\M$ outside the range $[b_j,e_j]$.
This made coding easier, but this means a large set of extra parameters must be brought into equilibrium in the target. However, by the stationarity of an $AR(1)$ process, the posterior marginal (integrating over $\lambda^{(t)}_j$ for $t\not\in [b_j,e_j]$) of the fitted model is identical to the model which ``starts'' the $\lambda_j$-process at $t=b_j$ with the $AR(1)$ equilibrium in \eq~\ref{eq:PL-lambda-AR1init} and terminates it at $t=e_j$.

\subsubsection{Results for Plackett-Luce Time-series model}

We use a simple random-walk MCMC algorithm to target $\pi_{PL}(\lambda_{2:M}, \beta_{2:\Sb}, \sigma, \theta|y)$, on the full data of \secs~\ref{sec:poHB1aRS6} and \ref{sec:poHB2aRS6}. We took unconstrained $\beta\in \B_0$. Denote by $\lambda^{(t,l)}_j,\ t\in[B,E],\ j\in\M_t, \theta^{(l)}, \sigma^{(l)},\ l=1,...,L$ the realised MCMC samples after thinning and burn-in. 
The Effective Sample Sizes for
the parameters were $\theta (41)$, $\sigma (23)$, $\beta (21 \mbox{ parameters, } 538 - 2016)$ $\lambda (5092  \mbox{ parameters, } 55 - 6779)$.
Mixing for the key parameter vectors $\beta$ and $\lambda$ was fair, and the agreement between results from the two analyses in this section and \secs~\ref{sec:poHB1aRS6} and \ref{sec:poHB2aRS6} supports our conclusion that the samples are representative.

Results for quantities of historical interest are very similar to those obtained using the partial-order posterior in \eq~\ref{eq:posterior}. The posterior means for $\theta$ and $\sigma$ were $0.93 (0.02)$ and $0.69 (0.085)$ (posterior standard deviation in parenthesis). The posterior for $\theta$ is similar to its poset counterpart in \fig~\ref{fig:rho-th-p-poHB1aRS6}. Posterior effect $\beta$ distributions are plotted in \fig~\ref{fig:beta-poHB1aRS6-PL-version}.
\begin{figure}
    \centering
    \includegraphics[width=2.5in,trim=2.55in 3.05in 2.5in 3.6in,clip]{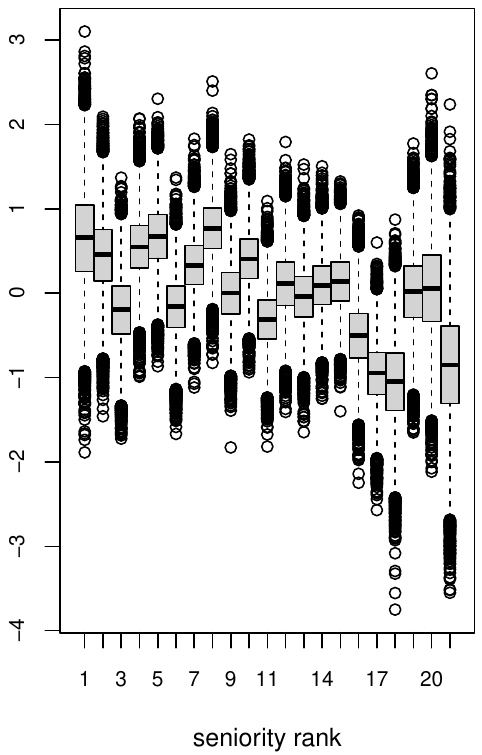}
    \caption{(Left) Marginal posterior distributions of seniority effect parameters from the unconstrained seniority effects analysis in \sec~\ref{sec:poHB1aRS6} with likelihood $p_{(D)}$ in \eq~\ref{eq:lkd-noisy-down}. (Centre) As left with likelihood $p_{(U)}$ in \eq~\ref{eq:lkd-noisy-up}. (Right) Corresponding distributions estimated using the Plackett-Luce time-series model in \app~\ref{app:PL-time-series-model}.}
    \label{fig:beta-poHB1aRS6-PL-version}
\end{figure}
They are qualitatively similar to those for the poset-analyses in \fig~\ref{fig:beta-poHB1aRS6}. However, seniority effects in \fig~\ref{fig:beta-poHB1aRS6-PL-version} do not decline as smoothly as we know they should, and clearly not as well in line with historical expectations as the corresponding poset estimates in \fig~\ref{fig:beta-poHB1aRS6}. This suggests PL is using the degrees of freedom in the seniority effects to accommodate misspecification in the observation model.
In \fig~\ref{fig:PL-lambda-posterior-mean-curves} we plot the curves
\[
\hat\lambda^{(t)}_j=\frac{1}{L}\sum_{l=1}^L \lambda^{(t,l)}_j 
\]
over $t\in[b_j,e_j]$ for each $j$ arranged by diocese as we did for $\bar U^{(t)}_j$ defined in \sec~\ref{sec:non-identifiable-UbetaPO}. The resulting authority-measures show similar dependence on time. This shows that our results for these features are robust.
\begin{figure}
    \centering
    \hspace*{-0.5in}\includegraphics[width=6.75in,trim=0in 0.25in 0.25in 0.25in, clip]{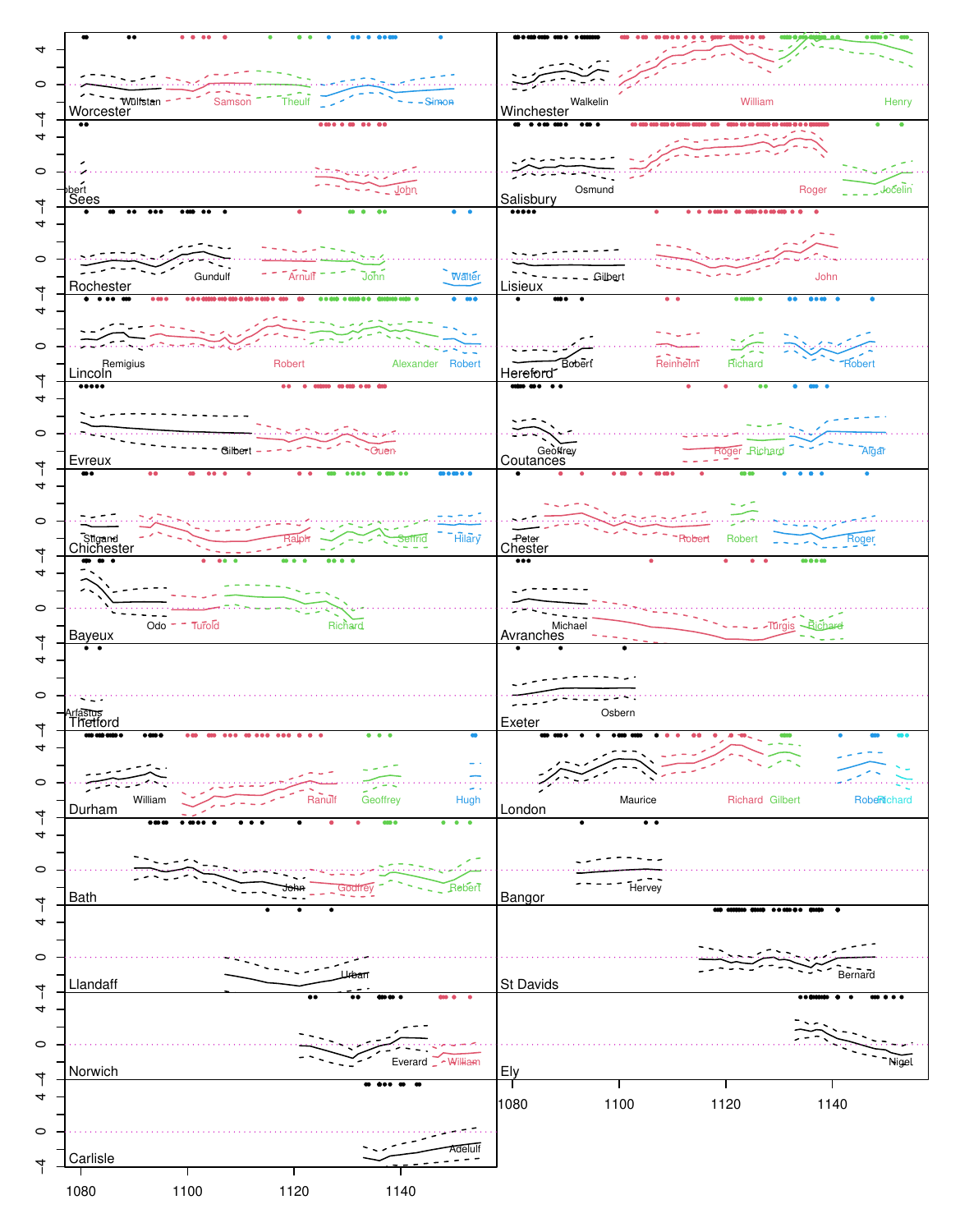}
    \caption{Authority-curves $\hat\lambda^{(t)}_j$
($y$-axis values) over $t\in[b_j,e_j]$ ($x$-axis, the year) for each bishop $j\in\M$ arranged by diocese, for the Plackett-Luce time-series model in \app~\ref{app:PL-time-series-model}. See \fig~\ref{fig:Uhat-poHB2aRS6} for details and comparison. }
    \label{fig:PL-lambda-posterior-mean-curves}
\end{figure}

\subsection{Comparison with a Plackett-Luce mixture model}\label{app:pl-mix-elpd}

\subsubsection{Parameters and Model}

In this section we define a Plackett-Luce mixture model for list data taken from an interval of time $[t_1,t_2]$ which is short enough to allow us to drop time dependence and covariates. The Leave-One-Out cross validation we use below is time consuming and we do this for ease of computation. We fit this model and a poset model without covariates and carry out model comparison between the two models. Dropping time-varying covariates is natural when working in a short time window, as the effects due to bishop and seniority cannot be separated. Denote by $\I_{t_1,t_2}$ the indices of the list data covered by the window (at least half of the years $[\tau_i^-,\tau_i^+]$ fall within $[t_1,t_2]$) and let $N_{t_1,t_2}=|\I_{t_1,t_2}|$ be the number of lists in this window. Let $y^{(t_1,t_2)}=(y_i)_{i\in \I_{t_1,t_2}}$ be the windowed data.
A finite mixture of Plackett-Luce models was proposed as a robust model for ranked data with incomplete lists (``partially ranked data'', \citet{Mollica14,Mollica20}. A mixture of $D$ Plackett-Luce models for ranking a set $\M_{t_1,t_2}=\cup_{t\in [t_1,t_2]} \M_t$ containing $m_{t_1,t_2}=|\M_{t_1,t_2}|$ actors assumes that lists are sampled from a heterogeneous population composed of $D$ sub-populations. Each sub-population is modeled by one of the mixture components with Plackett-Luce status vector $\lambda^{(d)}\in\R^{m_{t_1,t_2}},\ d\in [D]$. 
Let $\lambda=(\lambda^{(d)})_{d\in [D]}$ be the $m_{t_1,t_2}\times D$ matrix of status parameters.

The likelihood for a mixture of $D$ Plackett-Luce models is then
\begin{equation}
    p_{mix\,PL}(y^{(t_1,t_2)}|\lambda,\omega) = \prod_{i\in \I_{t_1,t_2}} \sum_{d=1}^D\omega_d \prod_{j=1}^{n_i} \frac{ \exp(\lambda^{(d)}_{y_{i,j}})}{\sum_{k=j}^{n_i} \exp(\lambda^{(d)}_{y_{i,k}})},
\end{equation}
where $\omega_1,...,\omega_D$ are weights of mixture components and $\sum_{d=1}^D \omega_d=1$. Noninformative priors suggested by \cite{Mollica20} are assigned with $\exp(\lambda^{(d)}_{j}) \sim \mbox{Gamma}(1,0,001)$ for $j\in \M_{t_1,t_2}$ and $\omega_1,...,\omega_D\sim Dir(1,...,1)$. 

The posterior is
\[
\pi_{mix\,PL}(\lambda,\omega)\propto \pi(\lambda,\omega)p_{mix\,PL}(y^{(t_1,t_2)}|\lambda,\omega).
\]
We used the MCMC sampler available the R-package \texttt{PLMIX} provided by \cite{Mollica20} to target $\pi_{mix\,PL}$. This sampler uses a data augmentation scheme due to \citet{Caron12mcmc}.

In order to fit the model we must select the number of mixture components $D$. This is related to Bayesian model comparison. The \texttt{PLMIX} offers several model selection criteria to choose the number of mixture components. We used the Deviation Information Criterion. The selected $D$-values are given in the third column of Table~\ref{tab:elpd}.

\subsubsection{Model comparison}

The expected log pointwise predictive density (ELPD, \cite{Vehtari17}) measures the posterior predictive accuracy of a model. It is relatively straightforward to estimate and one natural measure of goodness of fit. Let $Y\in \P_{O}$ be a single generic list with membership $O\subseteq \M_{t_1,t_2}$. Suppose the true generative model for a list is $Y\sim f_O$. Denote by $p_{A,O}(Y|y_{y_1,y_2})$ the posterior predictive probability for $Y$ in model $A$ when the membership of list $Y$ is $O$. The ELPD for model $A\in \{\mbox{Plackett-Luce, Partial order}\}$ conditioned on the observed data is 
\begin{align*}
\elpd(A|y^{(t_1,t_2)}) &= \sum^{m_{t_1,t_2}}_{i=1} E_{Y\sim f_{o_i}}(\log p_{A,o_i}(Y|y^{(t_1,t_2)})), \\
&=\sum^{m_{t_1,t_2}}_{i=1} \left[\sum_{Y\in \P_{o_i}}f_{o_i}(Y)\,\log p_{A,o_i}(Y|y^{(t_1,t_2)}) \right]\, .
\end{align*}
The expectation is (up to a constant) the negative of the KL divergence between the true generative model for a list with membership $o_i$ and the model-$A$ posterior predictive for that list. We favour models with larger values of the $\elpd$.

We do not know $f_{O}$ but we do know the data are sampled from this distribution, so we follow \cite{Vehtari17} and use leave one-out cross validation (LOO-CV) to estimate $\elpd(A|y^{(t_1,t_2)})$. Our estimator is
\begin{equation}
    \widehat{\elpd}(A|y^{(t_1,t_2)}) = \sum^{m_{t_1,t_2}}_{i=1} \log p(y_i|y^{(t_1,t_2)}_{(-i)}) 
\end{equation}
where $y^{(t_1,t_2)}_{(-i)}=y^{(t_1,t_2)}\setminus \{y_i\}$ so the $i$-th observation is left out. 

\subsubsection{Test results}
In Table \ref{tab:elpd} the twelve short periods $[t_1,t_2]$ we chose are given in the first column. The same periods are used to for numerical comparison with VSP and Bucket Orders in Table~\ref{tab:vsp-bucket-BF}. The number of lists $m_{t_1,t_2}$ covered by each interval is in the second column. We estimate the posterior predictive probability for each omitted list in a period by averaging its likelihood over posterior samples, so the number of MCMC runs for each entry in columns four and five is equal to $m_{t_1,t_2}$. This impacted our choices for the locations of time intervals as we had to avoid periods with too few lists or too many long lists.

\begin{center}
\begin{table}[H]
\caption{ELPD estimates and standard errors for twelve periods. \label{tab:elpd}  }
\begin{tabular}{|c|c| c c| c|}
\multicolumn{5}{c}{}\\ \hline
Period & $\#$ lists &\multicolumn{2}{|c|}{Plackett-Luce Mixture} &Partial-order \\
$[t_1,t_2]$ & $m_{[t_1,t_2]}$ & components $D$ & $\widehat{\elpd}$ & $\widehat{\elpd}$ \\
\hline 
1080-1084 & 26 & 2 & -162.52 (17.55) & -54.35 (11.80) \\
1086-1090 & 10 & 1 & -61.43 (14.43) & -23.33 (6.26) \\
1092-1096 & 13 & 1 & -49.46 (8.55) & -23.03 (7.98) \\
1104-1108 & 23 & 1 & -95.56 (10.15) & -41.08 (9.69) \\ 
1110-1114 & 21 & 1 & -59.93 (8.16) & -33.55 (9.62) \\
1118-1122 & 20 & 2 & -89.01 (12.36) &  -21.07 (6.99) \\ 
1126-1130 & 25 & 2 & -113.03 (11.77) &  -36.13 (9.21) \\
1128-1132 & 33 & 2 & -190.34 (16.37) & -73.97 (13.30) \\
1132-1134 & 19 & 2 & -110.49 (22.07) & -35.66 (9.91) \\
1138-1142 & 32 & 1 & -186.09 (16.77) &  -88.09 (25.58) \\ 1144-1148 & 7 & 1 & -42.15 (4.26) & -17.56 (8.33) \\
1150-1154 & 11 & 1 & -60.05 (11.27) & -16.53 (4.20) \\
\hline 
\end{tabular} 
\end{table} \end{center}

The $\elpd$'s in column five are uniformly larger than those in column four, so the poset model is clearly favoured over the Plackett-Luce mixture model for all these short periods. This may be because the poset is able to restrict the set of probable lists more tightly around the lists in the data, while the fitted Plackett-Luce mixture is relatively more dispersed over lists. The reported uncertainty estimate is just the sample variation of the mean of the $m_{t_1,t_2}$-samples used to form the estimate. It is likely to be an underestimate because the same lists appear in test and training positions in different terms in the sum, so its terms are correlated \citep{Bengio2004,Sivula2022}.

\newpage\section{Checks on synthetic data}

\subsection{Checks on synthetic data: replicate main analysis}
\label{app:synth-poHB22aRS4}

We simulated synthetic data with the same structure as the real data and checked we could recover true parameter values. We took true values of $\rho,\theta,\beta,U,\tau,p$ sampled from the unconstrained posterior with $K=11$ discussed in \sec~\ref{sec:poHB1aRS6} and simulated data $y'_i\sim p_{(U)}(\cdot|h^{(\tau_i)}[o_i],p),\ i\in \I$. 

MCMC output targeting the posterior for the synthetic data $y'$ is shown in \fig~\ref{fig:synth-mcmc-poHB22aRS4}. The true $\rho,\theta$ and $p$ values fall within the support of their respective posteriors in \fig~\ref{fig:synth-rho-th-p-poHB22aRS4}. Credible intervals for $\bar Z$ in \fig~\ref{fig:synth-Zhat-poHB22aRS4} and $\beta$ in \fig~\ref{fig:synth-beta-poHB22aRS4} cover the truth well.  

In order to check reconstruction of relations in the fitted social hierarchy we report in \fig~\ref{fig:false-pos-neg-poHB22aRS4} the proportion of true relations which we would reject if we reject when the posterior odds for the relation are smaller than one third (the threshold for ``positive'' evidence in \cite{kass95}). This is equivalent to rejecting the relation $i\succ_{h^{(t)}}\!j$ for $i,j\in\M_t,\ t\in[B,E]$ if the estimated posterior probability $\hat p_{i,j}^{(t)}$ for the relation is less than one quarter.
This runs at around 10\% incorrect rejections at this level of evidence. We considered comparing the consensus partial order with the true partial order in each year and counting true positives and false negatives at a threshold $\xi=0.5$. However, one has then to choose the threshold and, as in our discussion in \app~\ref{app:K-equals-18-posterior}, a relation with  support just below a threshold of $\xi=0.5$ is still the favored relation of the three possibilities (correct relation, relation but wrong order and no relation) so that would be misleading.

\begin{figure}
    \centering
    \hspace*{-0.75in}\includegraphics[width=6.5in]{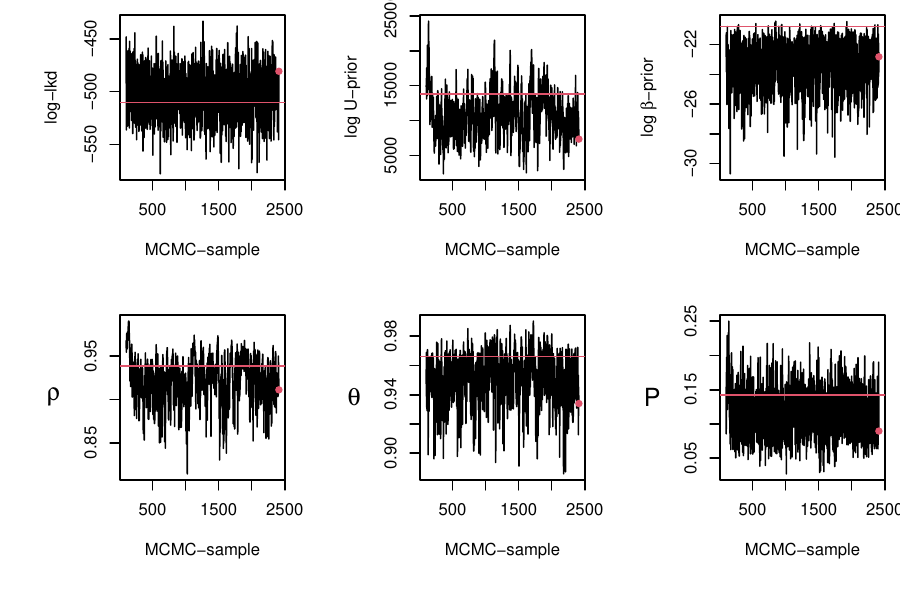}
    \caption{Synthetic Data. Selected MCMC traces from the unconstrained seniority-effect analysis in \app~\ref{app:synth-poHB22aRS4}. Horizontal line shows values at truth.}
    \label{fig:synth-mcmc-poHB22aRS4}
\end{figure}

\begin{figure}
    \centering
    \hspace*{0.0in}\includegraphics[width=5.5in,trim=0.2in 0.3in 0.05in 0in,clip]{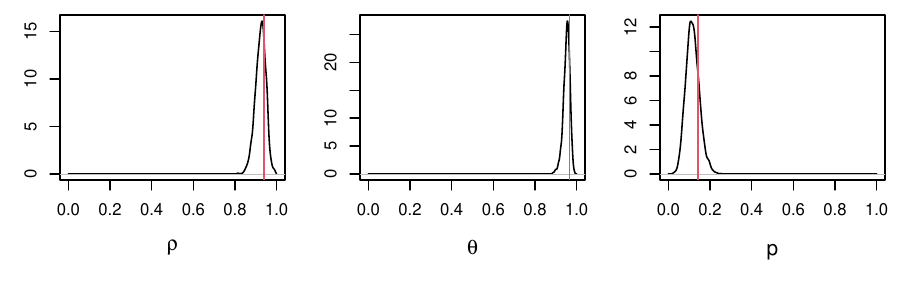}
    \caption{Synthetic Data. Posterior parameter densities for $\rho,\theta$ and $p$ from the unconstrained seniority effects analysis in \app~\ref{app:synth-poHB22aRS4}. Vertical line shows true value.}
    \label{fig:synth-rho-th-p-poHB22aRS4}
\end{figure}

\begin{figure}
    \centering
    \hspace*{-0.5in}\includegraphics[width=5in,trim=1in 3in 1.5in 3.65in, clip]{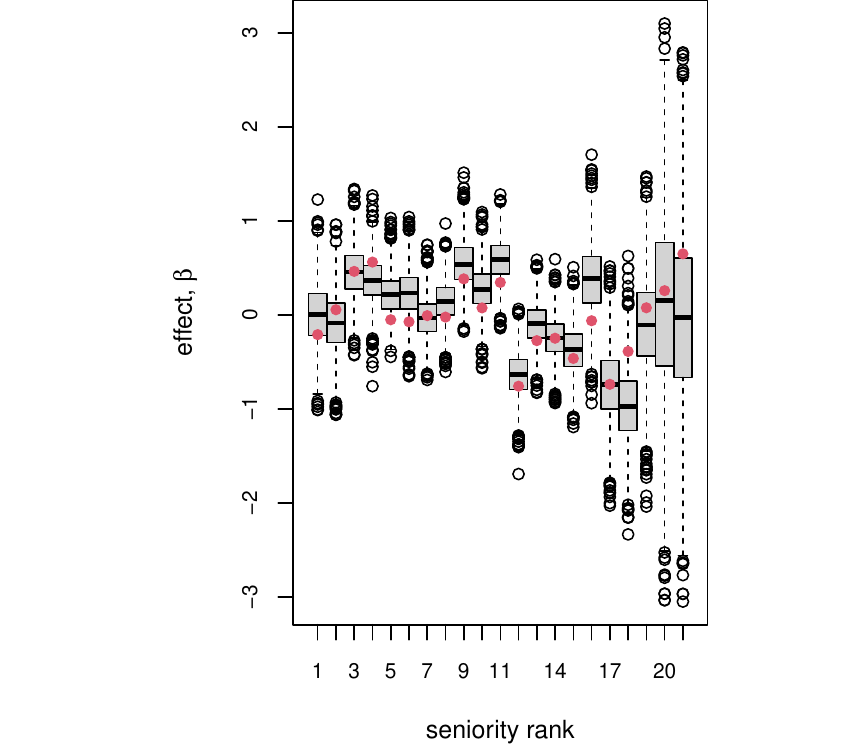}
    \caption{Synthetic Data. Marginal posterior distributions of seniority effect parameters from the unconstrained seniority effects analysis in \app~\ref{app:synth-poHB22aRS4}. Solid dots over boxes show true values.}
    \label{fig:synth-beta-poHB22aRS4}
\end{figure}

\begin{figure}
    \centering
    \hspace*{-0.8in}\includegraphics[width=6.75in,trim=0in 0.25in 0.25in 0.25in, clip]{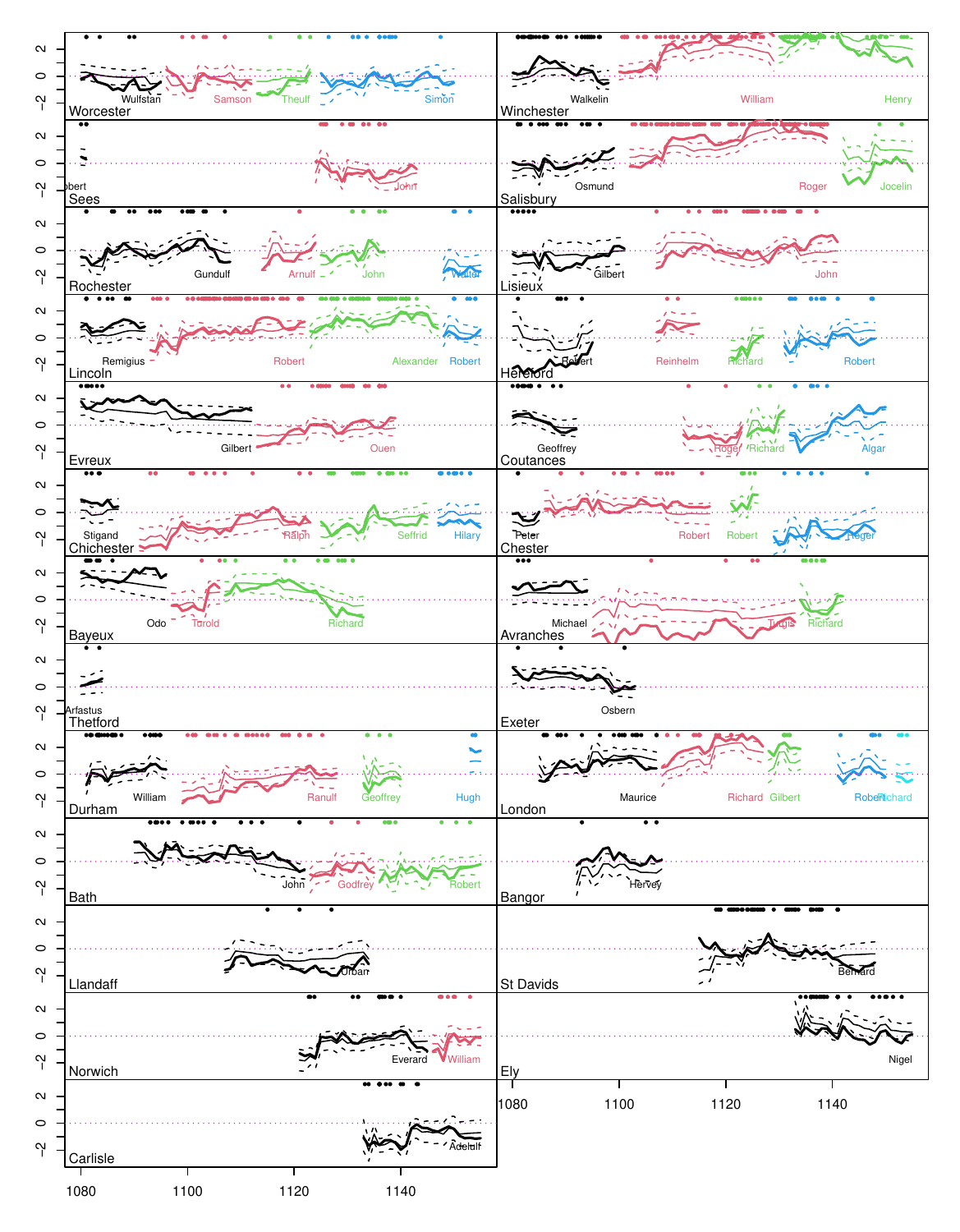}
    \caption{Synthetic Data. Bishop-status curves $\hat Z^{(t)}_j$ ($y$-axis, thin solid curves) plotted for each bishop $j\in \M$ as a function of time ($x$-axis, the year) from $b_j$ to $e_j$ with standard errors at one-sigma. The dots along the top of each graph show the times of the lists in which the color-matched bishop below appeared. This is output from the constrained seniority effects analysis in \sec~\ref{app:synth-poHB22aRS4}. The thick solid lines show the true values.}
    \label{fig:synth-Zhat-poHB22aRS4}
\end{figure}

\begin{figure}
    \centering
    \hspace*{-0.5in}\includegraphics[width=4.5in,trim=0in 0.2in 0.4in 0.7in, clip
    ]{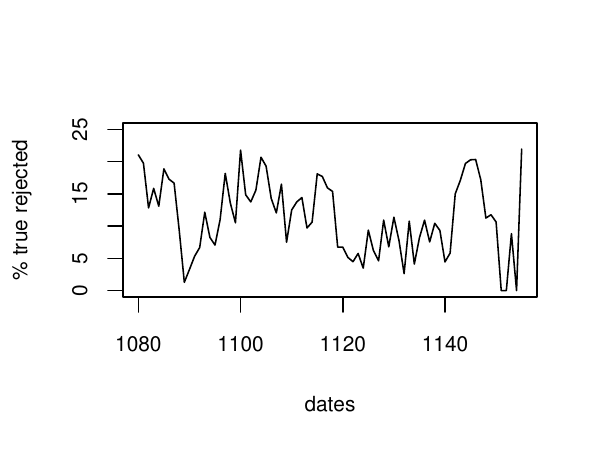}
    \caption{Synthetic data analysis discussed in \app~\ref{app:synth-poHB22aRS4}. The fraction of true relations rejected in the unconstrained seniority effects analysis of a synthetic data set simulated with $K=22$ as a function of year. }
    \label{fig:false-pos-neg-poHB22aRS4}
\end{figure}

\subsection{Checks on synthetic data: when the true orders are total orders}
\label{app:synth-total-orders}
In some feedback from historians a concern was expressed that the relatively shallow posets we see reflect differences in how often bishops witness and that the ``true'' orders were deeper. We tested this by simulating synthetic data in which the true posets all had depth close to or equal the maximum $m_t$ in each year $t\in [B,E]$, but using the same list memberships $o_i,\ i\in \I$ as the real data. For the purpose of this analysis we slightly reduced the data set, dropping some dioceses
(Sherborne, Exeter,  Bath,  Thetford, Wells, Le Mans, Bangor, St Asaph, Tusculum, 
the Orkneys and  Llandaff) that appear very infrequently in witness lists. This left $M=59$ bishops and the same number $N=371$ of lists.
  
We set the ``true'' noise at level $p^*=0.1$ typical of that seen in our fit. We used the same time intervals for the lists as in the real data so the synthetic data has the same structure as the real data, but different list orders. We simulated a ``true sequence'' $h^*={h^*}^{(t)},\ t\in [B,E]$ of high-depth orders (by setting $\rho^*=0.99999$ close to one) and true observation times $\tau^*$ from the prior, then simulated synthetic-data lists according to the observation model $y'_i\sim p_{(U)}(\cdot|{h^*}^{(\tau^*_i)}[o_i],p^*),\ i\in \I$. We then simulated the posterior $\pi(\rho,\theta,U,\beta,\tau,p|y')$ and checked consensus partial orders and checked we recovered the true depth. The posterior depth distribution is shown in \fig~\ref{fig:prior-post-depth-boxes-poHB22aRS4-poHB34aRS2} at right. The posterior shows good overlap with the truth despite it being on the boundary of the space.

\begin{figure}
    \centering
    \begin{tabular}{cc}
    \hspace*{-0.7in}\includegraphics[width=3.3in, trim=0.0in 0.2in 0.4in 0.7in, clip]{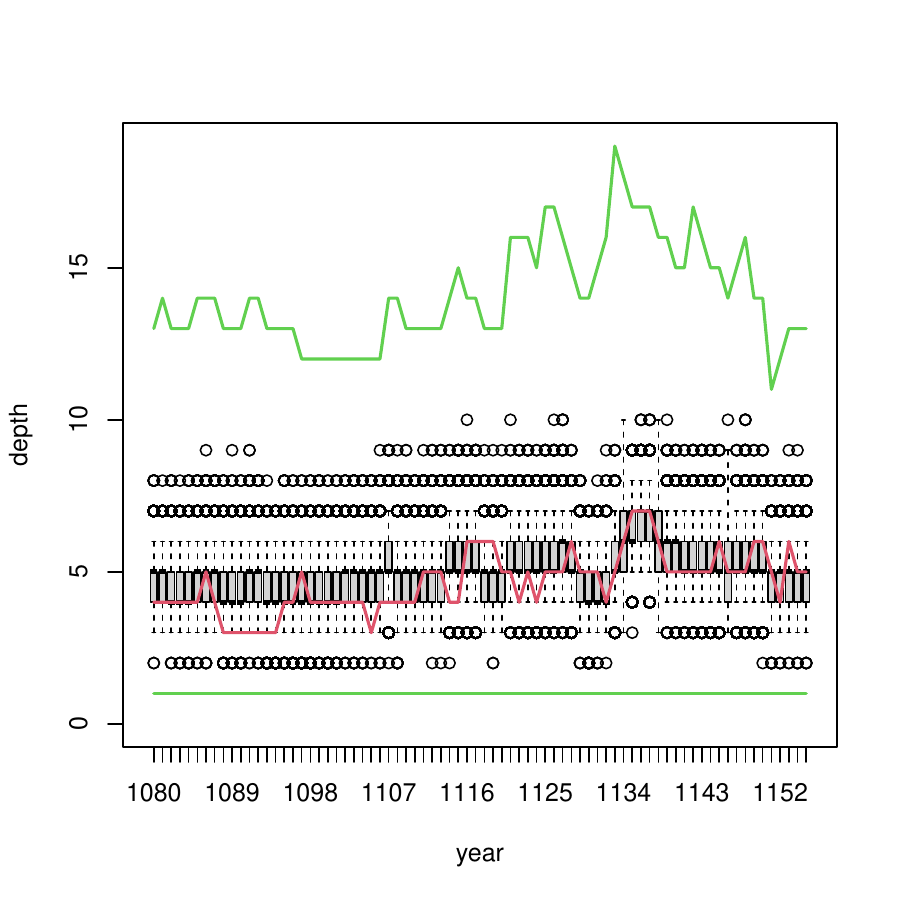}
    &
    \includegraphics[width=3.3in, trim=0.4in 0.2in 0.0in 0.7in, clip]{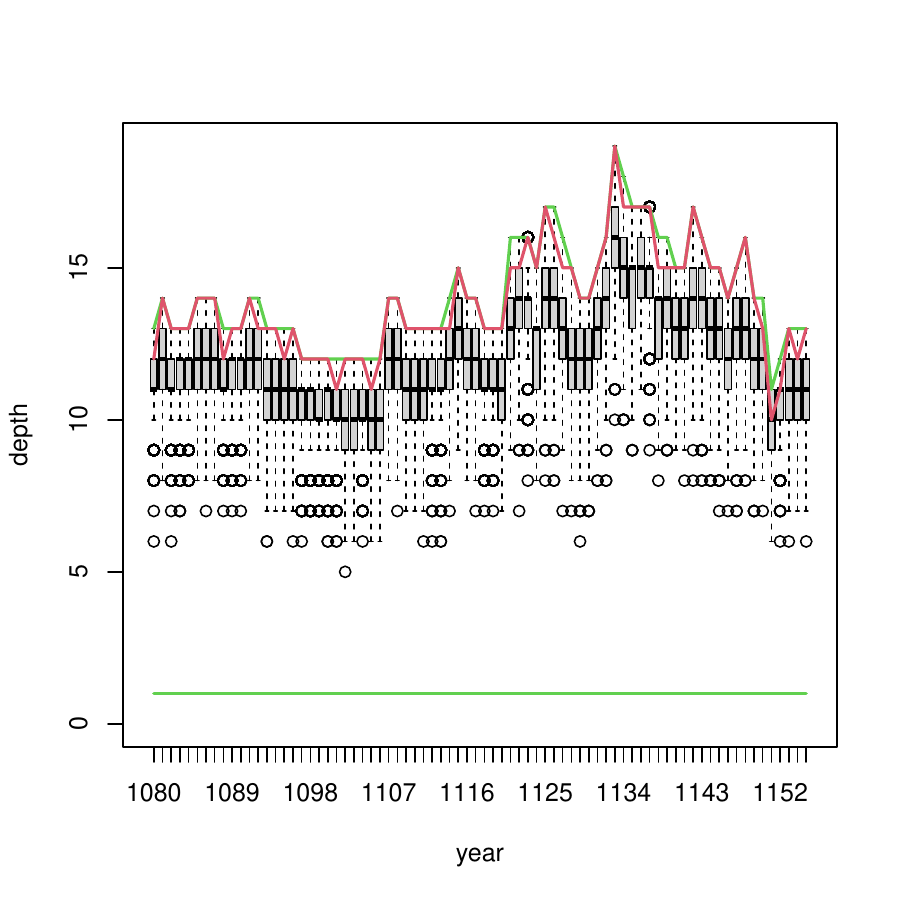}
    \end{tabular}
    \caption{Synthetic data taking the true order sequence $h^*$ to be a sample from the \sec~\ref{sec:poHB2aRS6}-posterior (left, with the slightly reduced data set used in this appendix) or a sequence of deep orders sampled from the prior (right). Fitting with prior settings as \sec~\ref{sec:poHB2aRS6} shows good overlap between truth and posterior. This suggests the prior is not biasing. The green lines above and below show the minimum and maximum possible poset depth in each year and the red line shows the depth of the true poset used to generate the synthetic data.}
    \label{fig:prior-post-depth-boxes-poHB22aRS4-poHB34aRS2}
\end{figure}

In summary, we reconstructed the true total orders well with good depth. If the true orders were total orders we would see this in our analysis. We do not, so we conclude that the uneven list memberships, dates and list-lengths are not obscuring some hidden deeper order.

It is of interest to consider what happens when the true poset has depth 1, so the poset is the empty order and there are no constraints at all.
 The data $y_i$ would then all be uniform random permutations of the list membership vectors $o_i,\ i\in \I$ and there is then a non-identifiability. The model can fit uniform random permutations by making $p\simeq 1$, so taking high noise, and any poset we like (as we get back uniform random permutations as $p\to 1$ irrespective of the underlying partial order) or by making the poset empty (so depth 1) and letting $p$ be any value. In some experiments which we do not report on synthetic data with depth 1 we found that the model favoured the $p\to 1$ fit (as the prior weights against depth 1).
Again, the fit for our data favours small noise, $p\simeq 0.1$ so by the same reasoning as for total orders, this non-identifiability is not an issue for the data we actually have. This has been called the ``principle of sufficient reason'': the prior is well behaved over the range of parameter values actually supported by the data.


\newpage\section{Fixed-time model}\label{app:fixed-time-posterior}

A simpler version of the model summarised in \sec~\ref{sec:time-series-posterior} outlined in \citet{nicholls11}, without any time-series structure, may be of interest so we give the simplest model of this sort. Covariates depending on time (and their effects $\beta$) have been dropped. They might be replaced with covariates which vary across lists. However we do not pursue this.

In this fixed-time setting we have $N$ lists with labels $i\in \I$, $M$ bishops with labels $j\in \M$,
$Z$ is an $M\times K$ feature matrix (not a time series of matrices) with rows $Z_j\in \R^K$ giving the $K$ status features for each bishop, and $H=h(Z)$ is a poset on $[M]$. All the lists have the same equal time, so $\tau$ is dropped. The generative model is
\begin{align*}
    \rho&\sim \mbox{Beta}(\gamma)\\
    Z_j&\sim N(0,\Sigma^{(\rho)}),\ \mbox{iid for}\ j\in\M\\
    p&\sim \mbox{Beta}(1,\delta),
\intertext{then $H=h(Z)$ using \eq~\ref{eq:z-to-h-mapping} and the data are realised}
    y_i&\sim p(\cdot|H[o_i],p),\quad \mbox{independently for $i\in \I$}.
\end{align*}
The joint posterior distribution is 
\begin{align}\label{eq:posterior-simple}
\pi(\rho,Z,p|y) \propto \pi(\rho,p)  \prod_{j\in\M} N(Z_j; 0,\Sigma^{(\rho)}) \prod_{i=1}^{N} p(y_i|H[o_i],p) \,. 
\end{align} 


\end{document}